\documentclass{lmcs} 
\pdfoutput=1

\usepackage{lastpage}
\lmcsdoi{18}{2}{19}
\lmcsheading{}{\pageref{LastPage}}{}{}%
{Apr.~12,~2021}{Jun.~15,~2022}{}

\keywords{modal logic, behavioural distance, coalgebra, bisimulation, lax extension}

\newcommand{\myparagraph}[1]{\bigskip\noindent\textbf{#1}\;\,}

\newcommand{\into}{\hookrightarrow}

\newcommand{\myby}[1]{(\text{#1})}
\newcommand{\mybys}[2]{(\text{#1})\;\text{and}\;(\text{#2})}
\newcommand{\Rat}{\mathbb{Q}\cap[0,1]}

\newcommand{\Sem}[1]{\llbracket{#1}\rrbracket}

\newcommand{\Pow}{\mathcal{P}}
\newcommand{\Pfin}{\Pow_\omega}

\newcommand{\id}{\mathsf{id}}

\newcommand{\Set}{\mathsf{Set}}
\newcommand{\HMet}{\mathsf{HMet}}

\newcommand{\op}{^{\mathrm{op}}}

\newcommand{\unithemi}{d_{\ominus}}

\newcommand{\nonexp}[2]{#1 \rightarrow_1 #2}
\newcommand{\supnorm}[1]{\lVert #1 \rVert_\infty}

\newcommand{\dfin}{\dfun_\omega}
\newcommand{\dfun}{\mathcal{D}}
\newcommand{\ffun}{\mathcal{F}}
\newcommand{\qfun}{\mathcal{Q}}
\newcommand{\mfun}{\mathcal{M}}
\newcommand{\nfun}{\mathcal{N}}

\newcommand{\frel}[2]{#1 \mathbin{\ooalign{$\rightarrow$\cr$\hspace{0.12ex}+$\cr}} #2}
\newcommand{\rev}[1]{#1 ^ \circ}
\newcommand{\gr}[1]{\text{Gr}_{#1}}

\newcommand{\modf}[1]{\mathcal{L}_{#1}}
\newcommand{\cpl}[2]{\mathsf{Cpl}(#1,#2)}
\newcommand{\cpltri}[3]{\mathsf{Cpl}(#1,#2,#3)}
\newcommand{\cplind}[3]{\mathsf{Cpl}_{#1}(#2,#3)}
\newcommand{\constfun}[2]{{#1}_{#2}}
\newcommand{\zerofun}[1]{\constfun{0}{#1}}
\newcommand{\Conv}{\mathcal{C}}

\newcommand{\swap}{\mathsf{swap}}

\newcommand{\expect}[1]{\mathbb{E}_{#1}}

\DeclareMathOperator{\convx}{conv}
\newcommand{\last}{\mathsf{last}}
\newcommand{\app}{\mathsf{app}}

\newcommand{\strength}[2]{\mathsf{str}_{{#1},{#2}}}

\newcommand{\itemref}[1]{(\ref{#1})}
\newcommand{\elem}{{\in}}

\usepackage[author=anonymous,nomargin,marginclue,footnote,draft]{fixme}
\FXRegisterAuthor{ls}{als}{LS}
\FXRegisterAuthor{pw}{apw}{PW}
\usepackage{amsmath}
\usepackage{amsthm}
\usepackage{MnSymbol}
\usepackage{stmaryrd}
\usepackage{bussproofs}
\usepackage[all]{xy}
\usepackage{mathtools}
\usepackage{xparse}
\usepackage{tikz}
\usepackage{hyperref}
\usepackage[utf8]{inputenc}

\theoremstyle{definition}\newtheorem{defn}[thm]{Definition}
\theoremstyle{definition}\newtheorem{expl}[thm]{Example}


\begin{document}

\pdfstringdefDisableCommands{%
  \def\\{}%
}
\title[Characteristic Logics for Behavioural Hemimetrics]
{Characteristic Logics for Behavioural Hemimetrics\texorpdfstring{\\}{ }via Fuzzy Lax Extensions}

\titlecomment{Work forms part of the DFG-funded project
  \emph{Probabilistic description logics as a fragment of
    probabilistic first-order logic} (SCHR 1118/6-2)}

\author[P.~Wild]{Paul Wild\lmcsorcid{0000-0001-9796-9675}}
\author[L.~Schr\"oder]{Lutz Schr\"oder\lmcsorcid{0000-0002-3146-5906}}
\address{Friedrich-Alexander-Universit\"at Erlangen-N\"urnberg}
\email{paul.wild@fau.de, lutz.schroeder@fau.de}

\begin{abstract}
  In systems involving quantitative data, such as probabilistic,
  fuzzy, or metric systems, behavioural distances provide a more
  fine-grained comparison of states than two-valued notions of
  behavioural equivalence or behaviour inclusion. Like in the
  two-valued case, the wide variation found in system types creates a
  need for generic methods that apply to many system types at
  once. Approaches of this kind are emerging within the paradigm of
  universal coalgebra, based either on lifting pseudometrics along set
  functors or on lifting general real-valued (\emph{fuzzy}) relations
  along functors by means of \emph{fuzzy lax extensions}. An immediate
  benefit of the latter is that they allow bounding behavioural
  distance by means of fuzzy (bi-)simulations that need not themselves
  be hemi- or pseudometrics; this is analogous to classical
  simulations and bisimulations, which need not be preorders or
  equivalence relations, respectively. The known generic pseudometric
  liftings, specifically the generic Kantorovich and Wasserstein
  liftings, both can be extended to yield fuzzy lax extensions, using
  the fact that both are effectively given by a choice of quantitative
  modalities. Our central result then shows that in fact all fuzzy lax
  extensions are Kantorovich extensions for a suitable set of
  quantitative modalities, the so-called \emph{Moss modalities}. For
  \emph{nonexpansive} fuzzy lax extensions, this allows for the
  extraction of quantitative modal logics that characterize
  behavioural distance, i.e.\ satisfy a quantitative version of the
  Hennessy-Milner theorem; equivalently, we obtain expressiveness of a
  quantitative version of Moss' coalgebraic logic. All our results
  explicitly hold also for asymmetric distances (\emph{hemimetrics}),
  i.e.~notions of quantitative simulation.
\end{abstract}

\maketitle

\section{Introduction}
Branching-time equivalences on reactive systems are typically governed
by notions of \emph{bisimilarity}~\cite{Park81,Milner89}. For systems
involving quantitative data, such as transition probabilities, fuzzy
truth values, or labellings in metric spaces, it is often appropriate
to use more fine-grained, \emph{quantitative} measures of behavioural
similarity, arriving at notions of \emph{behavioural
  distance}. Distance-based approaches in particular avoid the problem
that small quantitative deviations in behaviour will typically render
two given systems inequivalent under two-valued notions of
equivalence, losing information about their similarity. We note in
passing that behavioural distances are typically \emph{pseudo}metrics,
i.e.\ distinct states can have distance~$0$ if their behaviours are
exactly equivalent.

Behavioural distances serve evident purposes in system verification,
allowing as they do for a reasonable notion of a specification being
satisfied up to an acceptable margin of deviation
(e.g.~\cite{Gavazzo18}). Applications have also been proposed in
differential privacy \cite{ChatzikokolakisEA14} and conformance
testing of hybrid systems~\cite{Khakpour2015NotionsOC}. Like their
two-valued counterparts, behavioural distances have been introduced
for quite a range of system types, such as various forms of
probabilistic labelled transition systems or labelled Markov processes
\cite{GiacaloneEA90,bw:behavioural-pseudometric,d:labelled-markov-processes,
  dgjp:metrics-labelled-markov}; systems combining nondeterministic
and probabilistic branching variously known as nondeterministic
probabilistic transition systems~\cite{cgt:logical-bisim-metrics},
probabilistic automata~\cite{DengEA06}, and Markov decision
processes~\cite{FernsEA04}; weighted automata~\cite{BalleEA17}; fuzzy
transition systems~\cite{CaoEA13} and fuzzy Kripke
models~\cite{Fan15}; and various forms of metric transition systems
\cite{afs:linear-branching-metrics,FahrenbergEA11,FahrenbergLegay14},
which are non-deterministic transition systems with additional
quantitative information, e.g.\ a metric on the labels and/or the
states. Besides symmetric notions of behavioural distance, there are
asymmetric variants, which correspond to quantitative notions of
simulation, e.g.~for rational-~\cite{CernyEA12},
real-~\cite{ThraneEA10}, and lattice-weighted transition
systems~\cite{PanEA15}.

This range of variation creates a need for unifying concepts and
methods. The present work contributes to developing such a unified
view within the framework of universal coalgebra, which is based on
abstracting a wide range of system types (including all the mentioned
ones) as set functors.  Specifically, we work with a generic notion of
\emph{quantitative simulation} via the key notion of
\emph{nonexpansive (fuzzy) lax extension} of a functor. Fuzzy and
quantale-valued generalizations of lax extensions have been studied in
the past~\cite{Gavazzo18,HofmannEA14}; we identify a new criterion for
such lax extensions to be \emph{nonexpansive} (equivalently
\emph{strong} in the sense of Gavazzo~\cite{Gavazzo18}) that allows us
to relate lax extensions to fuzzy logics featuring nonexpansive
modalities via a Hennessy-Milner thoerem. Given a fuzzy lax extension,
behavioural distance is defined as the greatest quantitative
simulation; in general, behavioural distance is a \emph{hemimetric},
i.e.\ obeys the usual axioms of a pseudometric except symmetry, or
equivalently a generalized metric space in the sense of
Lawvere~\cite{Lawvere73}.

For instance, on weighted transition systems with labels in a finite
metric space $(M,d_M)$ Larsen et al.~\cite{LarsenEA11} consider a
\emph{simulation distance} defined as the least fixed point of the
equation
\begin{equation*}
  d(s,t) = \adjustlimits\sup_{s\xrightarrow{m}s'}\inf_{t\xrightarrow{n}t'}
    d_M(m,n) + \lambda d(s',t'),
\end{equation*}
where $0\le\lambda < 1$ is a discount factor. We shall later see that
this simulation distance arises via a nonexpansive fuzzy lax extension
and thus forms an instance of this framework.

For lax extensions
obeying a suitable symmetry axiom, quantitative simulations are in
fact quantitative bisimulations in the sense that their relational
converse is also a simulation, and the induced behavioural distance is
symmetric, i.e.~forms a pseudometric. Existing coalgebraic approaches
to behavioural pseudometrics rely on pseudometric liftings of
functors~\cite{bbkk:coalgebraic-behavioral-metrics}, and in particular
lift only pseudometrics; contrastingly, fuzzy lax extensions act on
unrestricted quantitative relations.  Hence, quantitative
(bi-)simulations need not themselves be hemi- or pseudometrics, in
analogy to classical bisimulations not needing to be equivalence
relations, and thus may serve as small certificates for low
behavioural distance. We show that two known systematic constructions
of functor liftings from chosen sets of modalities, the generic
Wasserstein and Kantorovich liftings, both extend to yield
nonexpansive fuzzy lax extensions (it is essentially known that the
Wasserstein lifting yields a fuzzy lax extension~\cite{Hofmann07}). As
our main result, we then establish that \emph{every} fuzzy lax
extension of a finitary functor is a Kantorovich extension induced by
a suitable set of modalities, the so-called Moss modalities. Notably,
the definition of the Moss modalities involves application of the
given lax extension to the quantitative elementhood relation, and
hence centrally relies on lifting quantitative relations that fail to
be hemi- or pseudometrics. 

This result may be seen as a quantitative version of previous results
asserting the existence of separating sets of two-valued modalities
for finitary functors~\cite{Schroder08,KurzLeal09,MartiVenema15},
which allow for generic Hennessy-Milner-type theorems stating that
states in finitely branching systems (coalgebras) are behaviourally
equivalent iff they satisfy the same modal
formulae~\cite{Pattinson04,Schroder08}. Indeed, for nonexpansive lax
extensions our main result similarly allows \emph{extracting
characteristic quantitative modal logics} from given behavioural hemi-
or pseudometrics, where a logic is \emph{characteristic} or
\emph{expressive} if the induced logical distance of states coincides
with behavioural distance. This result may equivalently be phrased as
expressiveness of a quantitative version of Moss' coalgebraic
logic~\cite{Moss99}, which provides a coalgebraic generalization of
the classical relational $\nabla$-modality (which e.g.\ underlies the
$a\to\Psi$ notation used in Walukiewicz's $\mu$-calculus completeness
proof~\cite{Walukiewicz95}). We relax the standard requirement of
finite branching, i.e.\ use of finitary functors, to an
approximability condition called \emph{finitary separability}, and
hence in particular cover countable probabilistic branching. Moreover,
we emphasize that we obtain characteristic logics also for asymmetric
distances, i.e.~notions of quantitative simulation.

\paragraph*{Organization} We recall basic concepts on hemi- and
pseudometrics, coalgebraic bisimilarity, and coalgebraic logic in
Section~\ref{sec:prelim}. The central notion of (nonexpansive) fuzzy
lax extension is introduced in Section~\ref{sec:fuzzy-rel}, and the
arising principle of quantitative (bi-)simulation in
Section~\ref{sec:bisim}. The generic Kantorovich and Wasserstein
liftings are discussed in Sections~\ref{sec:kantorovich} and
\ref{sec:wasserstein}, respectively. Our central result showing that
every lax extension is a Kantorovich lifting is established in
Section~\ref{sec:lax-kantorovich}. In Section~\ref{sec:cml}, we show
how our results amount to extracting characteristic modal logics from
given nonexpansive lax extensions.

\paragraph*{Related Work} Probabilistic quantitative characteristic
modal logics go back to Desharnais et
al.~\cite{dgjp:metrics-labelled-markov}; they relate to fragments of
quantitative
$\mu$-calculi~\cite{HuthKwiatkowska97,ms:lukasiewicz-mu,mm:prob-calculus-expectations}.
A further well-known class of quantitative modal logics are fuzzy
modal and description logics
(e.g.~\cite{Morgan79,Fitting91,Straccia98,LukasiewiczStraccia08}). Van
Breugel and Worrell~\cite{bw:behavioural-pseudometric} prove a
Hennessy-Milner theorem for quantitative probabilistic modal logic.
Quantitative Hennessy-Milner-type
theorems 
have since been established for fuzzy modal logic with G\"odel
semantics~\cite{Fan15}, for systems combining probability and
non-determinism~\cite{DuEA16}, and for Heyting-valued modal
logics~\cite{EleftheriouEA12} as introduced by
Fitting~\cite{Fitting91}. König and
Mika-Michalski~\cite{km:metric-bisimulation-games} provide a
quantitative Hennessy-Milner theorem in coalgebraic generality for the
case where behavioural distance is induced by the pseudometric
Kantorovich lifting defined by the same set of modalities as the
logic, a result that we complement by showing that in fact all fuzzy
lax extensions are Kantorovich. The assumptions of König and
Mika-Michalski's theorem require that behavioural distance be
approximable in~$\omega$ steps. We give a sufficient criterion for
this property: The predicate liftings need to be nonexpansive, and
the given lax extension needs to be finitarily separable (as mentioned
above). Again, we remove any assumption of symmetry, obtaining an
expressiveness criterion for characteristic logics of quantitative
simulations; in this sense, our work relates also to (coalgebraic and
specific) results on characteristic logics for two-valued notions of
similarity, e.g.~\cite{Glabbeek90,Baltag00,Cirstea06,KapulkinEA12,FordEA21}.

Fuzzy lax extensions are a quantitative version of lax
extensions~\cite{MartiVenema15,ThijsThesis,Levy11,BackhouseEA91}, which in turn
belong to an extended strand of research on relation
liftings~\cite{HughesJacobs04,ThijsThesis,Levy11}.
They appear to go back to work on monoidal
topology~\cite{HofmannEA14}, and have been used in work on applicative
bisimulation~\cite{Gavazzo18}; as indicated above,
Hofmann~\cite{Hofmann07} effectively already introduces the generic
Wasserstein lax extension (without using the term but proving the
relevant properties, except nonexpansiveness). Our notion of
\emph{nonexpansive} lax extension, which is central to the connection
with characteristic logics, appears to be new, but as indicated above
it can be seen to relate to a condition involving the strength of the
underlying functor as considered by Gavazzo~\cite{Gavazzo18}. Our
method of extracting quantitative modalities from fuzzy lax extensions
generalizes the construction of two-valued Moss liftings for
(two-valued) lax extensions~\cite{KurzLeal09,MartiVenema15}.

This paper is an extended and revised version of a previous
conference publication~\cite{ws:fuzzy-lax}. Besides containing
additional discussion and full proofs, the present version generalizes
the overall technical treatment including the main results to the
asymmetric setting, thus covering not only quantitative notions of
bisimulation but also quantitative notions of simulation.

\section{Preliminaries}\label{sec:prelim}

\noindent We recall basic notions on metrics, pseudometrics (where
distinct points may have distance~$0$), and hemimetrics (where
additionally distance is not required to be symmetric). Moreover, we
give a brief introduction to universal coalgebra~\cite{Rutten00} and
the generic treatment of two-valued bisimilarity. Basic knowledge of
category theory (e.g.~\cite{AdamekHerrlich90}) will be helpful.

\paragraph*{Hemimetric Spaces} For the present purposes, we are
interested only in bounded distance functions, and then normalize
distances to lie in the unit interval. Thus, a \emph{($1$-bounded)
  hemimetric} on a set~$X$ is a function
\begin{math}
  d\colon X\times X\to[0,1]
\end{math}
satisfying $d(x,x) =0$ (reflexivity), and $d(x,z)\le d(x,y)+d(y,z)$
(triangle inequality) for $x,y,z\in X$. If additionally
$d(x,y) =d(y,x)$ for all $x,y\in X$ (symmetry), then~$d$ is a
\emph{pseudometric}. If moreover for all $x,y\in X$, $d(x,y)=0$
implies $x=y$, then~$d$ is a \emph{metric}. The pair $(X,d)$ is a
\emph{hemimetric space}, or respectively a
\emph{\mbox{(pseudo-)}metric space} if $d$ is a
\mbox{(hemi-/pseudo-)}metric. We write $\ominus$ for truncated subtraction
on the unit interval, i.e.\ $x\ominus y=\max(x-y,0)$ for
$x,y\in[0,1]$.  Then $\unithemi(x,y)=x\ominus y$ defines a hemimetric
$\unithemi$ on $[0,1]$; moreover, $[0,1]$ is a metric space under
Euclidean distance $d_E(x,y)=|x-y|$. The \emph{supremum distance} of
functions $f,g\colon X\to[0,1]$ is
$\supnorm{f-g} = \sup_{x\in X} |f(x)-g(x)|$. A map $f\colon X\to Y$ of
hemimetric spaces $(X,d_1)$, $(Y,d_2)$, is \emph{nonexpansive}
(notation: $f\colon\nonexp{(X,d_1)}{(Y,d_2)}$) if
\begin{math}
  d_2(f(x),f(y))\le d_1(x,y)
\end{math}
for all $x,y\in X$.

\myparagraph{Universal Coalgebra} is a uniform framework for a broad
range of state-based system types. It is based on encapsulating the
transition type of a system as an \mbox{(endo-)}functor, for the
present purposes on the category of sets: A \emph{functor}~$T$ assigns
to each set~$X$ a set~$TX$, and to each map $f\colon X\to Y$ a map
$Tf\colon TX\to TY$, preserving identities and composition. We may
think of~$TX$ as a parametrized datatype; e.g.\ the \emph{(covariant)
  powerset functor}~$\Pow$ assigns to each set~$X$ its powerset
$\Pow X$, and to $f\colon X\to Y$ the direct image map
$\Pow f\colon \Pow X\to \Pow Y, A\mapsto f[A]$; and the
\emph{distribution functor} $\dfun$ maps each set~$X$ to the set of
discrete probability distributions on~$X$. Recall that a discrete
probability distribution on~$X$ is given by a \emph{probability mass
  function} $\mu\colon X\to[0,1]$ such that $\sum_{x\in X}\mu(x)=1$
(implying that the \emph{support} $\{x\in X\mid \mu(x)>0\}$ of~$\mu$
is at most countable); we abuse~$\mu$ to denote also the induced
probability measure, writing $\mu(A)=\sum_{x\in A}\mu(x)$ for
$A\subseteq X$. Moreover,~$\dfun$ maps $f\colon X\to Y$ to
$\dfun f\colon \dfun X\to\dfun Y$, $\mu\mapsto \mu f^{-1}$ where the
\emph{image measure} $\mu f^{-1}$ is given by
$\mu f^{-1}(B)=\mu(f^{-1}[B])$ for $B\subseteq Y$. We will introduce
further examples later.

Systems of a transition type~$T$ are then cast as
\emph{$T$-coalgebras} $(A,\alpha)$, consisting of a set~$A$ of
\emph{states} and a \emph{transition function} $\alpha\colon A\to TA$,
thought of as assigning to each state a structured collection of
successors. E.g.\ a $\Pow$-coalgebra $\alpha\colon A\to\Pow A$ assigns
to each state~$a$ a set $\alpha(a)$ of successors, so is just a
(non-deterministic) transition system. Similarly, a $\dfun$-coalgebra
assigns to each state a distribution over successor states, and thus
is a probabilistic transition system or a Markov chain. A
\emph{morphism} $f\colon (A,\alpha)\to(B,\beta)$ of $T$-coalgebras
$(A,\alpha)$ and $(B,\beta)$ is a map $f\colon A\to B$ such that
$\beta\circ f=Tf\circ\alpha$, where $\circ$ denotes the usual
(applicative) composition of functions; e.g.\ morphisms of
$\Pow$-coalgebras are functional bisimulations, also known as
\emph{$p$-morphisms} or \emph{bounded morphisms}. 

A functor~$T$ is \emph{finitary} if for each set~$X$ and each
$t\in TX$, there exists a finite subset $Y\subseteq X$ such that
$t=Ti(t')$ for some $t'\in TY$, where $i\colon Y\to X$ is the
inclusion map (this is equivalent to the more categorically phrased
condition that~$T$ preserves directed colimits). Intuitively,~$T$ is
finitary if every element of $TX$ mentions only finitely many elements
of~$X$. Every set functor~$T$ has a \emph{finitary part} $T_\omega$
given by
\begin{equation*}
T_\omega X=\bigcup\{Ti[TY]\mid Y\subseteq X\text{ finite},i\colon
Y\to X\text{ inclusion}\}.
\end{equation*}
E.g.~$\Pfin$, the \emph{finite powerset functor}, maps a set to the
set of its finite subsets, and~$\dfin$, the \emph{finite distribution
  functor}, maps a set~$X$ to the set of discrete probability
distributions on~$X$ with finite support. Coalgebras for finitary
functors generalize finitely branching systems, and hence feature in
Hennessy-Milner type theorems, which typically fail under infinite
branching.

\paragraph*{Bisimilarity and Lax Extensions} Coalgebras come with
a canonical notion of observable equivalence: States $a\in A$,
$b\in B$ in $T$-coalgebras $(A,\alpha)$, $(B,\beta)$ are
\emph{behaviourally equivalent} if 
there exist a coalgebra $(C,\gamma)$ and morphisms
$f\colon (A,\alpha)\to (C,\gamma)$, $g\colon (B,\beta)\to(C,\gamma)$
such that $f(a)=g(b)$. Behavioural equivalence can often be
characterized in terms of bisimulation relations, which may provide
small witnesses for behavioural equivalence of states and in
particular need not form equivalence
relations.
The most general known way of treating bisimulation coalgebraically is
via \emph{lax extensions}~$L$ of the functor~$T$, which map relations
$R\subseteq X\times Y$ to
\begin{math}
  LR\subseteq TX\times TY
\end{math}
subject to a number of axioms (monotonicity, preservation of
relational converse, lax preservation of composition, extension of
function graphs)~\cite{MartiVenema15};~$L$ \emph{preserves diagonals}
if $L\Delta_X=\Delta_{TX}$ for each set~$X$, where for any set~$Y$,
$\Delta_Y$ denotes the \emph{diagonal} $\{(y,y)\mid y\in Y\}$. The
\emph{Barr extension}~$\overline T$ of~$T$~\cite{Barr70,Trnkova80} is
defined by
\begin{equation*}
  \overline T R = \{ (T\pi_1 (r), T\pi_2 (r)) \mid r\in TR \}
\end{equation*}
for $R\subseteq X\times Y$, where $\pi_1\colon R \to X$ and
$\pi_2\colon R \to Y$ are the projections;~$\overline T$ preserves
diagonals, and is a lax extension if~$T$ preserves weak
pullbacks. E.g., the Barr extension~$\overline\Pow$ of the powerset
functor~$\Pow$ is the well-known Egli-Milner extension, given by
\begin{equation*}
  (V,W)\in\overline\Pow(R)\iff (\forall x\in V.\,\exists y\in W.\,(x,y)\in R)\land
  (\forall y\in W.\,\exists x\in V.\,(x,y)\in R)
\end{equation*}
for $R\subseteq X\times Y$, $V\in\Pow(X)$, $W\in\Pow(Y)$.  An
\emph{$L$-bisimulation} between $T$-coalgebras $(A,\alpha)$,
$(B,\beta)$ is a relation $R\subseteq A\times B$ such that
\begin{math}
  (\alpha(a),\beta(b))\in LR
\end{math}
for all $(a,b)\in R$; e.g.\ for $L=\overline\Pow$, $L$-bisimulations
are precisely Park/Milner bisimulations on transition systems.  If a
lax extension~$L$ preserves diagonals, then two states are
behaviourally equivalent iff they are related by some
$L$-bisimulation~\cite{MartiVenema15}.

\myparagraph{Coalgebraic Logic} serves as a generic framework for the
specification of state-based systems~\cite{CirsteaEA11b}. For our
present purposes, we are primarily interested in its simplest
incarnation as a modal next-step logic, dubbed \emph{coalgebraic modal
  logic}, and its role as a characteristic logic for behavioural
equivalence in generalization of Hennessy-Milner
logic~\cite{HennessyMilner85}. We briefly recall the syntax and
semantics of coalgebraic modal logic, as well as basic results.  The
framework is based on interpreting custom \emph{modalities} of given
finite arity over coalgebras for a functor~$T$ as $n$-ary
\emph{predicate liftings}, which are families of maps
\begin{equation*}
  \lambda_X\colon(2^X)^n\to 2^{TX}
\end{equation*}
(subject to a naturality condition) where $2=\{\bot,\top\}$ and for
any set~$Y$, $2^Y$ is the set of $2$-valued predicates on~$Y$. We do
not distinguish notationally between modalities and the associated
predicate liftings. Satisfaction of a formula of the
form~$\lambda(\phi_1,\dots,\phi_n)$ (in some ambient logic) in a state
$a\in A$ of a $T$-coalgebra $(A,\alpha)$ is then defined inductively
by
\begin{equation}\label{eq:sem-modal}
  a\models\lambda(\phi_1,\dots,\phi_n)\text{ iff }\alpha(a)\in\lambda_A(\Sem{\phi_1},\dots,\Sem{\phi_n})
\end{equation}
where for any formula~$\psi$,
$\Sem{\psi}=\{c\in A\mid c\models\psi\}$. E.g.\ the standard diamond
modality $\Diamond$ is interpreted over the powerset functor~$\Pow$ by
the predicate lifting
\begin{math}
  \Diamond_X(Y)=\{Z\in\Pow(X)\mid \exists x\in Z.\, Y(x)=\top\},
\end{math}
which according to~\eqref{eq:sem-modal} induces precisely the usual
semantics of~$\Diamond$ over transition systems
($\Pow$-coalgebras). The standard Hennessy-Milner theorem is
generalized coalgebraically~\cite{Pattinson04,Schroder08} as saying
that two states in $T$-coalgebras are behaviourally equivalent iff
they satisfy the same $\Lambda$-formulae, provided that~$T$ is
finitary (which corresponds to the usual assumption of finite
branching) and~$\Lambda$ is \emph{separating}, i.e.\ for any set~$X$,
every $t\in TX$ is uniquely determined (within~$TX$) by the set
\begin{equation*}
  \{(\lambda,Y_1,\dots,Y_n)\mid \lambda\in\Lambda\text{ $n$-ary}, Y_1,\dots,Y_n\in 2^X,
  t\in\lambda(Y_1,\dots,Y_n)\}.
\end{equation*}
For finitary~$T$, a separating set of modalities always
exists~\cite{Schroder08}.

\section{Fuzzy Relations and Lax Extensions}\label{sec:fuzzy-rel}
\noindent We next introduce the central notion of the paper,
concerning extensions of \emph{fuzzy} (or \emph{real-valued})
relations along a \emph{set functor~$T$, which we fix for the
  remainder of the paper}. We begin by fixing basic concepts and
notation on fuzzy relations. Hemimetrics can be viewed as particular
fuzzy relations, forming a quantitative analogue of preorders;
correspondingly, pseudometrics may be seen as a quantitative analogue
of equivalence relations.

\begin{defn}
  Let $A$ and $B$ be sets. A \emph{fuzzy relation} between $A$ and $B$
  is a map $R\colon A\times B\to[0,1]$, also written
  $R\colon\frel{A}{B}$. We say that~$R$ is \emph{crisp} if
  $R(a,b) \in \{0,1\}$ for all $a\in A,b\in B$ (and generally apply
  the term crisp to concepts that live in the standard two-valued
  setting). The \emph{converse} relation $\rev{R}\colon\frel{B}{A}$ is
  given by $\rev{R}(b,a)=R(a,b)$. For $R,S\colon\frel{A}{B}$, we write
  $R\le S$ if $R(a,b)\le S(a,b)$ for all $a\in A,b\in B$.
\end{defn}

\begin{conv}\label{conv:zero}
  Crisp relations are just ordinary relations.  However, since we are
  working in a metric setting, it will be more natural to use
  the convention that elements $a\in A,b\in B$ are related by a crisp
  relation $R$ if $R(a,b) = 0$, in which case we write $aRb$.
\end{conv}
\begin{conv}[Composition]\label{conv:comp}
  We write composition of fuzzy relations diagrammatically, using~`;'.
  Explicitly, the composite $R_1;R_2\,\colon\frel{A}{C}$ of
  $R_1\colon\frel{A}{B}$ and $R_2\colon\frel{B}{C}$ is defined by
  \begin{equation*}\textstyle
    (R_1;R_2)(a,c) = \inf_{b\in B}R_1(a,b)\oplus R_2(b,c),
  \end{equation*}
  where $\oplus$ denotes \L{}ukasiewicz disjunction:
  $x\oplus y = \min(x+y,1)$. Note that given our previous convention
  on crisp relations, the restriction of this composition operator to
  crisp relations is precisely the standard relational composition. We
  reserve the applicative composition operator~$\circ$ for composition
  of functions. In particular,~$R\colon\frel{A}{B}$ is viewed as a
  function $A\times B\to [0,1]$ whenever~$\circ$ is applied to~$R$.
  Throughout the paper, we use the fact that composition is monotone,
  that is, for $R_1\le R'_1$ and $R_2\le R'_2$ we have $R_1;R_2\le R'_1;R'_2$.
\end{conv}

\begin{defn}[Functions as relations]\label{def:gr}
  The \emph{$\epsilon$-graph} of a function $f\colon A\to B$ is the
  fuzzy relation $\gr{\epsilon,f}\colon\frel{A}{B}$ given by
  $\gr{\epsilon,f}(a,b) = \epsilon$ if $f(a) = b$, and
  $\gr{\epsilon,f}(a,b) = 1$ otherwise.  The $\epsilon$-graph of the
  identity function $\id_A$ is also called the
  \emph{$\epsilon$-diagonal} of~$A$, and denoted by
  $\Delta_{\epsilon,A}$. We refer to $\gr{0,f}$ simply as the
  \emph{graph} of~$f$, also denoted~$\gr{f}$, and to $\Delta_{0,A}$ as
  the \emph{diagonal} of~$A$, which we continue to denote
  as~$\Delta_A$.
\end{defn}

\noindent The following is now straightforward.
\begin{lem}\label{lem:comp-graph}\hfill
  \begin{enumerate}
    \item \label{item:comp-graph-diag} For every function $f\colon
      A\to B$, we have $\Delta_B \le \rev{\gr{f}};\gr{f}$ and
      $\gr{f};\rev{\gr{f}} \le \Delta_A$.
    \item \label{item:comp-graph-left-right} For every
      $R\colon\frel{A'}{B'}$, $f\colon A\to A'$ and $g\colon B\to B'$,
      we have $R \circ (f\times g) = \gr{f}; R; \rev{\gr{g}}$.
  \end{enumerate}
\end{lem}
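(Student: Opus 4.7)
The plan is to unfold all the definitions and conclude by a small case analysis exploiting the fact that an $\epsilon$-graph takes the sharply two-valued shape "$0$ on the graph, $1$ off it", while \L{}ukasiewicz disjunction absorbs any summand equal to~$1$. Throughout I keep in mind Convention~\ref{conv:zero}: $0$ means "related", $1$ means "unrelated".

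\textbf{Part (1).} For the first inequality, I expand
\begin{equation*}
  (\rev{\gr{f}};\gr{f})(b,b') = \inf_{a\in A}\,\gr{f}(a,b)\oplus \gr{f}(a,b').
\end{equation*}
If $b=b'$, the claim $\Delta_B(b,b')=0 \le \dots$ is automatic. If $b\ne b'$, then for every $a\in A$ at most one of $f(a)=b$, $f(a)=b'$ can hold, so at least one of $\gr{f}(a,b),\gr{f}(a,b')$ equals~$1$; hence the disjunction is $1$ and the infimum is~$1=\Delta_B(b,b')$. For the second inequality I compute $(\gr{f};\rev{\gr{f}})(a,a') = \inf_b \gr{f}(a,b)\oplus\gr{f}(a',b)$. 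When $a\ne a'$ there is nothing to show since $\Delta_A(a,a')=1$; when $a=a'$ the choice $b=f(a)$ makes both summands~$0$, so the infimum is $0=\Delta_A(a,a)$.

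\textbf{Part (2).} I expand
\begin{equation*}
  (\gr{f};R;\rev{\gr{g}})(a,b) = \inf_{a'\in A',\,b'\in B'}\,\gr{f}(a,a')\oplus R(a',b')\oplus \gr{g}(b,b').
\end{equation*}
Taking $a'=f(a)$ and $b'=g(b)$ makes the first and third summands vanish, yielding the upper bound $R(f(a),g(b))=(R\circ(f\times g))(a,b)$. Conversely, for any pair with $a'\ne f(a)$ or $b'\ne g(b)$ the corresponding graph value is~$1$, forcing the whole disjunction to~$1$; so only the choice $(a',b')=(f(a),g(b))$ contributes a value below~$1$, and equality follows.

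\textbf{Expected obstacles.} There is essentially no obstacle: the entire argument is a matter of unfolding Definition~\ref{def:gr} and Convention~\ref{conv:comp}, and exploiting the two-valued character of $\gr{f}$ together with the absorbing behaviour of~$1$ under~$\oplus$. The only place where one has to be mildly careful is not to confuse the "$0=$ related" convention with the opposite reading; once that is fixed, both parts are one-line verifications, as indicated by the statement's claim that the lemma is "straightforward".
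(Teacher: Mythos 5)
Your proof is correct: the paper leaves this lemma unproved as ``straightforward'', and your direct unfolding of Definition~\ref{def:gr} and Convention~\ref{conv:comp}, using that $\oplus$ absorbs the value $1$ and that the graph values vanish exactly on the graph, is precisely the intended verification. No gaps.
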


\noindent Using the notation assembled, we can rephrase the definition
of hemimetric and pseudometric as follows.
\begin{lem}\label{lem:hemimetric-rephrased} Let $d\colon\frel{X}{X}$ be a fuzzy relation.
  \begin{enumerate}
    \item $d$ is a hemimetric iff $d \le \Delta_X$ (reflexivity) and $d
      \le d;d$ (triangle inequality).
    \item $d$ is a pseudometric iff it is a hemimetric and
    additionally $\rev{d} = d$ (symmetry).
  \end{enumerate}
\end{lem}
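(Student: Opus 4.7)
The plan is to unfold the definitions in Convention~\ref{conv:comp} and Definition~\ref{def:gr} and check the equivalences pointwise.

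For part~(1), I first handle reflexivity. By Definition~\ref{def:gr}, $\Delta_X(x,y) = 0$ if $x=y$ and $\Delta_X(x,y) = 1$ otherwise. Since $d$ is $[0,1]$-valued, the inequality $d(x,y) \le \Delta_X(x,y)$ is automatic whenever $x\ne y$, so $d \le \Delta_X$ is equivalent to $d(x,x) \le 0$ for all $x$, i.e.\ to $d(x,x) = 0$, which is reflexivity.

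For triangle inequality, unfolding Convention~\ref{conv:comp} gives $(d;d)(x,z) = \inf_{y\in X} d(x,y)\oplus d(y,z)$, so $d \le d;d$ is equivalent to
\begin{equation*}
  d(x,z) \le d(x,y) \oplus d(y,z) \qquad \text{for all } x,y,z\in X.
\end{equation*}
I then show that, for $d$ taking values in $[0,1]$, this is equivalent to the ordinary triangle inequality $d(x,z) \le d(x,y)+d(y,z)$. Indeed, if $d(x,y)+d(y,z) \le 1$, then $d(x,y)\oplus d(y,z) = d(x,y)+d(y,z)$ and the two inequalities coincide; if $d(x,y)+d(y,z) > 1$, then $d(x,y)\oplus d(y,z) = 1 \ge d(x,z)$ holds automatically, and the ordinary triangle inequality holds automatically as well. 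Hence the two formulations agree.

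For part~(2), symmetry of $d$ in the usual sense is the condition $d(x,y) = d(y,x)$ for all $x,y\in X$. By definition of the converse, this is exactly $\rev{d}(x,y) = d(x,y)$ for all $x,y$, i.e.\ $\rev{d} = d$. Combining with part~(1) yields the claim. I don't expect any real obstacle here; the only subtle point is the reconciliation of \L{}ukasiewicz disjunction with ordinary addition in the triangle inequality, which is resolved by a short case distinction on whether the sum exceeds~$1$.
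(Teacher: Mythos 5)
Your proof is correct and follows exactly the route the paper intends: the lemma is stated without proof as an immediate rephrasing obtained by unfolding Convention~\ref{conv:comp} and Definition~\ref{def:gr}, and your pointwise verification, including the short case distinction reconciling \L{}ukasiewicz disjunction with ordinary addition when the sum exceeds~$1$, is precisely the omitted argument.
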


\noindent We now introduce our central notion of nonexpansive lax extension:
\begin{defn}[Fuzzy relation liftings and lax extensions]
  A \emph{(fuzzy) relation lifting}~$L$ of~$T$ maps each fuzzy
  relation $R\colon\frel{A}{B}$ to a fuzzy relation
  $LR\colon\frel{TA}{TB}$.
  \begin{enumerate}
  \item We say that $L$ \emph{preserves converse} if for all $R$ we
    have
      \begin{align*}
        \text{(L0)} &\quad L(\rev{R}) = \rev{(LR)}.
      \intertext{
        \item We say that~$L$ is a \emph{(fuzzy) lax extension} if it satisfies
      }
        \text{(L1)} &\quad R_1 \le R_2 \Rightarrow LR_1 \le LR_2 \\
        \text{(L2)} &\quad L(R;S) \le LR;LS \\
        \text{(L3)} &\quad L\gr{f} \le \gr{Tf} \text{ and } L(\rev{\gr{f}}) \le \rev{\gr{Tf}}
      \intertext{
        for all sets $A,B$, and $R,R_1,R_2\colon\frel{A}{B}$,
        $S\colon\frel{B}{C}$, $f\colon A\to B$.
    \item A fuzzy lax extension~$L$ is \emph{nonexpansive}, and then
      briefly called a \emph{nonexpansive lax extension}, if
      }
        \text{(L4)} & \quad L\Delta_{\epsilon,A} \le \Delta_{\epsilon,TA}
      \end{align*}
      for all sets~$A$ and $\epsilon>0$.
    \end{enumerate}
\end{defn}
\noindent Axioms (L0)--(L3) are straightforward quantitative
generalizations of the axiomatization of two-valued lax
extensions~\cite{MartiVenema15}; fuzzy lax extensions in this sense
have also been called
\emph{$[0,1]$-relators}~\cite{Gavazzo18,HofmannEA14} (in the more
general setting of quantale-valued relations). Compared
to~\cite{ws:fuzzy-lax}, we do not require fuzzy lax extensions to
satisfy Axiom~(L0) in general; examples of this will be shown in Example~\ref{expl:hausdorff}. This necessitates the addition of the
second clause in Axiom~(L3) (which of course is implied by the first
clause in presence of~(L0)). Axiom~(L4) has no two-valued analogue;
its role and the terminology are explained by
Lemma~\ref{lem:L-nonexpansive} below.

The axioms (L1)--(L3) imply the following basic
property~\cite[Corollary III.1.4.4]{HofmannEA14}:

\begin{lem}[Naturality]\label{lem:transform}
  Let $L$ be a fuzzy lax extension of~$T$, let $R\colon\frel{A'}{B'}$
  be a fuzzy relation, and let $f\colon A\to A', g\colon B\to
  B'$. Then
  \begin{equation*}
    L (R\circ (f\times g)) = LR \circ (Tf\times Tg).
  \end{equation*}
\end{lem}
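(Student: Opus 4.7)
The plan is to derive both inequalities from the lax preservation of composition~(L2) together with the graph axiom~(L3) and the two easy calculations in Lemma~\ref{lem:comp-graph}. The overall shape is to pass $R\circ(f\times g)$ through~$L$ by representing it as $\gr{f};R;\rev{\gr{g}}$.

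For the easy direction $L(R\circ(f\times g))\le LR\circ(Tf\times Tg)$, I would first use Lemma~\ref{lem:comp-graph}\itemref{item:comp-graph-left-right} to rewrite $R\circ(f\times g) = \gr{f};R;\rev{\gr{g}}$, then apply~(L2) twice to obtain $L(\gr{f};R;\rev{\gr{g}}) \le L\gr{f};LR;L\rev{\gr{g}}$. The two clauses of~(L3) bound the outer factors by $\gr{Tf}$ and $\rev{\gr{Tg}}$, respectively, and another application of Lemma~\ref{lem:comp-graph}\itemref{item:comp-graph-left-right}, read right-to-left, turns $\gr{Tf};LR;\rev{\gr{Tg}}$ into $LR\circ(Tf\times Tg)$.

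For the reverse direction, my plan is a sandwich argument. A direct case analysis shows that $R \le \rev{\gr{f}};(R\circ(f\times g));\gr{g}$: at points $(a',b')$ lying in the image of $f\times g$ the right-hand side is exactly $R(a',b')$, and at all other points it equals~$1$. Applying~$L$ and using~(L1)--(L3) in the same pattern as above yields $LR \le \rev{\gr{Tf}};L(R\circ(f\times g));\gr{Tg}$. Pre-composing with $\gr{Tf}$ on the left and $\rev{\gr{Tg}}$ on the right then turns the left-hand side into $LR\circ(Tf\times Tg)$, again by Lemma~\ref{lem:comp-graph}\itemref{item:comp-graph-left-right}, while the right-hand side collapses to $L(R\circ(f\times g))$ via Lemma~\ref{lem:comp-graph}\itemref{item:comp-graph-diag}, which gives $\gr{Tf};\rev{\gr{Tf}}\le\Delta_{TA}$ and $\gr{Tg};\rev{\gr{Tg}}\le\Delta_{TB}$, together with the routine identity $\Delta_{TA};S;\Delta_{TB}=S$ (a direct consequence of the definition of composition and the fact that $S\le 1$).

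The main obstacle is precisely this reverse inequality: since lax extensions preserve composition only up to~$\le$, one cannot simply invert the calculation used for the forward direction. The graph-sandwich trick closes the gap, and it is exactly here that the second clause of~(L3), $L(\rev{\gr{f}})\le\rev{\gr{Tf}}$, becomes essential --- precisely the clause that the present version of the paper no longer derives for free from~(L0).
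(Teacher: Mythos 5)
Your proposal is correct and follows essentially the same route as the paper: the forward inequality via $R\circ(f\times g)=\gr{f};R;\rev{\gr{g}}$ together with (L2) and both clauses of (L3), and the reverse inequality via the same graph-sandwich $R\le\rev{\gr{f}};(R\circ(f\times g));\gr{g}$ (which the paper writes equivalently as $R\le\rev{\gr{f}};\gr{f};R;\rev{\gr{g}};\gr{g}$ and justifies by Lemma~\ref{lem:comp-graph}\itemref{item:comp-graph-diag} rather than by your direct pointwise case analysis). The only difference is this cosmetic choice of justification for the sandwich inequality, so nothing of substance separates the two arguments.
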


\begin{proof}
  We need to show two inequalities. For `$\le$', we have
  \begin{align*}
    L(R\circ (f\times g)) 
    &= L (\gr{f};R;\rev{\gr{g}}) &&\myby{Lemma~\ref{lem:comp-graph}.\ref{item:comp-graph-left-right}}\\
    &\le L\gr{f}; LR; L(\rev{\gr{g}}) &&\myby{L2}\\
    &\le \gr{Tf}; LR; \rev{\gr{Tg}} &&\myby{L3}\\
    &= LR \circ (Tf\times Tg). &&\myby{Lemma~\ref{lem:comp-graph}.\ref{item:comp-graph-left-right}}
  \end{align*}
  For `$\ge$', we have
  \begin{align*}
    LR \circ (Tf\times Tg)
    &= \gr{Tf}; LR; \rev{\gr{Tg}} &&\myby{Lemma~\ref{lem:comp-graph}.\ref{item:comp-graph-left-right}}\\
    &= \gr{Tf}; L(\Delta_{A'};R;\Delta_{B'}); \rev{\gr{Tg}} &&\myby{$\Delta$ neutral for $;$}\\
    &\le \gr{Tf}; L(\rev{\gr{f}};\gr{f};R;\rev{\gr{g}};\gr{g}); \rev{\gr{Tg}} &&\mybys{Lemma~\ref{lem:comp-graph}.\ref{item:comp-graph-diag}}{L1}\\
    &\le \gr{Tf}; \rev{\gr{Tf}}; L(\gr{f};R;\rev{\gr{g}}); \gr{Tg}; \rev{\gr{Tg}} &&\mybys{L2}{L3}\\
    &\le L(\gr{f};R;\rev{\gr{g}}) &&\myby{Lemma~\ref{lem:comp-graph}.\ref{item:comp-graph-left-right}}\\
    &\le L(R\circ (f\times g)). &&\myby{Lemma~\ref{lem:comp-graph}.\ref{item:comp-graph-diag}}
    \tag*{\qedhere}
  \end{align*}
\end{proof}

\noindent Using Lemma~\ref{lem:transform}, we can prove the following
characterization of Axiom (L4), which is an important prerequisite for the
Hennessy-Milner theorem.

\begin{lem}\label{lem:L-nonexpansive}
  Let $L$ be a fuzzy lax extension of~$T$. Then the following are
  equivalent:
  \begin{enumerate}
  \item\label{item:L4} $L$ satisfies Axiom~(L4) (i.e.\ is nonexpansive).
  \item\label{item:L4-gr} For all functions $f\colon A\to B$ and all $\epsilon>0$,
    \begin{math}
      L\gr{\epsilon,f}\le\gr{\epsilon,Tf}.
    \end{math}
  \item\label{item:nonexp} For all sets $A,B$, the map $R\mapsto LR$
    is nonexpansive w.r.t.\ the~supremum~metric on~$\frel{A}{B}$.
  \end{enumerate}
\end{lem}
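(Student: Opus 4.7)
The plan is to establish the cycle \itemref{item:L4} $\Rightarrow$ \itemref{item:L4-gr} $\Rightarrow$ \itemref{item:L4} and, separately, the equivalence \itemref{item:L4} $\Leftrightarrow$ \itemref{item:nonexp}. The bridging observation throughout is that composition with an $\epsilon$-diagonal implements, modulo \L{}ukasiewicz truncation, a pointwise shift by $\epsilon$: unfolding the definitions will give $(\Delta_{\epsilon,A};S)(a,b)=\epsilon\oplus S(a,b)$ and symmetrically $(R;\Delta_{\epsilon,B})(a,b)=R(a,b)\oplus\epsilon$, because the non-diagonal summands of the infimum all evaluate to $1$. This arithmetic dictionary between Axiom~(L4) and the supremum metric, and keeping the truncation at $1$ honest, is what I expect to be the main bookkeeping obstacle.

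For \itemref{item:L4} $\Rightarrow$ \itemref{item:L4-gr}, I would exploit the factorization $\gr{\epsilon,f}=\Delta_{\epsilon,B}\circ(f\times\id_B)$, which is checked pointwise straight from Definition~\ref{def:gr}. Naturality (Lemma~\ref{lem:transform}) then delivers $L\gr{\epsilon,f}=L\Delta_{\epsilon,B}\circ(Tf\times\id_{TB})$, and (L4) immediately produces the required bound by $\gr{\epsilon,Tf}=\Delta_{\epsilon,TB}\circ(Tf\times\id_{TB})$. The converse \itemref{item:L4-gr} $\Rightarrow$ \itemref{item:L4} is obtained by specializing to $f=\id_A$, since $\gr{\epsilon,\id_A}=\Delta_{\epsilon,A}$ and $T\id_A=\id_{TA}$.

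For \itemref{item:L4} $\Rightarrow$ \itemref{item:nonexp}, set $\epsilon=\supnorm{R-S}$. I would first establish $R\le\Delta_{\epsilon,A};S$, which by the dictionary above reduces to the pointwise inequality $R(a,b)\le\epsilon\oplus S(a,b)$; this is immediate from $R(a,b)\le S(a,b)+\epsilon$ combined with $R(a,b)\le 1$. Applying $L$ and chaining (L1), (L2), and (L4) then yields $LR\le L\Delta_{\epsilon,A};LS\le\Delta_{\epsilon,TA};LS$, which unpacks to $LR\le LS+\epsilon$ pointwise. Symmetry in $R$ and $S$ then gives $\supnorm{LR-LS}\le\epsilon$, so $R\mapsto LR$ is nonexpansive.

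For \itemref{item:nonexp} $\Rightarrow$ \itemref{item:L4}, I would observe that $\supnorm{\Delta_A-\Delta_{\epsilon,A}}=\epsilon$, as these two fuzzy relations differ by exactly $\epsilon$ on the diagonal and coincide (both equal to $1$) off it. Nonexpansiveness then gives $L\Delta_{\epsilon,A}(t_1,t_2)\le L\Delta_A(t_1,t_2)+\epsilon$ pointwise. Axiom~(L3) applied to $\id_A$ yields $L\Delta_A\le\Delta_{TA}$, so the right-hand side vanishes on the diagonal of $TA$, producing $L\Delta_{\epsilon,A}(t,t)\le\epsilon=\Delta_{\epsilon,TA}(t,t)$; off the diagonal the bound $L\Delta_{\epsilon,A}(t_1,t_2)\le 1=\Delta_{\epsilon,TA}(t_1,t_2)$ is trivial, which completes the circle.
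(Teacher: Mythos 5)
Your proposal is correct and follows essentially the same route as the paper: the same factorization $\gr{\epsilon,f}=\Delta_{\epsilon,B}\circ(f\times\id_B)$ plus naturality for \itemref{item:L4}$\Leftrightarrow$\itemref{item:L4-gr}, the same (L1)/(L2)/(L4) chain through a composite with an $\epsilon$-diagonal for \itemref{item:L4}$\Rightarrow$\itemref{item:nonexp}, and the same $\supnorm{\Delta_{\epsilon,A}-\Delta_A}=\epsilon$ observation combined with (L3) for the converse. The only cosmetic differences are composing with the $\epsilon$-diagonal on the left rather than the right and spelling out the last step pointwise; to stay strictly within the hypotheses of (L4) one should take $\epsilon$ to be any positive upper bound on $\supnorm{R-S}$ rather than the supremum itself, but the case $R=S$ is trivial anyway.
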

\begin{proof}\hfill

  \begin{itemize}
    \item \emph{\itemref{item:L4}$\iff$\itemref{item:L4-gr}:}
      The implication`$\Leftarrow$' is trivial; we prove `$\Rightarrow$'. We have
      \begin{align*}
        L\gr{\epsilon,f}
          & = L(\Delta_{\epsilon,B}\circ(f\times\id_B)) &&\myby{Definition~\ref{def:gr}}\\
          & = L\Delta_{\epsilon,B}\circ(Tf\times\id_{TB}) && \myby{Lemma~\ref{lem:transform}}\\
          & \le \Delta_{\epsilon,TB}\circ(Tf\times\id_{TB})&& \myby{\ref{item:L4}}\\
          & = \gr{\epsilon,Tf}. &&\myby{Definition~\ref{def:gr}}
      \end{align*}
    \item \emph{\itemref{item:L4}$\implies$\itemref{item:nonexp}:} Let
      $R_1,R_2\colon\frel{A}{B}$ and $\epsilon>0$ such that
      $\supnorm{R_1-R_2}\le\epsilon$; we need to show that
      $\supnorm{LR_1-LR_2}\le\epsilon$. The assumption implies $R_1\le
      R_2;\Delta_{\epsilon,B}$, hence, using (L1), (L2), and \itemref{item:L4},
      \begin{equation*}
        LR_1\le L(R_2;\Delta_{\epsilon,B})
          \le LR_2;L\Delta_{\epsilon,B}
          \le LR_2;\Delta_{\epsilon,TB}.
      \end{equation*}
      Symmetrically, we show $LR_2\le LR_1;\Delta_{\epsilon,TB}$, so
      that $\supnorm{LR_1-LR_2}\le\epsilon$.
  \item\begin{samepage} \emph{\itemref{item:nonexp}$\implies$\itemref{item:L4}:} We
      have $\supnorm{\Delta_{\epsilon,A}-\Delta_A}=\epsilon$, and
      hence by assumption
      $\supnorm{L\Delta_{\epsilon,A}-L\Delta_A}\le\epsilon$. In
      particular,
      $L\Delta_{\epsilon,A}\le L\Delta_A;\Delta_{\epsilon,TA}$, so
      \begin{equation*}
        L\Delta_{\epsilon,A}\le
        L\Delta_A;\Delta_{\epsilon,TA}\le\Delta_{TA};\Delta_{\epsilon,TA}=\Delta_{\epsilon,TA}
      \end{equation*}
      using (L3).\qedhere
  \end{samepage}
  \end{itemize}
\end{proof}

\begin{rem}
  As stated in the introduction, nonexpansiveness of lax extensions
  relates to conditions on certain lax extensions for strong monads
  introduced by Gavazzo~\cite{Gavazzo18}, called \emph{L-continuous
  $V$-relators} (for $V$ a quantale).
  Specifically, as $T$ is a set functor, it has a \emph{tensorial
  strength} $\strength{A}{B}\colon A\times TB\to T(A\times B)$ given
  by $\strength{A}{B}(a,t) = T(b\mapsto(a,b))(t)$. Instantiating to
  the unit interval and using our notation, the axioms of an
  L-continuous $[0,1]$-relator $L$ require that strength is
  nonexpansive, i.e.~that for all sets $A,B,X$ and $Y$ and all fuzzy
  relations $R\colon\frel{A}{B}$ and $S\colon\frel{X}{Y}$ we have
  \begin{equation}\label{eqn:strength}
    L(R\oplus S)\circ (\strength{A}{X}\times\strength{B}{Y}) \le R\oplus LS,
  \end{equation}
  where $\oplus$ is taken pointwise.
  We say that $L$ is \emph{strong} if it satisfies \eqref{eqn:strength}
  and show that $L$ is strong iff it is nonexpansive:
  
  Let $a\in A$, $b\in B$ and let $p\colon X\to A\times X$ and $q\colon Y\to B\times Y$
  be the maps $x\mapsto(a,x)$ and $y\mapsto(b,y)$, respectively.
  Then, unfolding the definition of $\mathsf{str}$ and applying Lemma~\ref{lem:transform},
  we have, for $t_1\in TX$ and $t_2\in TY$,
  \begin{equation*}
    L(R\oplus S)(\strength{A}{X}(a,t_1), \strength{B}{Y}(b,t_2))
    = L(R\oplus S)(Tp(t_1), Tq(t_2))
    = L((R\oplus S)\circ(p\times q))(t_1,t_2).
  \end{equation*}
  Put $\epsilon := R(a,b)$. Then we have, for all $x\in X$ and $y\in Y$,
  \begin{equation*}
    ((R\oplus S)\circ(p\times q))(x,y)
    = R(a,b)\oplus S(x,y) = (\Delta_{\epsilon,X};S)(x,y),
  \end{equation*}
  so that $(R\oplus S)\circ(p\times q) = \Delta_{\epsilon,X};LS$.
  Similarly, for $a$ and $b$ fixed like this we have that
  $(R\oplus LS)((a,t_1),(b,t_2)) = (\Delta_{\epsilon,TX};LS)(t_1,t_2)$.
  Thus, \eqref{eqn:strength} is equivalent to the requirement that 
  \begin{equation}\label{eqn:strength2}
    L(\Delta_{\epsilon,X};S) \le \Delta_{\epsilon,TX};LS
  \end{equation}
  for all sets $X,Y$, all $S\colon\frel{X}{Y}$ and all $\epsilon\ge 0$.
  Finally, we show that \eqref{eqn:strength2} is equivalent to (L4):
  \begin{itemize}
    \item `$\Rightarrow$': We have
      \begin{math}
        L(\Delta_{\epsilon,X})
        = L(\Delta_{\epsilon,X};\Delta_X)
        \le \Delta_{\epsilon,TX};L\Delta_X
        \le \Delta_{\epsilon,TX};\Delta_{TX}
        = \Delta_{\epsilon,TX},
      \end{math}
      using \eqref{eqn:strength2} and (L3) in the inequalities
      and neutrality of $\Delta$ in the equalities.
    \item `$\Leftarrow$': By (L2) and (L4), we have
      \begin{math}
        L(\Delta_{\epsilon,X};S)
        \le L\Delta_{\epsilon,X};LS
        \le \Delta_{\epsilon,TX};LS.
      \end{math}
  \end{itemize}
\end{rem}

\noindent As indicated previously, many existing approaches to behavioural
metrics
(e.g.~\cite{bw:behavioural-pseudometric,bbkk:coalgebraic-behavioral-metrics})
are based on lifting functors to pseudometric spaces. In the present
framework, every lax extension induces a functor lifting to hemimetric
spaces; or to pseudometric spaces if the lax extension preserves converse:

\begin{lem}\label{lem:functor-lifting} Let $L$ be a fuzzy lax extension.
  \begin{enumerate}
    \item Let $d\colon\frel{X}{X}$ be a hemimetric. Then $Ld$ is a
      hemimetric on $TX$. If $d$ is a pseudometric and $L$ preserves
      converse, then $Ld$ is a pseudometric as well.
    \item For every nonexpansive map $f\colon (X,d_1)\to (Y,d_2)$ of
      hemimetric spaces, the map\\ $Tf\colon(TX,Ld_1)\to (TY,Ld_2)$ is
      nonexpansive.
  \end{enumerate}
\end{lem}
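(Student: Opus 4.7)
The plan is to leverage Lemma~\ref{lem:hemimetric-rephrased}, which recasts the hemi-/pseudometric axioms as inequalities between fuzzy relations ($d \le \Delta_X$, $d \le d;d$, and $\rev{d}=d$). In this form, each axiom matches up directly with one of the lax extension axioms (L1)--(L3) and Lemma~\ref{lem:transform}, so both items of the lemma reduce to short chains of inequalities.

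For item~(1), I would first verify reflexivity of $Ld$. Since $d$ is a hemimetric, $d\le\Delta_X=\gr{\id_X}$, so by monotonicity~(L1) we get $Ld\le L\gr{\id_X}$, and by~(L3) we have $L\gr{\id_X}\le\gr{T\id_X}=\gr{\id_{TX}}=\Delta_{TX}$; combining these yields $Ld\le\Delta_{TX}$. For the triangle inequality, apply~(L1) to $d\le d;d$ to get $Ld\le L(d;d)$, then use lax preservation of composition~(L2) to conclude $L(d;d)\le Ld;Ld$. Together these give the hemimetric property of $Ld$ on $TX$ via Lemma~\ref{lem:hemimetric-rephrased}. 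The pseudometric case is immediate: if $d=\rev{d}$ and $L$ preserves converse~(L0), then $Ld=L\rev{d}=\rev{Ld}$, adding symmetry to the already-established hemimetric properties.

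For item~(2), I would express nonexpansiveness of $f$ as the single inequality $d_2\circ(f\times f)\le d_1$, and likewise reduce what needs to be shown to $Ld_2\circ(Tf\times Tf)\le Ld_1$. Naturality (Lemma~\ref{lem:transform}) rewrites the left-hand side as $L(d_2\circ(f\times f))$, and monotonicity~(L1) applied to the nonexpansiveness inequality then gives $L(d_2\circ(f\times f))\le Ld_1$, completing the proof.

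No step of this plan is genuinely difficult; the main conceptual obstacle is simply identifying the right algebraic reformulation. Lemma~\ref{lem:hemimetric-rephrased} and Lemma~\ref{lem:transform} do essentially all the heavy lifting, so the proof becomes a routine bookkeeping exercise once one recognises that each metric axiom corresponds to exactly one lax extension axiom.
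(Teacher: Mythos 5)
Your proposal is correct and matches the paper's own proof essentially step for step: both items reduce to Lemma~\ref{lem:hemimetric-rephrased}, with reflexivity via $d\le\Delta_X=\gr{\id_X}$ and (L1),(L3), the triangle inequality via (L1),(L2), symmetry via (L0), and item~(2) via naturality (Lemma~\ref{lem:transform}) plus monotonicity. No differences worth noting.
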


\begin{proof}\hfill
  \begin{enumerate}
    \item Using Lemma~\ref{lem:hemimetric-rephrased} and the laws of lax
      extensions, we have $Ld \le L\Delta_X \le \Delta_{TX}$
      and $Ld \le L(d;d) \le Ld;Ld$, so $Ld$ is a hemimetric.
      If $L$ preserves converse and $d$ is a pseudometric, then $\rev{(Ld)}
      = L(\rev{d}) = Ld$, so that $Ld$ is a pseudometric.
    \item Let $f\colon(X,d_1)\to(Y,d_2)$ be a nonexpansive map, that is
      \begin{math}
        d_2\circ (f\times f) \le d_1.
      \end{math}
      Then $Tf$ is nonexpansive as well, by naturality
      (Lemma~\ref{lem:transform}) and monotonicity:
      \begin{equation*}
        Ld_2\circ(Tf\times Tf) = L(d_2\circ (f\times f)) \le Ld_1. \qedhere
      \end{equation*}
  \end{enumerate}
\end{proof}

\noindent As a consequence of Lemma~\ref{lem:functor-lifting}, every
fuzzy lax extension of $T\colon\Set\to\Set$ gives rise to a functor
\begin{math}
  \overline{T}\colon\HMet\to\HMet
\end{math}
on the category $\HMet$ of hemimetric spaces and nonexpansive maps
that \emph{lifts}~$T$ in the sense that $U\circ\overline T=T\circ U$,
where $U\colon \HMet\to\Set$ is the functor that forgets the
hemimetric. Similarly, every converse-preserving fuzzy lax extension
induces a lifting of~$T$ to the category of pseudometric spaces.

Much of the development will be based on finitary functors; for
instance, we need a finitary functor so we can give an explicit syntax
for the characterizing logic of a lax extension. We can capture a
broader class of functors, specifically those functors that are suitably
approximated by their finitary parts in the sense that the finitary part
forms a dense subset of the unrestricted functor.

\begin{defn}\label{defn:density}
  Let $(X,d)$ be a hemimetric space. A set $A\subseteq X$ is \emph{dense}
  if for all $x\in X$ and all $\epsilon > 0$ there exists some $a\in A$
  such that both $d(x,a)\le\epsilon$ and $d(a,x)\le\epsilon$.
\end{defn}

\noindent This notion of density for hemimetrics coincides with an
existing one for quantale-valued distances~\cite{FlaggKopperman97}. In
particular, it is essential to require both inequalities in
Definition~\ref{defn:density}, as otherwise certain pathological cases
of dense subsets may occur. For instance, if we left out the second
inequality from the above definition, then the singleton set $\{1\}$
would be a dense subset of the unit interval $[0,1]$ under the
hemimetric $d_\ominus$ (Section~\ref{sec:prelim}).

Equipped with this definition of density, we proceed to introduce the
following condition which allows for the treatment of lax extensions
of certain non-finitary functors.

\begin{defn}
  A fuzzy lax extension $L$ of the functor $T$ is \emph{finitarily
  separable} if for every set $X$, $T_\omega X$ is a dense subset
  of $TX$ w.r.t. the hemimetric $L\Delta_X$.
\end{defn}

\noindent Clearly, any lax extension of a finitary functor is
finitarily separable. The prototypical example of a finitarily
separable lax extension of a non-finitary functor is the Kantorovich
lifting of the discrete distribution functor $\dfun$
(Example~\ref{expl:kantorovich}.\ref{item:prob-kantorovich}); that is,
every discrete distribution can be approximated, under the usual
Kantorovich metric, by finitely supported distributions.

We conclude the section with a basic example of a nonexpansive lax
extension, deferring further examples to the sections on systematic
constructions of such extensions (Sections~\ref{sec:kantorovich}
and~\ref{sec:wasserstein}):
\begin{expl}[Hausdorff lifting]\label{expl:hausdorff}
  The \emph{Hausdorff lifting} is the relation lifting $H$ for the
  powerset functor $\Pow$, defined for fuzzy relations
  $R\colon\frel{A}{B}$ by
  \begin{align*}
    HR(U,V) = \max(\adjustlimits\sup_{a\in U}\inf_{b\in V} R(a,b),\;
                   \adjustlimits\sup_{b\in V}\inf_{a\in U} R(a,b))
  \end{align*}
  for $U\subseteq A, V\subseteq B$.  The Hausdorff lifting can be
  viewed as a quantitative analogue of the Egli-Milner extension
  (Section~\ref{sec:prelim}), where $\sup$ replaces universal
  quantification and $\inf$ replaces existential quantification. It is
  shown already in~\cite{HofmannEA14} that~$H$ is a fuzzy lax
  extension. Indeed, it is easy to see that~$H$ is also
  converse-preserving and nonexpansive. These properties will also
  follow from the results of Section~\ref{sec:wasserstein}, where we
  show that~$H$ is in fact an instance of the Wasserstein lifting.
  $H$ is not finitarily separable, because for every set $X$ we have
  $H\Delta_X = \Delta_{\Pow X}$.

  We may also consider asymmetric versions of the Hausdorff lifting
  by simply omitting one of the two terms in the definition, putting
  \begin{equation*}
    H^\leftarrow R(U,V) = \adjustlimits\sup_{a\in U}\inf_{b\in V} R(a,b)
    \qquad\text{and}\qquad
    H^\rightarrow R(U,V) = \adjustlimits\sup_{b\in V}\inf_{a\in U} R(a,b)
  \end{equation*}
  for $U\subseteq A, V\subseteq B$. Both $H^\leftarrow$ and $H^\rightarrow$
  are nonexpansive fuzzy lax extensions, but neither of them preserves converse.
\end{expl}

\section{Quantitative Simulations}\label{sec:bisim}

\noindent We next identify a notion of simulation based on a lax
extension~$L$ of the functor~$T$; similar concepts appear in work on
quantitative applicative bisimilarity~\cite{Gavazzo18}. We define
behavioural distance based on this notion, and show coincidence with
the distance defined via the pseudometric lifting induced by~$L$
according to Lemma~\ref{lem:functor-lifting}.
\begin{defn}\label{defn:l-dist}
  Let $L$ be a lax extension of $T$, and let
  $\alpha\colon A\to TA$ and $\beta\colon B\to TB$ be coalgebras.
  \begin{enumerate}
    \item A fuzzy relation $R\colon\frel{A}{B}$ is an
      \emph{$L$-simulation} if
        \begin{math}
        LR \circ (\alpha\times\beta) \le R.
      \end{math}
    \item $R$ is an \emph{$L$-bisimulation} if both $R$ and $\rev{R}$
      are $L$-simulations.
    \item We define \emph{$L$-behavioural distance}
      $d^L_{\alpha,\beta}\colon\frel{A}{B}$ to be the infimum of all $L$-simulations:
      \begin{equation*}
        d^L_{\alpha,\beta} = \inf \{ R\colon\frel{A}{B} \mid R \text{ is an $L$-simulation} \}.
      \end{equation*}
      If $\alpha=\beta$, we write $d^L_\alpha = d^L_{\alpha,\beta}$ instead.
      \label{item:l-dist}
  \end{enumerate}
\end{defn}

\begin{rem}
  Putting Definition~\ref{defn:l-dist} in other words, an $L$-simulation is
  precisely a prefix point for the map
  \begin{math}
    F(R) = LR \circ (\alpha\times\beta).
  \end{math}
  Note that $F$ is monotone by (L1).
  This means that, according to the Knaster-Tarski fixpoint theorem,
  $d^L_{\alpha,\beta}$ is itself a prefix point (i.e. an $L$-simulation),
  and also the least fixpoint of $F$, i.e.\
  \begin{math}
    d^L_{\alpha,\beta} = L d^L_{\alpha,\beta} \circ (\alpha\times\beta).
  \end{math}
  In particular, the infimum in Definition~\ref{defn:l-dist}.\ref{item:l-dist}
  is always a minimum.
\end{rem}

\begin{expl}\label{expl:metric-ts}
  The weighted transition systems discussed in the introduction can be
  modelled as coalgebras for the functor $\Pow_\omega(M\times-)$, and the
  simulation distance given there is then $L$-behavioural distance for
  the fuzzy lax extension $L$ defined for $R\colon\frel{A}{B}$ by
  \begin{equation*}
    LR(U,V) = \adjustlimits\sup_{(m,a)\in U}\inf_{(n,b)\in V} d_M(m,n) + \lambda R(a,b),
  \end{equation*}
  where $U\subseteq M\times A, V\subseteq M\times B$. To ensure that
  all values of $LR$ lie in the unit interval $[0,1]$, we require that
  $d_M(m,n) \le 1-\lambda$ for all $m,n\in M$. If $M$ is finite (as is
  the case in~\cite{LarsenEA11}) this can always be achieved by rescaling.
\end{expl}

\noindent We note the following facts about $L$-simulations:

\begin{lem}\label{lem:l-sim-properties} Let $L$ be a fuzzy lax
  extension, and let $\alpha\colon A\to TA$, $\beta\colon B\to TB$ and
  $\gamma\colon C\to TC$ be coalgebras. Then
  \begin{enumerate}
    \item $\Delta_A$ is an $L$-simulation.\label{item:delta-bisim}
    \item For any $L$-simulations $R\colon\frel{A}{B}$ and 
      $S\colon\frel{B}{C}$, $R;S$ is an $L$-simulation.\label{item:comp-bisim}
  \end{enumerate}
\end{lem}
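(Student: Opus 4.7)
The plan is to verify the two $L$-simulation conditions directly, leveraging axioms (L2) and (L3) together with Convention~\ref{conv:zero} (which makes $\Delta_X$ a $\{0,1\}$-valued relation with $\Delta_X(x,x')=0$ iff $x=x'$) and the infimum-based definition of relational composition (Convention~\ref{conv:comp}).

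For part~(1), I would instantiate (L3) at $f=\id_A$ to obtain $L\Delta_A = L\gr{\id_A} \le \gr{T\id_A} = \Delta_{TA}$, using functoriality $T\id_A=\id_{TA}$. Precomposing with $\alpha\times\alpha$ yields $L\Delta_A\circ(\alpha\times\alpha)\le\Delta_{TA}\circ(\alpha\times\alpha)$, and it then suffices to observe pointwise that $\Delta_{TA}(\alpha(a),\alpha(a'))\le\Delta_A(a,a')$: if $a=a'$ both sides are $0$ because $\alpha(a)=\alpha(a')$; otherwise the right-hand side equals $1$ and the bound is trivial.

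For part~(2), I would first apply (L2) to get $L(R;S)\le LR;LS$, hence
\begin{equation*}
  L(R;S)\circ(\alpha\times\gamma) \le (LR;LS)\circ(\alpha\times\gamma).
\end{equation*}
Unfolding the right-hand side at $(a,c)\in A\times C$ by Convention~\ref{conv:comp} gives $\inf_{t\in TB} LR(\alpha(a),t)\oplus LS(t,\gamma(c))$. The crucial step is to restrict the infimum to the (possibly smaller) subset $\{\beta(b)\mid b\in B\}\subseteq TB$, which enlarges the value to at most $\inf_{b\in B} LR(\alpha(a),\beta(b))\oplus LS(\beta(b),\gamma(c))$. The $L$-simulation hypotheses $LR\circ(\alpha\times\beta)\le R$ and $LS\circ(\beta\times\gamma)\le S$, combined with monotonicity of $\oplus$, bound each summand and hence the whole infimum by $\inf_{b\in B} R(a,b)\oplus S(b,c)=(R;S)(a,c)$, as required.

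Neither part poses any genuine obstacle; both reductions are essentially forced by the axioms. The only point that warrants attention is the infimum-restriction step in part~(2): going from an infimum over all of $TB$ to one over the image of $\beta$ is precisely the move that lets the two $L$-simulation hypotheses engage, and it works only because of the infimum-based (\L{}ukasiewicz-conjunctive) nature of fuzzy composition. No appeal to nonexpansiveness (L4) or converse-preservation (L0) is needed, so the lemma holds for arbitrary fuzzy lax extensions.
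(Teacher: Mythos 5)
Your proof is correct and follows essentially the same route as the paper's: part (1) is immediate from (L3) (the paper simply calls it ``clear''), and in part (2) your step of restricting the infimum over $TB$ to the image of $\beta$ is exactly the paper's point-free insertion of $\rev{\gr{\beta}};\gr{\beta}\ge\Delta_{TB}$ into $LR;LS$ via Lemma~\ref{lem:comp-graph}, after which both arguments conclude identically from the two simulation hypotheses.
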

\begin{proof}
  For Item~\itemref{item:delta-bisim}, we have
  \begin{equation*}
    L\Delta_A\circ(\alpha\times\alpha) \le
    \Delta_{TA}\circ(\alpha\times\alpha) =
    \rev{\gr{\alpha}};\Delta_{TA};\gr{\alpha} =
    \rev{\gr{\alpha}};\gr{\alpha} \le \Delta_A
  \end{equation*}
  by (L3) and both parts of Lemma~\ref{lem:comp-graph}.
  For Item~\itemref{item:comp-bisim}, we compute
  \begin{align*}
    L(R;S) \circ (\alpha\times\gamma) \hspace{-2cm} &\\
    &\le (LR;LS) \circ (\alpha\times\gamma) &&\myby{L2} \\
    &= \gr{\alpha}; LR; LS; \rev{\gr{\gamma}} &&\myby{Lemma~\ref{lem:comp-graph}.\ref{item:comp-graph-left-right}}\\
                                                        &\le \gr{\alpha}; LR; \rev{\gr{\beta}}; \gr{\beta}; LS; \rev{\gr{\gamma}} &&
    \myby{Lemma~\ref{lem:comp-graph}.\ref{item:comp-graph-diag}}\\
    &= LR \circ (\alpha\times\beta); LS \circ (\beta\times\gamma) &&\myby{Lemma~\ref{lem:comp-graph}.\ref{item:comp-graph-left-right}}\\
    &\le R;S. &&\myby{assumption}\hfill\qedhere
  \end{align*}
\end{proof}

\noindent For converse-preserving lax extensions, this notion of
simulation is actually one of bisimulation, more precisely:

\begin{lem}\label{lem:l-bisim}
  If $L$ preserves converse, then every $L$-simulation is an $L$-bisimulation.
\end{lem}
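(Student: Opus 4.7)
The plan is to unfold the definitions and use the converse-preservation property directly. Suppose $R\colon\frel{A}{B}$ is an $L$-simulation, so that $LR\circ(\alpha\times\beta)\le R$. I need to show that $\rev{R}\colon\frel{B}{A}$ is also an $L$-simulation, i.e.\ that $L\rev{R}\circ(\beta\times\alpha)\le\rev{R}$.

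The key observation I would use is that taking converse is monotone with respect to the pointwise order on fuzzy relations and interacts with ``relabelling along functions'' in a straightforward way: for any $S\colon\frel{A'}{B'}$ and $f\colon A\to A'$, $g\colon B\to B'$, one has $\rev{(S\circ(f\times g))}=\rev{S}\circ(g\times f)$, which is immediate from the definitions and from Lemma~\ref{lem:comp-graph}.\ref{item:comp-graph-left-right}. Applying this to $S=LR$, $f=\alpha$, $g=\beta$ together with the assumption that $L$ preserves converse (so $L\rev{R}=\rev{(LR)}$), I obtain
\begin{equation*}
  L\rev{R}\circ(\beta\times\alpha) \;=\; \rev{(LR)}\circ(\beta\times\alpha) \;=\; \rev{\bigl(LR\circ(\alpha\times\beta)\bigr)} \;\le\; \rev{R},
\end{equation*}
where the final inequality uses monotonicity of converse applied to the $L$-simulation inequality $LR\circ(\alpha\times\beta)\le R$.

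I do not expect any real obstacle: the argument is essentially a one-line manipulation once (L0) is available. The only thing worth flagging in the writeup is the elementary commutation $\rev{(S\circ(f\times g))}=\rev{S}\circ(g\times f)$, which is a routine but easily overlooked step.
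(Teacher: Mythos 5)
Your proposal is correct and follows essentially the same route as the paper's own proof: apply (L0) to rewrite $L\rev{R}$ as $\rev{(LR)}$, commute the converse past the precomposition with $(\beta\times\alpha)$, and then apply the converse to the simulation inequality. The commutation step $\rev{(S\circ(f\times g))}=\rev{S}\circ(g\times f)$ that you flag is indeed the only small bookkeeping point, and it holds exactly as you state.
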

\begin{proof}
  Let $\alpha\colon A\to TA$ and $\beta\colon B\to TB$ be coalgebras
  and let $R$ be an $L$-simulation. Then by (L0) we have
  \begin{equation*}
    L(\rev{R})\circ(\beta\times\alpha) = \rev{(LR)}\circ(\beta\times\alpha)
     = \rev{(LR\circ(\alpha\times\beta))} \le \rev{R}. \qedhere
  \end{equation*}
\end{proof}

\noindent As $L$-behavioural distance is the least
$L$-simulation, we have

\begin{lem}\label{lem:l-dist-hemimetric}
  For every coalgebra $\alpha\colon A\to TA$, $d^L_\alpha$ is a hemimetric.
  If $L$ preserves converse, then $d^L_\alpha$ is a pseudometric.
\end{lem}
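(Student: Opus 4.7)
The plan is to verify the hemimetric axioms in their relational form from Lemma~\ref{lem:hemimetric-rephrased}, namely reflexivity $d^L_\alpha \le \Delta_A$ and triangle inequality $d^L_\alpha \le d^L_\alpha; d^L_\alpha$, using only that $d^L_\alpha$ is the infimum of all $L$-simulations on $\alpha$ and hence is characterized by a minimality property.

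First I would recall the observation from the remark following Definition~\ref{defn:l-dist} that $d^L_\alpha$ is itself an $L$-simulation: the map $R\mapsto LR\circ(\alpha\times\alpha)$ is monotone by (L1), so by Knaster-Tarski it has a least prefix point, which is exactly $d^L_\alpha$. Reflexivity is then immediate from Lemma~\ref{lem:l-sim-properties}.\ref{item:delta-bisim}: since $\Delta_A$ is an $L$-simulation and $d^L_\alpha$ is the infimum of all $L$-simulations, $d^L_\alpha\le\Delta_A$. For the triangle inequality, since $d^L_\alpha$ itself is an $L$-simulation, Lemma~\ref{lem:l-sim-properties}.\ref{item:comp-bisim} shows that $d^L_\alpha; d^L_\alpha$ is also an $L$-simulation (on $\alpha$ to itself), so again by minimality $d^L_\alpha \le d^L_\alpha; d^L_\alpha$. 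Applying Lemma~\ref{lem:hemimetric-rephrased}.(1) then yields that $d^L_\alpha$ is a hemimetric.

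For the pseudometric claim, assume $L$ preserves converse. By Lemma~\ref{lem:l-bisim} every $L$-simulation is already an $L$-bisimulation, so in particular $\rev{d^L_\alpha}$ is an $L$-simulation on $\alpha$. By minimality of $d^L_\alpha$ we obtain $d^L_\alpha \le \rev{d^L_\alpha}$, and taking the converse of this inequality (using that converse is monotone) gives $\rev{d^L_\alpha}\le d^L_\alpha$. Hence $\rev{d^L_\alpha}=d^L_\alpha$, which combined with the hemimetric property and Lemma~\ref{lem:hemimetric-rephrased}.(2) establishes that $d^L_\alpha$ is a pseudometric.

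There is no real obstacle here; the proof is a direct assembly of the facts already established in Lemma~\ref{lem:l-sim-properties} and Lemma~\ref{lem:l-bisim}, with the only subtle point being the use of the Knaster-Tarski argument (already flagged in the remark) to know that $d^L_\alpha$ itself is an $L$-simulation, which is what licenses closing the triangle-inequality argument under composition.
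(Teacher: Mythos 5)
Your proof is correct and follows essentially the same route as the paper: reflexivity and the triangle inequality via minimality of $d^L_\alpha$ among $L$-simulations combined with Lemma~\ref{lem:l-sim-properties}, and symmetry via Lemma~\ref{lem:l-bisim}. The only difference is that you spell out the converse-taking step to upgrade $d^L_\alpha\le\rev{(d^L_\alpha)}$ to an equality, which the paper leaves implicit.
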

\begin{proof}
  Since $d^L_\alpha$ is an $L$-simulation, both $\Delta_A$ and
  $d^L_\alpha;d^L_\alpha$ are $L$-simulations by
  Lemma~\ref{lem:l-sim-properties}. As $d^L_\alpha$ is the
  \emph{least} $L$-simulation, this implies $d^L_\alpha\le\Delta_A$ and
  $d^L_\alpha \le d^L_\alpha; d^L_\alpha$, so that $d^L_\alpha$ is a
  hemimetric by Lemma~\ref{lem:hemimetric-rephrased}.

  In the converse-preserving case, we additionally have that
  $\rev{(d^L_\alpha)}$ is an $L$-simulation by
  Lemma~\ref{lem:l-bisim}, and therefore
  $d^L_\alpha\le\rev{(d^L_\alpha)}$, making $d^L_\alpha$ a
  pseudometric by Lemma~\ref{lem:hemimetric-rephrased}.
\end{proof}

\begin{rem}\label{rem:metric-distance}
  As announced above and as we show next, existing generic notions of
  behavioural distance defined via functor
  liftings~\cite{bbkk:coalgebraic-behavioral-metrics} agree with the
  one given above (when both apply). Specifically, when applied to the
  functor lifting induced by a converse-preserving lax extension~$L$
  of~$T$ according to Lemma~\ref{lem:functor-lifting}, the definition
  of behavioural distance via functor liftings amounts to taking the
  same least fixpoint as in Definition~\ref{defn:l-dist} but only over
  pseudometrics instead of over fuzzy relations~\cite[Lemma~6.1]{bbkk:coalgebraic-behavioral-metrics}.  Now let $(A,\alpha)$
  be a coalgebra and denote the behavioural distance on~$A$ according
  to the definition in~\cite{bbkk:coalgebraic-behavioral-metrics}
  by~$\bar d_\alpha$. We claim that $\bar d_\alpha=d^L_\alpha$,
  with~$d^L_\alpha$ defined according to
  Definition~\ref{defn:l-dist}. Indeed, `$\ge$' is trivial since
  $\bar d_\alpha$ is, by definition, an $L$-bisimulation, and `$\le$'
  is immediate from~$d^L_\alpha$ being a pseudometric
  (Lemma~\ref{lem:l-dist-hemimetric}).
\end{rem}

\begin{rem}\label{rem:diagonals}
  Every converse-preserving fuzzy lax extension~$L$ induces a crisp
  lax extension~$L_c$, where for any crisp relation $R$,
  $L_c R = (LR)^{-1}[\{0\}] \subseteq TA\times TB$ (recall
  Convention~\ref{conv:zero}). It is easily checked that~$L_c$
  preserves diagonals (Section~\ref{sec:prelim}) iff
  \begin{equation}\label{eq:diag-metric}
    \text{$L\Delta_A$ is a metric for each set~$A$.}
  \end{equation}
  By results on lax extensions cited in Section~\ref{sec:prelim},
  $L_c$-bisimilarity coincides with behavioural equivalence in this
  case, i.e.\ if~$L$ satisfies~\eqref{eq:diag-metric}, then $L$
  characterizes behavioural equivalence: Two states $a\in A$ and $b\in B$
  in coalgebras $(A,\alpha)$ and $(B,\beta)$ are behaviourally equivalent
  iff $d^L_{\alpha,\beta}(a,b) =~0$.
\end{rem}

\begin{expl}[Small bisimulations]
  \label{expl:small-bisim}
  We give an example of a bisimulation for a lax extension of the
  functor $TX = [0,1]\times\Pow X$.  Coalgebras for $T$ are Kripke
  frames where each state is labelled with a number from the unit
  interval. They are similar to the weighted transition systems
  from~\cite{LarsenEA11}, except that here the labels are on the
  states rather than on the transitions. This $T$ has a
  converse-preserving nonexpansive lax extension $L$, defined for
  fuzzy relations $R\colon\frel{A}{B}$ by
  \begin{equation*}
    LR((p,U),(q,V)) = \textstyle\frac{1}{2}(|p-q| + HR(U,V)),
  \end{equation*}
  where $p,q\in[0,1], U\subseteq A, V\subseteq B$, and $H$ is the
  Hausdorff lifting (Example~\ref{expl:hausdorff}). The 
  idea behind this definition is that the $L$-behavioural distance of
  two states is the supremum of the accumulated branching-time
  differences between state labels over all runs of a process starting
  at these states. The factor $\frac{1}{2}$ ensures that the total
  distance is at most~$1$ by discounting the differences at later
  stages with exponentially decreasing factors.

  Now consider the $T$-coalgebras $(A,\alpha)$ and $(B,\beta)$ below:
  \begin{equation*}
    \begin{tikzpicture}
      \node[label=right:{\small 0.7}] (a1) at (1.5,1) {$a_1$};
      \node[label=right:{\small 0.2}] (a2) at (0,0)   {$a_2$};
      \node[label=right:{\small 0.8}] (a3) at (3,0)   {$a_3$};
      \node[label=right:{\small 0.4}] (b1) at (7.5,1) {$b_1$};
      \node[label=right:{\small 0.7}] (b2) at (6,0)   {$b_2$};
      \node[label=right:{\small 0}]   (b3) at (9,0)   {$b_3$};

      \draw[->] (a1) -- (a2);
      \draw[->] (a1) -- (a3);
      \draw[->] (b1) -- (b2);
      \draw[->] (b1) -- (b3);
    \end{tikzpicture}
  \end{equation*}
  \noindent We put
  $R(a_1,b_1) = 0.2, R(a_2,b_3) = R(a_3,b_2) = 0.1$, and
  $R(a_i,b_j) = 1$ in all other cases. We show that $R$ is an $L$-bisimulation
  witnessing that $d^L_{\alpha,\beta}(a_1,b_1) \le 0.2$, even though it is
  clearly neither reflexive nor symmetric on the disjoint union of the
  systems (it is easy to come up with similar but slightly larger
  examples where~$R$ also fails to be transitive, i.e.~to satisfy the
  triangle inequality).
  
  Specifically, we need to show for each $a_i$ and $b_j$ that
  $LR(\alpha(a_i),\beta(b_j)) \le R(a_i,b_j)$. The cases with
  $R(a_i,b_j) = 1$ are trivial; in the other cases we have:
  \begin{align*}
    HR(\{a_2,a_3\},\{b_2,b_3\}) &= \max(\adjustlimits\max_{a\in \{a_2,a_3\}}\min_{b\in \{b_2,b_3\}} R(a,b),\;
    \adjustlimits\max_{b\in \{b_2,b_3\}}\min_{a\in \{a_2,a_3\}} R(a,b)) \\
    &= \max(0.1,0.1) = 0.1 \\
    LR(\alpha(a_1),\beta(b_1))
    &= LR((0.7,\{a_2,a_3\}),(0.4,\{b_2,b_3\})) \\
    &= {\textstyle\frac{1}{2}} (|0.7-0.4| + HR(\{a_2,a_3\},\{b_2,b_3\})) \\
    &= \textstyle\frac{1}{2} (0.3 + 0.1)
      = 0.2 = R(a_1,b_1) \\
    LR(\alpha(a_2),\beta(b_3))
    &= LR((0.2,\emptyset),(0,\emptyset))
    = \textstyle\frac{1}{2} 0.2 = 0.1 = R(a_2,b_3) \\
    LR(\alpha(a_3),\beta(b_2))
    &= LR((0.8,\emptyset),(0.7,\emptyset))
    = \textstyle\frac{1}{2} 0.1 \le 0.1 = R(a_3,b_2)
  \end{align*}

\end{expl}
\noindent As indicated previously, quantitative Hennessy-Milner
theorems can only be expected to hold for nonexpansive lax
extensions. The key observation is the following. By standard fixpoint
theory, $L$-behavioural distance can be approximated from below by an
ordinal-indexed increasing chain:
\begin{defn}\label{defn:approximant}
  Let~$L$ be a lax extension of~$T$, and let $(A,\alpha)$, $(B,\beta)$ be
  $T$-coalgebras. The sequence of \emph{approximants} of ($L$-behavioural distance)
  $d^L_{\alpha,\beta}$ are the fuzzy relations
  $d_\kappa\colon\frel{A}{B}$, indexed over ordinal numbers~$\kappa$,
   inductively defined by
  \begin{equation*}\textstyle
    d_0 = 0, \quad\quad
    d_{\kappa+1} = Ld_\kappa \circ (\alpha\times\beta),\quad\quad
    d_\lambda = \sup_{\kappa<\lambda} d_\kappa \quad\text{($\lambda$ limit ordinal)}.
  \end{equation*}
\end{defn}
\noindent We show some basic properties of these sequences:
\begin{lem}\label{lem:approximant}
  Let~$L$ be a lax extension of~$T$, let $(A,\alpha)$, $(B,\beta)$ be
  $T$-coalgebras, and let $(d_\kappa\colon\frel{A}{B})_\kappa$ be the
  sequence of approximants of $d^L_{\alpha,\beta}$. Then:
  \begin{enumerate}
    \item The sequence $(d_\kappa)_\kappa$ is increasing. \label{item:approx-inc}
    \item We have $d_\kappa\le d^L_{\alpha,\beta}$ for each ordinal $\kappa$. \label{item:approx-below}
    \item Let $(A',\alpha')$, $(B',\beta')$ be $T$-coalgebras, let $f\colon A\to A',
      g\colon B\to B'$ be coalgebra morphisms, and let
      $(d'_\kappa\colon\frel{A'}{B'})_\kappa$ be the
      sequence of approximants of $d^L_{\alpha',\beta'}$.
      Then $d_\kappa = d'_\kappa\circ(f\times g)$ for each ordinal $\kappa$. \label{item:approx-morph}
  \end{enumerate}
\end{lem}

\begin{proof}\hfill
  \begin{enumerate}
    \item Because $d_\lambda = \sup_{\kappa<\lambda} d_\kappa$ for
      limit ordinals $\lambda$, it is enough to show that $d_\kappa\le
      d_{\kappa+1}$ for each ordinal $\kappa$. We proceed by induction;
      there are three cases. $d_0\le d_1$ holds trivially.
      For successor ordinals $\kappa+1$, we have
      \begin{math}
        d_{\kappa+1} = Ld_\kappa\circ(\alpha\times\beta)
        \le Ld_{\kappa+1}\circ(\alpha\times\beta) = d_{\kappa+2},
      \end{math}
      where we used (L1) and the inductive hypothesis in the
      inequality. Finally, for limit ordinals~$\lambda$, we have
      $Ld_\kappa \le Ld_\lambda$ for each $\kappa<\lambda$ by (L1), so
      that $\sup_{\kappa<\lambda} Ld_\kappa \le Ld_\lambda$. Therefore,
      \begin{math}
      d_\lambda
        = \sup_{\kappa<\lambda} d_\kappa
        \le \sup_{\kappa<\lambda} d_{\kappa+1}
        = \sup_{\kappa<\lambda} Ld_\kappa\circ(\alpha\times\beta)
        \le Ld_\lambda\circ(\alpha\times\beta) = d_{\lambda+1}.
      \end{math}
    \item We proceed by induction; the cases for $0$ and limit
      ordinals are trivial. For successor ordinals, we have
      \begin{math}
        d_{\kappa+1} = Ld_\kappa\circ(\alpha\times\beta)
        \le Ld^L_{\alpha,\beta}\circ(\alpha\times\beta) = d^L_{\alpha,\beta}
      \end{math}
      by the inductive hypothesis, by (L1), and by definition of $d^L_{\alpha,\beta}$.
    \item Again, we proceed by induction; the cases for $0$ and limit
      ordinals are immediate from the definition. For $\kappa+1$ a successor ordinal,
      we compute
      \begin{align*}
        d_{\kappa+1}
        &= Ld_\kappa\circ(\alpha\times\beta) && \myby{definition of $d_{\kappa+1}$} \\
        &= L(d'_\kappa\circ(f\times g))\circ(\alpha\times\beta) && \myby{IH} \\
        &= Ld'_\kappa\circ(Tf\times Tg)\circ(\alpha\times\beta) && \myby{Lemma~\ref{lem:transform}} \\
        &= Ld'_\kappa\circ(\alpha\times\beta)\circ(f\times g) && \myby{$f,g$ morphisms} \\
        &= d'_{\kappa+1}\circ(f\times g). && \myby{definition of $d'_{\kappa+1}$} \qedhere
      \end{align*}
  \end{enumerate}
\end{proof}

\noindent Crucially, if~$L$ is nonexpansive and finitarily separable, then the
chain of approximants stabilizes after~$\omega$ steps. Formally:
\begin{thm}\label{thm:fixpoint}
  Let~$L$ be a nonexpansive finitarily separable lax extension of~$T$.
  Given $T$-coalgebras $(A,\alpha)$, $(B,\beta)$, let $(d_\kappa\colon\frel{A}{B})_\kappa$,
  be the approximants of $d^L_{\alpha,\beta}$.
  Then
  \begin{enumerate}[(i)]
  \item \label{item:omega-converge}$d_{\omega+1} = d_\omega$, and
  \item\label{item:L-dist-omega} $L$-behavioural distance
    $d^L_{\alpha,\beta}$ equals $d_\omega$.
  \end{enumerate}
\end{thm}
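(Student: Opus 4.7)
The plan is to dispose of part~(ii) via standard fixpoint arguments relying on part~(i), and to concentrate effort on the $\omega$-continuity estimate in~(i). For~(ii), once~(i) yields that $d_\omega$ is a fixpoint of $F(R)=LR\circ(\alpha\times\beta)$, it is in particular a prefix point, i.e.\ an $L$-simulation, giving $d^L_{\alpha,\beta}\le d_\omega$; conversely, a straightforward induction (using monotonicity of $F$ from (L1) and the fact that $d^L_{\alpha,\beta}$ is a prefix point) shows $d_n\le d^L_{\alpha,\beta}$ for all $n$, whence $d_\omega\le d^L_{\alpha,\beta}$.

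For~(i), the sequence $(d_n)$ is increasing (induction from $d_0=0$ by monotonicity of $F$), so $d_{n+1}=F(d_n)\le F(d_\omega)$ gives the easy direction $d_\omega\le F(d_\omega)$. The crux is the reverse inequality $Ld_\omega(s,t)\le d_\omega(a,b)$ at an arbitrary pair, writing $s=\alpha(a),t=\beta(b)$. Fix $\epsilon>0$. Finitary separability will provide $s_0=Ti(u)\in T_\omega A$ and $t_0=Tj(v)\in T_\omega B$, with $i\colon Y\hookrightarrow A$, $j\colon Z\hookrightarrow B$ finite inclusions, satisfying each of the four inequalities $L\Delta_A(s,s_0),L\Delta_A(s_0,s),L\Delta_B(t,t_0),L\Delta_B(t_0,t)\le\epsilon$. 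Starting from $d_\omega=\Delta_A;d_\omega;\Delta_B$ (and analogously for $d_N$), (L2) transports through the finitary approximant:
\begin{equation*}
  Ld_\omega(s,t)\le L\Delta_A(s,s_0)\oplus Ld_\omega(s_0,t_0)\oplus L\Delta_B(t_0,t)\le Ld_\omega(s_0,t_0)+2\epsilon,
\end{equation*}
and symmetrically $Ld_N(s_0,t_0)\le Ld_N(s,t)+2\epsilon$ for any $N$. By naturality (Lemma~\ref{lem:transform}), $Ld_\omega(s_0,t_0)=L(d_\omega\circ(i\times j))(u,v)$, and likewise for $d_N$; since $d_n\circ(i\times j)$ converges to $d_\omega\circ(i\times j)$ pointwise on the \emph{finite} set $Y\times Z$ the convergence is uniform, so by Lemma~\ref{lem:L-nonexpansive}\itemref{item:nonexp} one can pick $N$ with $Ld_\omega(s_0,t_0)\le Ld_N(s_0,t_0)+\epsilon$. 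Chaining gives $Ld_\omega(s,t)\le Ld_N(s,t)+5\epsilon=d_{N+1}(a,b)+5\epsilon\le d_\omega(a,b)+5\epsilon$; letting $\epsilon\to 0$ concludes.

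The main obstacle will be the $\omega$-continuity estimate in~(i): unlike in the two-valued setting, $L$ need not preserve $\omega$-directed suprema globally, and the argument must carefully combine three distinct ingredients---finitary separability (to reduce $s,t$ to finitely supported approximants, on whose supports pointwise convergence of $d_n\to d_\omega$ becomes uniform), naturality (to switch between ``restricted'' and ``global'' views of the relations), and nonexpansiveness (to convert the uniform bound on $d_n$ into one on $Ld_n$)---together with (L2)-based transport that respects the asymmetric density condition, which requires each of the four $L\Delta$-inequalities to be applied in the direction dictated by the inf-composition.
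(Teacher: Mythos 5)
Your proof is correct. Part~(ii) and the easy inequality $d_\omega\le Ld_\omega\circ(\alpha\times\beta)$ are handled exactly as in the paper, and your core continuity argument for the finitary approximants (naturality to restrict to the finite supports, uniform convergence of $d_n\to d_\omega$ there, and Lemma~\ref{lem:L-nonexpansive}\itemref{item:nonexp} to push the uniform bound through~$L$) is precisely the paper's argument for the case of finitary~$T$. Where you genuinely diverge is in the treatment of non-finitary~$T$: the paper reduces to the finitary case by passing to the \emph{unravellings} $(A^\ast,\alpha^\ast)$, $(B^\ast,\beta^\ast)$, globally replacing the structure maps by $T_\omega$-valued approximants $\alpha^\epsilon,\beta^\epsilon$ with depth-indexed errors $\epsilon\cdot 3^{-k}$, and running an induction to control how these perturbations accumulate through the entire fixpoint iteration. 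You avoid all of this by observing that the inequality to be proved, $Ld_\omega(\alpha(a),\beta(b))\le d_\omega(a,b)$, involves only a \emph{single, outermost} application of~$L$ to the already-defined relation $d_\omega$ on the original coalgebras; hence it suffices to approximate the two elements $\alpha(a),\beta(b)$ once, transport through the approximants via $d_\omega=\Delta_A;d_\omega;\Delta_B$ and (L2) in the appropriate directions (using all four inequalities supplied by the two-sided density condition of Definition~\ref{defn:density}), and there is no error accumulation to control. Each step checks out: the specialization of the inf in $(L\Delta_A;Ld_\omega;L\Delta_B)(s,t)$ to $(s_0,t_0)$, the identification $Ld_N(s,t)=d_{N+1}(a,b)\le d_\omega(a,b)$, and the final $5\epsilon$ bound are all sound. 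Your route buys a substantially shorter proof that dispenses with Definition~\ref{defn:unravelling} and Lemma~\ref{lem:unravel-morph} entirely; the paper's heavier construction yields as a by-product the uniform statement $d^\epsilon_\omega\approx d^\ast_\omega$ for the globally approximated coalgebras, but that is not needed for the theorem as stated.
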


\noindent To prove Theorem~\ref{thm:fixpoint} in the case of
non-finitary $T$, we make use of unravellings of coalgebras:

\begin{defn}[Unravelling]\label{defn:unravelling}
  Let $(C,\gamma)$ be a $T$-coalgebra and put $C^+ = \bigcup_{m\ge 1} C^m$.
  \begin{enumerate}
    \item For $\bar c = (c_1,\dots,c_m)\in C^m$ and $c\in C$ we put $\last(\bar
      c) = c_m$ and $\app_{\bar c}(c) = (c_1,\dots,c_m,c)$, defining maps
      $\last\colon C^+\to C$ for each $m\ge 1$ and $\app_{\bar c}\colon C
      \to C^+$ for each $m\ge 1$ and $\bar c\in C^m$.
    \item The \emph{unravelling} of $(C,\gamma)$ is the $T$-coalgebra $(C^+, \gamma^+)$,
      where $\gamma^+\colon C^+ \to TC^+$ is given by
      \begin{equation*}
        \gamma^+(\bar c) = T\app_{\bar c}(\gamma(\last(\bar c)).
      \end{equation*}
  \end{enumerate}
\end{defn}

\noindent Every coalgebra is behaviourally equivalent to its unravelling:

\begin{lem}\label{lem:unravel-morph}
  For every $T$-coalgebra $(C,\gamma)$,
  \begin{enumerate}[label=(\roman*)]
    \item the map $\last\colon (C^+,\gamma^+) \to (C,\gamma)$ 
      is a coalgebra morphism; \label{item:unravel-morphism} and 
    \item every state $c\in C$ is behaviourally equivalent to the state
      $(c) \in C^+$. \label{item:unravel-bisim}
  \end{enumerate}
\end{lem}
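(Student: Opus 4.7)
The plan is to prove the two parts separately, with part~(ii) being essentially an immediate corollary of part~(i).

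For part~(i), I would verify the coalgebra morphism equation $\gamma \circ \last = T\last \circ \gamma^\ast$ pointwise on sequences $\vec{c} = (c_1, \dots, c_k) \in C^\ast$. The left-hand side evaluates to $\gamma(c_k)$. Unfolding the definition of~$\gamma^\ast$, the right-hand side becomes $T\last(T\app(\gamma(c_k)))$, where $\app\colon C \to C^\ast$ sends $c$ to $(c_1, \dots, c_k, c)$. By functoriality of~$T$, this equals $T(\last \circ \app)(\gamma(c_k))$. The key observation is that $\last \circ \app = \id_C$: appending any $c$ to $\vec{c}$ and then taking the last element yields~$c$. Hence $T(\last \circ \app) = \id_{TC}$, and the right-hand side reduces to $\gamma(c_k)$, matching the left-hand side.

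For part~(ii), I would invoke the definition of behavioural equivalence directly by exhibiting the required common coalgebra and pair of morphisms. The natural choice is to take the coalgebra $(C, \gamma)$ itself together with the morphisms $\last\colon (C^\ast, \gamma^\ast) \to (C, \gamma)$, which is a morphism by part~(i), and $\id_C\colon (C, \gamma) \to (C, \gamma)$. Since $\last((c)) = c = \id_C(c)$, the states $(c) \in C^\ast$ and $c \in C$ are behaviourally equivalent.

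I do not anticipate any real obstacle. The only point requiring care is applying functoriality in the correct direction in part~(i), namely recognizing that $T\last \circ T\app = T(\last \circ \app)$ and that~$\app$ (which depends on the chosen $\vec{c}$) serves as a one-sided inverse to~$\last$ at the level of underlying functions. Once this is noted, both parts reduce to a straightforward unfolding of definitions.
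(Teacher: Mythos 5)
Your proof is correct and follows essentially the same route as the paper: part~(i) is verified pointwise via $T\last\circ T\app = T(\last\circ\app) = T\id_C$, and part~(ii) is witnessed by the morphisms $\last$ and $\id_C$ into $(C,\gamma)$ agreeing on $(c)$ and $c$. No gaps.
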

\noindent This fact is essentially standard; we give a proof for the
sake of completeness:
\begin{proof}\hfill
  \begin{enumerate}[label=(\roman*)]
    \item Let $\bar c \in C^+$. Then clearly $\last\circ\app_{\bar c}=\id$ by definition
      and therefore
      \begin{equation*}
        (T\last \circ \gamma^+)(\bar c)
        = T\last(T\app_{\bar c}(\gamma(\last(\bar c))))
        = T\id(\gamma(\last(\bar c)))
        = (\gamma\circ\last)(\bar c).
      \end{equation*}
    \item This is immediate from \ref{item:unravel-morphism}, as
      behavioural equivalence is witnessed by the coalgebra morphisms
      $\id\colon C\to C$ and $\last\colon C^+\to C$. \qedhere
  \end{enumerate}
\end{proof}

\begin{proof}[Proof of Theorem~\ref{thm:fixpoint}]
 
  By the fixpoint definition of~$d^L_{\alpha,\beta}$ and
  Lemma~\ref{lem:approximant}.\ref{item:approx-below},~\ref{item:L-dist-omega}
  is immediate from~\ref{item:omega-converge}. We
  prove~\ref{item:omega-converge}, i.e.\ that
  $Ld_{\omega}(\alpha(a),\beta(b))= d_\omega(a,b)$ for all $a\in A$,
  $b\in B$. We begin by assuming that $T$ is finitary, and generalize
  to the non-finitary case later.

  Since~$T$ is finitary, there exist finite subsets
  $A_0\subseteq A$, $B_0\subseteq B$ and $s\in TA_0$, $t\in TB_0$ such
  that $\alpha(a)=Ti(s)$ and $\beta(b)=Tj(t)$, where
  $i\colon A_0\to A$ and $j\colon B_0\to B$ are the inclusion maps. We
  then have
  $Ld_{\omega}(\alpha(a),\beta(b))=L(d_\omega\circ(i\times j))(s,t)$
  by naturality (Lemma~\ref{lem:transform}). By Lemma~\ref{lem:approximant}.\ref{item:approx-inc}, the maps
  $d_n\circ(i\times j)$ converge to $d_\omega\circ(i\times j)$
  pointwise, and therefore also under the supremum metric (i.e.\
  uniformly), since $A_0\times B_0$ is finite. Since $L$ is nonexpansive,
  it is also continuous w.r.t.\ the supremum metric by
  Lemma~\ref{lem:L-nonexpansive}, so it follows that
  \begin{align*}
    Ld_\omega(\alpha(a),\beta(b)) &= L(d_\omega\circ(i\times j))(s,t) &&\myby{naturality}\\
    & =\textstyle\sup_{n<\omega} L(d_n\circ(i\times j))(s,t) &&\myby{$L$ continuous}\\
    & =\textstyle\sup_{n<\omega} Ld_n(\alpha(a),\beta(b)) &&\myby{naturality}\\
    & = \textstyle\sup_{n<\omega} d_{n+1}(a,b)
      = d_\omega(a,b). &&\myby{definition of $d_{n+1}, d_\omega$}
  \end{align*}

  \noindent This covers the finitary case. In the general case, we
  make use of the unravellings $(A^+,\alpha^+)$ and $(B^+,\beta^+)$,
  as well as the sequence $(d^+_\kappa\colon\frel{A^+}{B^+})_\kappa$
  of approximants of $d^L_{\alpha^+,\beta^+}$. We can assume
  w.l.o.g.~that~$A\neq\emptyset\neq B$; then the inclusions
  $A^m\into A^+$, $B^m\into B^+$ (for $m\ge 1$) are preserved by~$T$, and for
  readability we assume in the following that $TA^m$ is in fact a
  subset of $TA^+$; similarly for~$B^m$ and~$T_\omega$, with
  naturality of~$L$ guaranteeing that the identification does not
  affect lifted distance. Now let $\epsilon>0$. As~$L$ is finitarily
  separable, we can construct $T_\omega$-coalgebras
  $\alpha^\epsilon\colon A^+ \to T_\omega A^+$ and
  $\beta^\epsilon\colon B^+ \to T_\omega B^+$ approximating $\alpha^+$
  and $\beta^+$ respectively. Specifically, for every $\bar a\in A^m$
  we have $\alpha^+(\bar a)\in TA^{m+1}$ by definition, and as
  $T_\omega A^{m+1}$ is dense in $TA^{m+1}$, we can choose an element
  $\alpha^\epsilon(\bar a) \in T_\omega A^{m+1}$ such that
  \begin{equation}\label{eqn:approx-coalg-a}
    L\Delta_{A^+}(\alpha^+(\bar a), \alpha^\epsilon(\bar a)) \le \epsilon\cdot 3^{-m}
    \quad\text{and}\quad
    L\Delta_{A^+}(\alpha^\epsilon(\bar a), \alpha^+(\bar a)) \le \epsilon\cdot 3^{-m}.
  \end{equation}
  Similarly, for each $\bar b\in B^m$ we choose $\beta^\epsilon(\bar
  b) \in T_\omega B^{m+1}$ such that
  \begin{equation}\label{eqn:approx-coalg-b}
    L\Delta_{B^+}(\beta^+(\bar b), \beta^\epsilon(\bar b)) \le \epsilon\cdot 3^{-m}
    \quad\text{and}\quad
    L\Delta_{B^+}(\beta^\epsilon(\bar b), \beta^+(\bar b)) \le \epsilon\cdot 3^{-m}.
  \end{equation}
  We denote the sequence of approximants of
  $d^L_{\alpha^\epsilon,\beta^\epsilon}$ by
  $(d^\epsilon_\kappa\colon\frel{A^+}{B^+})_\kappa$ and
  show by induction that the $d^\epsilon_\kappa$ approximate the
  $d^+_\kappa$ in the following sense: for all $m \ge 1$ and all $\bar
  a\in A^m, \bar b\in B^m$,
  \begin{equation}\label{eqn:approx-approximant}
    |d^\epsilon_\kappa(\bar a,\bar b) - d^+_\kappa(\bar a,\bar b)| \le \epsilon\cdot 3^{1-m}
  \end{equation}
  for all ordinals $\kappa$.
  
  For $\kappa=0$ this clearly holds. For the inductive step from
  $\kappa$ to $\kappa+1$, we note again that for $\bar a\in A^m$ and
  $\bar b\in B^m$ we have $\alpha^+(\bar a) \in TA^{m+1}$ and
  $\beta^+(\bar b) \in TB^{m+1}$ by definition . Therefore, by
  Lemma~\ref{lem:L-nonexpansive}.\ref{item:nonexp} and the inductive
  hypothesis, we have
  \begin{equation}\label{eqn:approx-ast-eps}
    |Ld^\epsilon_\kappa(\alpha^+(\bar a),\beta^+(\bar b))
    - Ld^+_\kappa(\alpha^+(\bar a),\beta^+(\bar b))|
    \le \epsilon\cdot 3^{1-(m+1)} = \epsilon\cdot 3^{-m},
  \end{equation}
  so that we compute:
  \begin{align*}
    &d^\epsilon_{\kappa+1}(\bar a,\bar b) &&\\
    &= Ld^\epsilon_\kappa(\alpha^\epsilon(\bar a), \beta^\epsilon(\bar b)) &&\myby{definition of $d^\epsilon_{\kappa+1}$}\\
    &= L(\Delta_{A^+};d^\epsilon_\kappa;\Delta_{B^+})
        (\alpha^\epsilon(\bar a), \beta^\epsilon(\bar b)) &&\myby{$\Delta$ neutral for $;$}\\
    &\le L\Delta_{A^+}(\alpha^\epsilon(\bar a),\alpha^+(\bar a)) +
        Ld^\epsilon_\kappa(\alpha^+(\bar a),\beta^+(\bar b)) +
        L\Delta_{B^+}(\beta^+(\bar b),\beta^\epsilon(\bar b)) &&\myby{L2}\\
    &\le Ld^\epsilon_\kappa(\alpha^+(\bar a),\beta^+(\bar b)) + 2\epsilon\cdot 3^{-m} &&\mybys{\ref{eqn:approx-coalg-a}}{\ref{eqn:approx-coalg-b}}\\
    &\le Ld^+_\kappa(\alpha^+(\bar a),\beta^+(\bar b)) +
        \epsilon\cdot 3^{-m} + 2\epsilon\cdot 3^{-m} &&\myby{\ref{eqn:approx-ast-eps}}\\
    &= d^+_{\kappa+1}(\bar a,\bar b) + \epsilon\cdot 3^{1-m}. &&\myby{definition of $d^+_{\kappa+1}$}
  \end{align*}
  We can symmetrically derive $d^+_{\kappa+1}(\bar a,\bar b) \le
  d^\epsilon_{\kappa+1}(\bar a,\bar b) + \epsilon\cdot 3^{1-m}$, this
  time using the other inequalities in (\ref{eqn:approx-coalg-a}) and
  (\ref{eqn:approx-coalg-b}), so \eqref{eqn:approx-approximant} holds
  for $\kappa+1$ as claimed. Finally, if $\kappa$ is a limit ordinal,
  then~\eqref{eqn:approx-approximant} also follows inductively, as
  taking suprema is a nonexpansive operation.

  Since the functor $T_\omega$ is finitary, we know from the finitary
  case that $d^\epsilon_\omega = d^\epsilon_{\omega+1}$. Therefore
  we have, for all $\bar a\in A^k, \bar b\in B^k$,
  \begin{equation*}
    |d^+_\omega(\bar a,\bar b) - d^+_{\omega+1}(\bar a,\bar b)|
    \le |d^+_\omega(\bar a,\bar b) - d^\epsilon_\omega(\bar a,\bar b)|
    + |d^\epsilon_{\omega+1}(\bar a,\bar b) - d^+_{\omega+1}(\bar a,\bar b)|
    \le 2\epsilon\cdot 3^{1-k} \le 2\epsilon.  
  \end{equation*}
  Because this holds for all $\epsilon>0$, we have $d^+_\omega = d^+_{\omega+1}$.
  Thus, using Lemma~\ref{lem:approximant}.\ref{item:approx-morph} twice,
  \begin{equation*}
    d_{\omega+1}\circ(\last\times\last) = d^+_{\omega+1}
    = d^+_\omega = d_\omega\circ(\last\times\last).
  \end{equation*}
  As $\last$ is surjective, this implies $d_{\omega+1} = d_\omega$.
\end{proof}

\section{The Kantorovich Lifting}\label{sec:kantorovich}

\noindent As a pseudometric lifting, the Kantorovich lifting is
standard in the probabilistic setting: Given a metric~$d$ on a set~$X$,
the Kantorovich distance $Kd(\mu_1,\mu_2)$ between discrete
distributions~$\mu_1,\mu_2$ on~$X$ is defined by
\begin{equation*}
  Kd(\mu_1,\mu_2) = \sup \{ \expect{\mu_1}(f) - \expect{\mu_2}(f) \mid
    f\colon(X,d)\to([0,1],d_E) \text{ nonexpansive} \}
\end{equation*}
where $\expect{}$ takes expected values and $d_E(x,y) = |x-y|$ is Euclidean
distance. The coalgebraic
generalization of the Kantorovich lifting, both in the pseudometric
setting~\cite{km:metric-bisimulation-games} and in the present setting
of fuzzy relations, is based on fuzzy predicate liftings, a
quantitative analogue of two-valued predicate liftings
(Section~\ref{sec:prelim}) that goes back to work on coalgebraic fuzzy
description logics~\cite{SchroderPattinson11}. Fuzzy predicate
liftings will feature in the generic quantitative modal logics
that we extract from fuzzy lax extensions
(Section~\ref{sec:cml}).

Recall that the \emph{contravariant fuzzy powerset functor}
$\qfun\colon\Set\op\to\Set$ is defined on sets $X$ as
$\qfun X = (X\to[0,1])$ and on functions $f\colon X\to Y$ as
$\qfun f\colon\qfun Y\to\qfun X$, $\qfun f(h) = h\circ f$.

\begin{defn}[Fuzzy predicate liftings]\label{defn:pred-lifting}
  Let $n \in \mathbb{N}$.
  \begin{enumerate}
  \item An \emph{$n$-ary (fuzzy) predicate lifting} is a natural
    transformation
      \begin{equation*}
        \lambda\colon \qfun^n \Rightarrow \qfun\circ T,
      \end{equation*}
      where the exponent $n$ denotes $n$-fold cartesian product.
    \item The \emph{dual} of $\lambda$ is the $n$-ary predicate lifting
      $\bar\lambda$ given by
      \begin{equation*}
        \bar\lambda(f_1,\dots,f_n) = 1-\lambda(1-f_1,\dots,1-f_n).
      \end{equation*}
    \item We call $\lambda$ \emph{monotone} if for all sets $X$ and all functions
      $f_1,\dots,f_n,g_1,\dots,g_n\in\qfun X$ such that
      $f_i\le g_i$ for all $i$,
      \begin{equation*}
        \lambda_X(f_1,\dots,f_n) \le \lambda_X(g_1,\dots,g_n).
      \end{equation*}
    \item We call $\lambda$ \emph{nonexpansive} if for all sets $X$ and all functions
      $f_1,\dots,f_n,g_1,\dots,g_n\in\qfun X$,
      \begin{align*}
        \supnorm{\lambda_X(f_1,\dots,f_n) - \lambda_X(g_1,\dots,g_n)}
        \le \max(\supnorm{f_1-g_1},\dots,\supnorm{f_n-g_n}).
      \end{align*}
  \end{enumerate}
\end{defn}

\begin{rem}\label{rem:eval-function}
  By the Yoneda lemma, unary predicate liftings are equivalent to the
  \emph{evaluation functions} $e\colon\, T[0,1]\to[0,1]$ used in work
  on pseudometric functor
  liftings~\cite{bbkk:coalgebraic-behavioral-metrics,Schroder08} and
  on the generic Wasserstein lifting~\cite{Hofmann07}; more generally,
  an $n$-ary predicate lifting is equivalent to a generalized form of
  evaluation function, of type $T([0,1]^n)\to [0,1]$~\cite{Schroder08}.
  
  More precisely, an evaluation function $e\colon T[0,1]\to[0,1]$
  gives rise to a unary predicate lifting $\lambda_e$ given by
  $\lambda_e(f) = e \circ Tf$. Conversely, the evaluation function
  corresponding to $\lambda\colon \qfun\Rightarrow\qfun\circ T$ is
  $e_\lambda = \lambda_{[0,1]}(\id)$.
  
  In the more general setting with higher arities, an $n$-ary
  evaluation function is a map $e\colon T([0,1]^n)\to[0,1]$,
  giving rise to a predicate lifting
  $ \lambda_e(f_1,\dots,f_n) = e\circ T\langle f_1,\dots f_n\rangle$,
  while for each $n$-ary predicate lifting $\lambda$ the corresponding
  evaluation function is
  $e_\lambda = \lambda_{[0,1]^n} (\pi_1,\dots,\pi_n)$.
\end{rem}

\noindent Before we can show that the Kantorovich lifting is a lax
extension, we first need to generalize it so that it lifts arbitrary
fuzzy relations instead of just pseudometrics. To this end, we
introduce the notion of nonexpansive pairs (a similar idea appears already
in~\cite[Section~5]{Villani08}):

\begin{defn}\label{defn:rel-nonexp}
  Let $R\colon\frel{A}{B}$. A pair $(f,g)$ of functions $f\colon
  A\to[0,1]$ and $g\colon B\to[0,1]$ is
  \emph{$R$-nonexpansive} if $f(a)-g(b)\le R(a,b)$ for all $a\in A,b\in B$.
\end{defn}

\noindent This notion is compatible with our previous use of the term:
When $A=B$ and $d\colon\frel{A}{A}$ is a hemimetric, then
$f\colon (A,d)\to([0,1],\unithemi)$ is nonexpansive in the sense used
so far (cf.~Section~\ref{sec:prelim}) iff the pair $(f,f)$ is
$d$-nonexpansive in the sense defined above. If~$d$ is a pseudometric,
then this is moreover equivalent to~$f$ being nonexpansive as a map
$(A,d)\to([0,1],d_E)$.

Given a function and a fuzzy relation, we can construct a
\emph{nonexpansive companion}:

\begin{defn}\label{defn:companion}
  Let $R\colon\frel{A}{B}$ and $f\colon A\to[0,1]$. Then we define
  $R[f]\colon B\to[0,1]$ by
  \begin{equation*}\textstyle
    R[f](b) = \sup_{a\in A} f(a)\ominus R(a,b)
  \end{equation*}
  (recall from Section~\ref{sec:prelim} that $\ominus$ denotes
  truncated subtraction).
\end{defn}
\noindent We note some basic properties of nonexpansive pairs and nonexpansive
companions. In particular, the nonexpansive companion of some function $f$ is
the least function (in pointwise order) forming a nonexpansive pair with $f$.

\begin{lem}\label{lem:rel-nonexp-props}
  Let $R\colon\frel{A}{B}$. Then the following hold:
  \begin{enumerate}
    \item If $f'\le f$ and $g\le g'$ and $(f,g)$ is $R$-nonexpansive,
      then $(f',g')$ is $R$-nonexpansive.
    \item $(f,g)$ is $R$-nonexpansive if and only if $R[f] \le g$.
  \end{enumerate}
\end{lem}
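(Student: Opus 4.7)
\medskip

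Both items are direct unfoldings of the definitions, so I would handle them without invoking any of the lax extension machinery.

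For item~(1), I would simply chain the given inequalities pointwise: from $f'\le f$ and $g\le g'$ one immediately obtains $f'(a)-g'(b)\le f(a)-g(b)$ for all $a\in A,b\in B$, and then $R$-nonexpansivity of $(f,g)$ gives the bound by $R(a,b)$. This is a one-line argument that does not even require attention to the truncation in $\ominus$, since we are dealing with ordinary (possibly negative) real subtraction in the hypothesis.

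For item~(2), I would unfold $R[f](b)=\sup_{a\in A} f(a)\ominus R(a,b)$ and observe that the condition $R[f]\le g$ is equivalent to $f(a)\ominus R(a,b)\le g(b)$ holding for all $a,b$. The small point to justify is that this is in turn equivalent to $f(a)-g(b)\le R(a,b)$, i.e.\ $R$-nonexpansivity of $(f,g)$. The `$\Leftarrow$' direction here is immediate, and for `$\Rightarrow$' one notes that $g(b)\ge 0$, so if $f(a)\le R(a,b)$ the inequality $f(a)-g(b)\le R(a,b)$ is trivial, while otherwise $f(a)\ominus R(a,b)=f(a)-R(a,b)\le g(b)$ rearranges to the desired inequality. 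Taking the supremum over $a$ then gives the equivalence with $R[f]\le g$.

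There is no real obstacle here; the only mild subtlety is the interaction between ordinary subtraction (which appears in the definition of $R$-nonexpansivity) and truncated subtraction $\ominus$ (which appears in $R[f]$), and this is resolved using $g(b)\ge 0$ as above. Both parts therefore amount to brief direct verifications.
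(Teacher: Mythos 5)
Your proof is correct; the paper states this lemma without proof, treating both items as routine unfoldings of Definitions~\ref{defn:rel-nonexp} and~\ref{defn:companion}, which is exactly what you supply. Your handling of the only subtlety --- reconciling ordinary subtraction in the definition of $R$-nonexpansivity with the truncated subtraction in $R[f]$ via $g(b)\ge 0$ --- is the right (and complete) justification.
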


\begin{defn}\label{defn:kantorovich}
  Let $\Lambda$ be a set of monotone predicate liftings.
  The \emph{Kantorovich lifting} $K_\Lambda$ is defined as follows:
  for $R\colon\frel{A}{B}$, $K_\Lambda R\colon\frel{TA}{TB}$ is given by
  \begin{multline*}
    K_\Lambda R(t_1,t_2) =
      \sup\{\lambda_A(f_1,\dots,f_n)(t_1)\ominus\lambda_B(g_1,\dots,g_n)(t_2) \mid\\
    \hspace{-0.5cm} \lambda\in\Lambda\text{ $n$-ary},
      (f_1,g_1),\dots (f_n,g_n)\text{ $R$-nonexpansive}\}.
  \end{multline*}
\end{defn}

\noindent To show that the Kantorovich lifting is a lax extension, we
need the following fact about nonexpansive pairs that will be used in
the proof of the triangle inequality (L2).

\begin{lem}\label{lem:nonexp-interpol}
  Let $R\colon\frel{A}{B},S\colon\frel{B}{C}$.
  Then for every $(R;S)$-nonexpansive pair $(f,h)$ there
  exists some function $g\colon B\to[0,1]$ such that $(f,g)$ is
  $R$-nonexpansive and $(g,h)$ is $S$-nonexpansive.
\end{lem}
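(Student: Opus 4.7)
The plan is to give an explicit construction of the interpolating function $g$ using the \emph{nonexpansive companion} from Definition~\ref{defn:companion}. Specifically, I would set
\begin{equation*}
  g = R[f], \quad\text{i.e.}\quad g(b) = \sup_{a\in A} f(a)\ominus R(a,b).
\end{equation*}
This clearly defines a function $B\to[0,1]$, since each term $f(a)\ominus R(a,b)$ lies in $[0,1]$.

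The first required property, that $(f,g)$ is $R$-nonexpansive, follows essentially by design. For any $a\in A$ and $b\in B$, the definition of $g$ gives $g(b)\ge f(a)\ominus R(a,b)$, and truncated subtraction satisfies $f(a)\ominus R(a,b)\ge f(a)-R(a,b)$, so rearranging yields $f(a)-g(b)\le R(a,b)$.

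The main work is then showing that $(g,h)$ is $S$-nonexpansive, i.e.\ $g(b)-h(c)\le S(b,c)$ for all $b\in B,c\in C$. Unfolding the supremum, this amounts to verifying, for each individual $a\in A$, that $(f(a)\ominus R(a,b))-h(c)\le S(b,c)$. I would split into two cases. If $f(a)\le R(a,b)$ the left-hand side is at most $0\le S(b,c)$. Otherwise $f(a)\ominus R(a,b)=f(a)-R(a,b)$, so the inequality becomes $f(a)-h(c)\le R(a,b)+S(b,c)$. This is where the $(R;S)$-nonexpansivity of $(f,h)$ enters: we have $f(a)-h(c)\le (R;S)(a,c)\le R(a,b)\oplus S(b,c)$ by definition of composition (Convention~\ref{conv:comp}).

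The one subtle point, which I expect to be the main obstacle, is that $\oplus$ is \L{}ukasiewicz disjunction rather than ordinary addition, so $R(a,b)\oplus S(b,c)=\min(R(a,b)+S(b,c),1)$ need not equal $R(a,b)+S(b,c)$. The saving observation is that $f(a)-h(c)\le 1$ automatically, so the bound $f(a)-h(c)\le \min(R(a,b)+S(b,c),1)$ can be weakened to $f(a)-h(c)\le R(a,b)+S(b,c)$: if the $\min$ equals $R(a,b)+S(b,c)$ the inequality is direct, and if it equals $1$ then $R(a,b)+S(b,c)\ge 1\ge f(a)-h(c)$. This closes the argument.
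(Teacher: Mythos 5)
Your proof is correct and takes essentially the same route as the paper, which also chooses the interpolant $g = R[f]$ and verifies the two nonexpansivity conditions by rearranging $f(a)-h(c)\le (R;S)(a,c)\le R(a,b)+S(b,c)$. The only cosmetic remark is that your case analysis on the \L{}ukasiewicz disjunction is not needed: since $x\oplus y=\min(x+y,1)\le x+y$ always, the bound $f(a)-h(c)\le R(a,b)\oplus S(b,c)\le R(a,b)+S(b,c)$ follows directly by transitivity.
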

\begin{proof}
  For each $b\in B$ the value $g(b)$ can be chosen arbitrarily
  in the interval
  \begin{align*}
    [\sup_{a\in A} f(a) \ominus R(a,b), \inf_{c\in C} h(c)  \oplus S(b,c)],
  \end{align*}
  so for instance we can use the nonexpansive companion $g := R[f]$
  (Definition~\ref{defn:companion}).
  This interval is non-empty because by assumption
  \begin{equation*}
    f(a)-h(c) \le (R;S)(a,c)
    \le \inf_{b' \in B} R(a,b')+S(b',c)
    \le R(a,b)+S(b,c)
  \end{equation*}
  for all $a\in A,c\in C$, so $f(a)-R(a,b)\le h(c)+S(b,c)$ by
  rearranging. Similar rearranging also shows that choosing $g(b)$
  in this way ensures that $(f,g)$ is $R$-nonexpansive and
  $(g,h)$ is $S$-nonexpansive.
\end{proof}

\noindent We are now ready to prove the central result of the section,
stating that the Kantorovich lifting is always a fuzzy lax extension.
In general, it does not preserve converse, but does if the set of
predicate liftings contains all duals of predicate liftings.

\begin{thm}\label{thm:kantorovich-is-lax}
  Let $\Lambda$ be a set of monotone predicate liftings.
  The Kantorovich lifting $K_\Lambda$ is a lax extension.
  If $\Lambda$ is closed under duals, then $K_\Lambda$ preserves
  converse. If all $\lambda\in\Lambda$ are nonexpansive, then
  $K_\Lambda$ is nonexpansive as well.
\end{thm}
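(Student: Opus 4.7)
The plan is to verify the lax-extension axioms (L1)--(L3) in turn, then converse preservation (L0) under the dual-closure hypothesis, and finally the nonexpansiveness axiom (L4). Axioms (L1) and (L3) will be handled directly from the definitions; (L2) is the main step and relies on Lemma~\ref{lem:nonexp-interpol} to interpolate through $TB$. I expect (L2) to be the main obstacle, since it is the only axiom whose proof requires a non-trivial choice of auxiliary function; the dual-closure and nonexpansiveness parts are then comparatively straightforward symmetry, resp.\ $\epsilon$-estimate arguments.

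For (L1), if $R_1\le R_2$ then every $R_1$-nonexpansive pair is automatically $R_2$-nonexpansive, so the set indexing the supremum defining $K_\Lambda R$ only grows. For (L3), it suffices to show $K_\Lambda\gr{f}(t_1,t_2)=0$ whenever $Tf(t_1)=t_2$: a pair $(g,h)$ is $\gr{f}$-nonexpansive iff $g\le h\circ f=\qfun f(h)$, so by monotonicity of $\lambda$ and naturality one has $\lambda_A(g_1,\dots,g_n)(t_1)\le\lambda_A(\qfun f(h_1),\dots,\qfun f(h_n))(t_1)=\lambda_B(h_1,\dots,h_n)(t_2)$, forcing the truncated difference to vanish. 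The symmetric argument, with the roles of $g_i$ and $h_i$ exchanged, establishes $K_\Lambda\rev{\gr{f}}\le\rev{\gr{Tf}}$.

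For the triangle inequality (L2), given $\lambda\in\Lambda$ and an $(R;S)$-nonexpansive family $(f_i,h_i)_{i=1}^n$, I would apply Lemma~\ref{lem:nonexp-interpol} coordinatewise to obtain $g_i\colon B\to[0,1]$ such that $(f_i,g_i)$ is $R$-nonexpansive and $(g_i,h_i)$ is $S$-nonexpansive. For any $t_2\in TB$, the elementary inequality $x\ominus z\le(x\ominus y)\oplus(y\ominus z)$ applied to $x=\lambda_A(f_i)(t_1)$, $y=\lambda_B(g_i)(t_2)$, $z=\lambda_C(h_i)(t_3)$ then yields
\begin{equation*}
  \lambda_A(f_1,\dots,f_n)(t_1)\ominus\lambda_C(h_1,\dots,h_n)(t_3)
  \le K_\Lambda R(t_1,t_2)\oplus K_\Lambda S(t_2,t_3),
\end{equation*}
since each bracketed difference lies below the supremum defining $K_\Lambda R$, resp.\ $K_\Lambda S$. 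Taking the infimum over $t_2$ and the supremum over $\lambda$ and $(f_i,h_i)$ gives $K_\Lambda(R;S)\le K_\Lambda R;K_\Lambda S$.

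For converse preservation, the key is the bijection $(f,g)\leftrightarrow(1-g,1-f)$ between $R$-nonexpansive pairs on $A\times B$ and $\rev R$-nonexpansive pairs on $B\times A$, combined with the identity $\bar\lambda_B(1-g)(t_2)\ominus\bar\lambda_A(1-f)(t_1)=\lambda_A(f)(t_1)\ominus\lambda_B(g)(t_2)$ that follows immediately from $\bar\lambda(\cdot)=1-\lambda(1-\cdot)$ and the rule $(1-y)\ominus(1-x)=x\ominus y$. When $\Lambda$ is closed under duals, this identifies the defining suprema of $K_\Lambda\rev R(t_2,t_1)$ and $\rev{(K_\Lambda R)}(t_2,t_1)=K_\Lambda R(t_1,t_2)$ term by term. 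Finally, for (L4) the off-diagonal bound is trivial; for $t\in TA$, any $\Delta_{\epsilon,A}$-nonexpansive pair $(f_i,g_i)$ satisfies $f_i\le g'_i:=\min(g_i+\epsilon,1)$ with $\supnorm{g'_i-g_i}\le\epsilon$, so by monotonicity and nonexpansiveness of $\lambda$, $\lambda_A(f_1,\dots,f_n)(t)\le\lambda_A(g'_1,\dots,g'_n)(t)\le\lambda_A(g_1,\dots,g_n)(t)+\epsilon$, giving $K_\Lambda\Delta_{\epsilon,A}(t,t)\le\epsilon$ as required.
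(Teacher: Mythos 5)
Your proof is correct and follows essentially the same route as the paper's: the subset argument for (L1), monotonicity plus naturality for (L3), interpolation via Lemma~\ref{lem:nonexp-interpol} and the triangle inequality for $\ominus$ for (L2), the bijection $(f,g)\leftrightarrow(1-g,1-f)$ together with dual liftings for (L0), and a monotonicity-plus-nonexpansiveness estimate for (L4). The only cosmetic difference is in (L4), where you shift $g$ up to $g'=\min(g+\epsilon,1)$ while the paper shifts $g$ down to $f\ominus\epsilon$; both yield the same bound.
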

\begin{proof}
  For readability, we pretend that all $\lambda\in\Lambda$
  are unary although the proof works just as well for unrestricted arities,
  whose treatment requires no more than adding indices.
  We show the five properties one by one:

  \begin{itemize}
    \item (L1): Let $R_1\le R_2$. Then every $R_1$-nonexpansive pair
      is also $R_2$-nonexpansive. Thus $K_\Lambda R_1\le K_\Lambda R_2$,
      because the supremum on the left side is taken over a subset of that
      on the right side.
    \item (L2): Let $R\colon\frel{A}{B},S\colon\frel{B}{C}$ and
      $t_1\in TA,t_2\in TB,t_3\in TC$. Let $\lambda\in\Lambda$ and
      let $(f,h)$ be $(R;S)$-nonexpansive. Let $g$ be given by
      Lemma~\ref{lem:nonexp-interpol}. Then it is enough to observe that:
      \begin{align*}
        \lambda_A(f)(t_1)\ominus\lambda_C(h)(t_3)
        &\le (\lambda_A(f)(t_1)\ominus\lambda_B(g)(t_2)) + (\lambda_B(g)(t_2)\ominus\lambda_C(h)(t_3))\\
        &\le K_\Lambda R(t_1,t_2) + K_\Lambda S(t_2,t_3).
      \end{align*}
    \item (L3): Let $h\colon A\to B$ and $t\in TA$. We need to show that
      $K_\Lambda \gr{h}(t,Th(t)) = 0$. Let $\lambda\in\Lambda$ and
      let $(f,g)$ be $\gr{h}$-nonexpansive, implying $f\le g\circ h$. Then
      \begin{align*}
        \lambda_A(f)(t)
        \le \lambda_A(g\circ h)(t)
        = \lambda_B(g)(Th(t)),
      \end{align*}
      by monotonicity and naturality of $\lambda$. The proof for
      $\rev{\gr{h}}$ is analogous, noting that a pair $(f,g)$ is
      $\rev{\gr{h}}$-nonexpansive iff $f\circ h\le g$.
    \item (L4): Let $A$ be a set, $t\in TA$ and $\epsilon>0$. We need to show that
      $K_\Lambda\Delta_{\epsilon,A}(t,t) \le \epsilon$. Let $\lambda\in\Lambda$ and
      let $(f,g)$ be $\Delta_{\epsilon,A}$-nonexpansive, implying $f(a)-g(a) \le \epsilon$
      for all $a\in A$. By monotonicity of $\lambda$, we can restrict our attention to the case
      $g(a) = f(a)\ominus\epsilon$, so that we have $\supnorm{f-g}\le\epsilon$. In this case,
      \begin{align*}
        \lambda_A(f)(t)\ominus\lambda_A(g)(t) \le \supnorm{\lambda_A(f)-\lambda_A(g)}
        \le \supnorm{f-g} \le \epsilon. 
      \end{align*}
    \item (L0): Let $R\colon\frel{A}{B}$ and $t_1\in TA,t_2\in TB$.
      Note that a pair $(g,f)$ is $\rev{R}$-nonexpansive
      iff $(1-f,1-g)$ is $R$-nonexpansive.
      Now, using that $\Lambda$ is closed under duals,
      \begin{align*}
        K_\Lambda (\rev{R})(t_2,t_1)
        &= \sup\{\lambda_B(g)(t_2)\ominus\lambda_A(f)(t_1)
          \mid \lambda\in\Lambda,(g,f)\text{ $\rev{R}$-nonexp.}\} \\
        &= \sup\{\bar\lambda_A(f)(t_1)\ominus\bar\lambda_B(g)(t_2)
          \mid \lambda\in\Lambda,(f,g)\text{ $R$-nonexp.}\}
          = K_\Lambda R(t_1,t_2). \tag*{\qedhere}
      \end{align*}    
  \end{itemize}
\end{proof}

\begin{rem}[Kantorovich for pseudometrics]\label{rem:kant-pseudometric}
  On pseudometrics, the Kantorovich lifting~$K_\Lambda$ as given by
  Definition~\ref{defn:kantorovich} agrees with the usual Kantorovich
  distance $-^{\uparrow T}$ defined for pseudometrics~\cite[Definition
  5.4]{bbkk:coalgebraic-behavioral-metrics}.  If $d\colon\frel{A}{A}$
  is a pseudometric, then
\begin{multline*}
  d^{\uparrow T}(t_1,t_2) = 
  \sup\{|\lambda_A(f_1,\dots,f_n)(t_1)-\lambda_A(f_1,\dots,f_n)(t_2)|\; \mid \\
    \lambda\in\Lambda,
    f_1,\dots,f_n\colon\nonexp{(A,d)}{([0,1],d_E)}\}.
\end{multline*}
\end{rem}

\begin{lem}
  If $\Lambda$ is closed under duals, then $K_\Lambda(d) = d^{\uparrow
  T}$ for every pseudometric $d$.
\end{lem}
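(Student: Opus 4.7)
The plan is to establish the two inequalities $d^{\uparrow T}(t_1,t_2) \le K_\Lambda d(t_1,t_2)$ and $K_\Lambda d(t_1,t_2) \le d^{\uparrow T}(t_1,t_2)$ separately, for all $t_1,t_2 \in TA$.

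For $d^{\uparrow T} \le K_\Lambda d$, I fix $\lambda \in \Lambda$ and nonexpansive $f_1,\dots,f_n\colon (A,d) \to ([0,1],d_E)$. As noted right after Definition~\ref{defn:rel-nonexp}, when $d$ is a pseudometric, each $f_i$ being nonexpansive in this sense is equivalent to $(f_i,f_i)$ being $d$-nonexpansive. Hence $\lambda_A(f_1,\dots,f_n)(t_1) \ominus \lambda_A(f_1,\dots,f_n)(t_2) \le K_\Lambda d(t_1,t_2)$ follows directly from Definition~\ref{defn:kantorovich}. By Theorem~\ref{thm:kantorovich-is-lax}, closure of $\Lambda$ under duals makes $K_\Lambda$ converse-preserving, so by Lemma~\ref{lem:functor-lifting} the hemimetric $K_\Lambda d$ is in fact a pseudometric and therefore symmetric. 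Applying the same inequality with the roles of $t_1$ and $t_2$ swapped yields $\lambda_A(f_1,\dots,f_n)(t_2) \ominus \lambda_A(f_1,\dots,f_n)(t_1) \le K_\Lambda d(t_2,t_1) = K_\Lambda d(t_1,t_2)$, and combining the two gives $|\lambda_A(f_1,\dots,f_n)(t_1) - \lambda_A(f_1,\dots,f_n)(t_2)| \le K_\Lambda d(t_1,t_2)$, as required.

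For $K_\Lambda d \le d^{\uparrow T}$, I fix $\lambda \in \Lambda$ and, for each $i$, a $d$-nonexpansive pair $(f_i,g_i)$. The key step is to interpolate between $f_i$ and $g_i$ by a single nonexpansive function; concretely, set
\[
  \tilde g_i(a) := \inf_{b\in A}\bigl(g_i(b) + d(a,b)\bigr).
\]
Taking $b=a$ immediately gives $\tilde g_i \le g_i$, while $d$-nonexpansiveness of $(f_i,g_i)$ yields $f_i(a) \le g_i(b) + d(a,b)$ for all $a,b$, hence $f_i \le \tilde g_i$. A standard infimum-shifting argument using the triangle inequality and symmetry of $d$ then shows that $\tilde g_i\colon (A,d) \to ([0,1],d_E)$ is nonexpansive. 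By monotonicity of $\lambda$,
\[
  \lambda_A(f_1,\dots,f_n)(t_1) \ominus \lambda_A(g_1,\dots,g_n)(t_2)
  \le \lambda_A(\tilde g_1,\dots,\tilde g_n)(t_1) - \lambda_A(\tilde g_1,\dots,\tilde g_n)(t_2)
  \le d^{\uparrow T}(t_1,t_2),
\]
since $\tilde g_1,\dots,\tilde g_n$ are nonexpansive witnesses for $d^{\uparrow T}$.

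The main obstacle will be the construction and verification of the nonexpansive interpolants $\tilde g_i$; this is the only step that genuinely uses both the triangle inequality and the symmetry of $d$, and it plays a role analogous to the intermediate function $g$ provided by Lemma~\ref{lem:nonexp-interpol}. Closure of $\Lambda$ under duals, by contrast, is used only in the first inequality, in order to make $K_\Lambda d$ symmetric and thus convert a one-sided $\ominus$-bound into the required absolute-value bound.
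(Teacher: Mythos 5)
Your proof is correct and follows essentially the same route as the paper: reduce $K_\Lambda d$ to the supremum over nonexpansive pairs of the form $(h,h)$, identify these with nonexpansive maps $(A,d)\to([0,1],d_E)$, and use closure under duals (in your case via converse-preservation and Lemma~\ref{lem:functor-lifting}) to upgrade the one-sided $\ominus$-bound to the absolute value. Your explicit inf-convolution interpolant $\tilde g_i(a)=\inf_b(g_i(b)+d(a,b))$ in fact makes rigorous the step the paper dispatches with ``by monotonicity the supremum does not change if we restrict to $f=g$'', which really does require such a nonexpansive interpolant between $f$ and $g$ rather than monotonicity alone.
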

\begin{proof}
  First, note that if $(f,g)$ with $f,g\colon A\to[0,1]$ is
  $d$-nonexpansive, then $f(a)-g(a) \le d(a,a) = 0$ for all $a\in A$,
  so $f\le g$.  By monotonicity of the $\lambda\in\Lambda$, the value
  of the supremum in Definition~\ref{defn:kantorovich} thus does not
  change if we restrict the choice of $(f,g)$ to the case $f=g$.
  Finally, in case $f=g$, $d$-nonexpansiveness implies that
  $f(a)-f(b)\le d(a,b)$ and $f(b)-f(a)\le d(b,a) = d(a,b)$ for every
  $a,b\in A$, which means that $f$ is in fact a nonexpansive map
  $f\colon\nonexp{(A,d)}{([0,1],d_E)}$.  Also the supremum does not
  change when taking the absolute value, because $f$ is nonexpansive
  iff $1-f$ is and $\Lambda$ is closed under duals.
\end{proof}

\begin{expl}[Kantorovich liftings]\label{expl:kantorovich}\hfill
  \begin{enumerate}
  \item\label{item:prob-kantorovich} The standard Kantorovich lifting
    $K$ of the discrete distribution functor $\dfun$ is an instance of
    the generic one, for the single predicate lifting
    $\Diamond(f)(\mu) = \expect{\mu}(f)$. We claim that $K$ is
    finitarily separable.
    To see this, let $\mu\in\dfun X$ and $\epsilon>0$. We need to find
    $\mu_\epsilon\in\dfun X$ with finite support such that
    $K\Delta_X(\mu,\mu_\epsilon) \le \epsilon$. Note that a pair
    $(f,g)$ is $\Delta_X$-nonexpansive iff $f\le g$, so by
    monotonicity
      \begin{equation*} \textstyle
        K \Delta_X(\mu,\mu_\epsilon)
        = \sup \{ \sum_{x\in X} f(x) (\mu(x)\ominus\mu_\epsilon(x)) \mid f\colon X\to[0,1] \}
        \le \sum_{x\in X} |\mu(x)-\mu_\epsilon(x)|.
      \end{equation*}
    Because $\mu$ is discrete, there exists a finite set $Y\subseteq X$ with
    $\mu(Y)\ge 1-\frac{\epsilon}{2}$. If $Y = X$, then we can just put
    $\mu_\epsilon = \mu$.
    Otherwise, let $x_0\in X\setminus Y$. Then we define $\mu_\epsilon$ as follows:
    $\mu_\epsilon(x_0) = \mu(X\setminus Y)$,
    $\mu_\epsilon(x) = \mu(x)$ for $x\in Y$,
    and $\mu_\epsilon(x) = 0$ otherwise. In this case,
    \begin{equation*} \textstyle \sum_{x\in X}
      |\mu(x)-\mu_\epsilon(x)| \le 2\mu(X\setminus Y) \le
      \epsilon.
    \end{equation*}
    Following Remark~\ref{rem:diagonals}, we can also see that the
    Kantorovich lifting characterizes behavioural equivalence for
    probabilistic transition systems, i.e.~\emph{probabilistic
    bisimilarity}~\cite{LarsenSkou91}: To see that $K$
    satisfies~\eqref{eq:diag-metric}, by
    Lemma~\ref{lem:functor-lifting} it suffices to show that
    $K\Delta_X(\mu_1,\mu_2)>0$ for any $\mu_1\neq\mu_2\in\dfun X$.
    W.l.o.g. assume $\mu_1(x)>\mu_2(x)$ for some $x\in X$ and let $f\in\qfun X$ be such
    that $f(x) = 1 $ and $f(x') = 0$ otherwise. Then, as $(f,f)$ is
    $\Delta_X$-nonexpansive, we have $K\Delta_X(\mu_1,\mu_2) \ge
    f(x)(\mu_1(x)-\mu_2(x)) > 0$.

  \item The asymmetric Hausdorff lifting $H^\leftarrow$
    (Example~\ref{expl:hausdorff}) is equal to the Kantorovich lifting
    for the single predicate lifting $\Diamond_X(f)(A) = \sup f[A]$.
    Let $R\colon\frel{A}{B}$ and let $U\subseteq A$, $V\subseteq B$.
    We show $H^\leftarrow R(U,V) = K_{\{\Diamond\}}R(U,V)$.
    \begin{itemize}
    \item `$\le$': Let $(f,g)$ be an $R$-nonexpansive pair. Then
      \begin{equation*}
        \adjustlimits \sup_{a\in U} f(a) \ominus \sup_{b\in V} g(b)
          \le \adjustlimits \sup_{a\in U} \inf_{b\in V} f(a) \ominus g(b)
          \le \adjustlimits \sup_{a\in U} \inf_{b\in V} R(a,b) = H^\leftarrow R(a,b).
        \end{equation*}
      \item `$\ge$': Let $a\in U$ and let $f\in\qfun A$ be the indicator
        function of $\{a\}$, that is $f(a') = 1$ if $a'=a$ and $f(a') = 0$
        otherwise. Put $g = R[f]$, so that~$g(b) = 1\ominus R(a,b)$ for each $b\in B$.
        Then, as $(f,g)$ is $R$-nonexpansive (Lemma~\ref{lem:rel-nonexp-props}),
        \begin{equation*}
          K_{\{\Diamond\}}R(U,V) \ge
          \adjustlimits \sup_{a\in U} f(a) \ominus \sup_{b\in V} g(b)
          = 1 \ominus \sup_{b\in V} (1 \ominus R(a,b)) = \inf_{b\in V} R(a,b).
        \end{equation*}
      \end{itemize}
      Dually, the other asymmetric form $H^\rightarrow$ of the
      Hausdorff lifting is thus the Kantorovich lifting for the
      single predicate lifting $\Box_X(f)(A)=\inf f[A]$. It follows
      immediately that the symmetric Hausdorff lifting~$H$ is the
      Kantorovich lifting $K_\Lambda$ for $\Lambda=\{\Box,\Diamond\}$.
  \item The \emph{fuzzy neighbourhood functor} is the (covariant)
    functor $\nfun = \qfun\circ\qfun$; the elements of $\nfun X$ are
    called \emph{fuzzy neighbourhood systems}, and their coalgebras
    \emph{fuzzy neighbourhood
      frames}~\cite{RodriguezGodo13,CintulaEA16}.  The \emph{monotone
      (nonexpansive) fuzzy neighbourhood functor}~$\mfun$ is the
    subfunctor~$\mfun$ of $\nfun$ given by $\mfun X$ consisting of the
    fuzzy neighbourhood systems that are monotone and nonexpansive as
    maps $A\colon\qfun X\to[0,1]$. We put
    \begin{equation*} \textstyle
      LR(A,B) = \sup_{f\in\qfun X} A(f)\ominus B(R[f])
    \end{equation*}
    \noindent for $R\colon\frel{X}{Y}$, $A\in\mfun X$,
    $B\in\mfun Y$ (recall Definition~\ref{defn:companion}). Then
    $L$ is a nonexpansive lax extension of $\mfun$; specifically,
    $L = K_{\{\lambda\}}$ where $\lambda$ is the predicate lifting
    given by
    \begin{math}
      \lambda_X(f)(A) = A(f).
    \end{math}
  \end{enumerate}
\end{expl}

\section{The Wasserstein Lifting}\label{sec:wasserstein}

The other generic construction for lax extensions arises in a similar
way, by generalizing the generic Wasserstein lifting for
pseudometrics~\cite{bbkk:coalgebraic-behavioral-metrics} to lift
arbitrary fuzzy relations instead of just pseudometrics; our
construction slightly generalizes one given by
Hofmann~\cite{Hofmann07}. Like the Kantorovich lifting, the
Wasserstein lifting is based on a choice of predicate liftings.
Compared to the case of the Kantorovich lifting, where we needed to
work with nonexpansive pairs, the generalization from lifting
pseudometrics to lifting relations is much more direct for the
Wasserstein lifting. In the same way as for the original
construction of pseudometric Wasserstein liftings, additional
constraints, both on the functor and the set of predicate liftings
involved, are needed for the Wasserstein lifting to be a lax
extension. Indeed, the Wasserstein lifting may be seen as a
quantitative analogue of the two-valued Barr extension
(Section~\ref{sec:prelim}), and like the latter works only for
functors that preserve weak pullbacks. In particular, Wasserstein
liftings are based on the central notion of coupling:

\begin{defn}
  Let $t_1\in TA,t_2\in TB$ for sets $A,B$. The set of
  \emph{couplings} of $t_1$ and $t_2$ is
  $\cpl{t_1}{t_2}=\{t\in T(A\times B)\mid T\pi_1(t) = t_1, T\pi_2(t) =
  t_2\}$.
\end{defn}
\noindent The Wasserstein lifting uses predicate liftings in a quite
different manner from the Kantorovich lifting, and in particular
appears to make sense only for unary predicate liftings, so unlike
elsewhere in the paper, the restriction to unary liftings in
the next definition is not just for readability.

\begin{defn}[Wasserstein lifting]
  Let $\Lambda$ be a set of unary predicate liftings.  The
  \emph{generic Wasserstein lifting} is the relation lifting~$W_\Lambda$
  of~$T$ defined for $R\colon \frel{A}{B}$ by
  \begin{align*} \textstyle
    W_\Lambda R(t_1,t_2) =
      \sup_{\lambda\in\Lambda}\,
      \inf\{\lambda_{A\times B}(R)(t) \mid t \in \cpl{t_1}{t_2} \}.
  \end{align*}
\end{defn}
\noindent This construction is similar
to~\cite[Definition~3.4]{Hofmann07} except that we admit more than one
modality. On pseudometrics, the Wasserstein lifting coincides with the
pseudometric lifting $-^{\downarrow T}$ as defined in~\cite[Definition
5.12]{bbkk:coalgebraic-behavioral-metrics} (again up to the fact that
we admit more than one modality).  We will see that the following
conditions ensure that the Wasserstein lifting is a fuzzy lax
extension:
\begin{defn}
  Let $\lambda$ be a unary predicate lifting.
  \begin{enumerate}
    \item $\lambda$ is \emph{subadditive} if for all sets $X$ and all
      $f,g\in\qfun X$,
      $\lambda_X(f\oplus g) \le \lambda_X(f) \oplus \lambda_X(g)$.
    \item $\lambda$ \emph{preserves the zero function} if for all sets $X$,
      $\lambda_X(\zerofun{X}) = \zerofun{TX}$,
      where $\zerofun{X}\colon x\mapsto 0$.
    \item $\lambda$ is \emph{standard} if it is monotone,
      subadditive, and preserves the zero function.
  \end{enumerate}
\end{defn}

\noindent Baldan et al. give conditions under which the Wasserstein
lifting arising from some set of evaluation functions
(Remark~\ref{rem:eval-function}) preserves pseudometrics. For this
purpose they consider the notion of a \emph{well-behaved evaluation
function}~\cite[Definition 5.14]{bbkk:coalgebraic-behavioral-metrics}.

\begin{defn}\label{defn:well-behaved}
  An evaluation function $e\colon T[0,1]\to[0,1]$ is  \emph{well-behaved}
  if it satisfies the following conditions.
  \begin{enumerate}
    \item The predicate lifting $\lambda_e$ is monotone.
    \item For all $t\in T([0,1]^2)$, we have
      $d_E(e(t_1),e(t_2)) \le \lambda_e(d_E)(t)$, where
      $t_j = T\pi_j(t)$ for $j=1,2$. 
    \item $e^{-1}[\{0\}] = Ti[T\{0\}]$,
      where $i\colon\{0\}\to[0,1]$ is the inclusion map.
  \end{enumerate}
\end{defn}

\noindent This amounts to a slightly stronger condition than
standardness of the corresponding predicate lifting:

\begin{lem}\label{lem:wb-equals-standard}
  An evaluation function $e\colon T[0,1]\to[0,1]$ is well-behaved
  iff the predicate lifting~$\lambda_e$ is standard and
  $e^{-1}[\{0\}] \subseteq Ti[T\{0\}]$.
\end{lem}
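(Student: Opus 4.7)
The strategy is to exploit that monotonicity appears as a hypothesis on both sides of the biconditional, and then to match the remaining conditions in two independent pieces. First, preservation of the zero function will turn out to correspond exactly to the inclusion $Ti[T\{0\}] \subseteq e^{-1}[\{0\}]$, which is the ``other half'' of condition~(3) of well-behavedness. Second, under monotonicity, condition~(2) will be equivalent to subadditivity. Combined with the assumed inclusion $e^{-1}[\{0\}] \subseteq Ti[T\{0\}]$, this yields the claim.

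The first correspondence is essentially bookkeeping. Any zero function $\zerofun{X}\colon X\to[0,1]$ factors as $i\circ c_X$, where $c_X\colon X\to\{0\}$ is the unique map and $i\colon\{0\}\to[0,1]$ is the inclusion. By functoriality, $T\zerofun{X}(t) \in Ti[T\{0\}]$ for every $t\in TX$, so the inclusion $Ti[T\{0\}]\subseteq e^{-1}[\{0\}]$ forces $\lambda_e(\zerofun{X})(t)=e(T\zerofun{X}(t)) = 0$, i.e.\ preservation of zero. Conversely, instantiating preservation of zero at $X = \{0\}$ (where $\zerofun{\{0\}}$ coincides with $i$) yields $e(Ti(s)) = 0$ for all $s\in T\{0\}$, which is precisely the desired inclusion.

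The main work, and the main obstacle, is the equivalence between condition~(2) and subadditivity in the presence of monotonicity. For the direction from (2) to subadditivity, given $f,g\in\qfun X$ and $t\in TX$, I would set $h = f\oplus g$ and form the pairing $\langle h,g\rangle\colon X\to[0,1]^2$. Applying (2) to $T\langle h,g\rangle(t)$ and using naturality of $\lambda_e$ gives $|e(Th(t))-e(Tg(t))| \le \lambda_e(d_E\circ\langle h,g\rangle)(t)$. A case analysis on whether $f(x)+g(x)\le 1$ shows that $d_E(h(x),g(x))\le f(x)$ pointwise, so monotonicity of $\lambda_e$ improves the right-hand side to $\lambda_e(f)(t)$. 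Since the values lie in $[0,1]$, the bound $e(Th(t))\le e(Tg(t))+\lambda_e(f)(t)$ automatically yields $\lambda_e(h)(t)\le \lambda_e(f)(t)\oplus\lambda_e(g)(t)$. For the converse, I would apply subadditivity to the maps $\pi_1,\pi_2,d_E\colon[0,1]^2\to[0,1]$: since $\pi_1\le\pi_2\oplus d_E$ pointwise, monotonicity and subadditivity give $\lambda_e(\pi_1)(t)\le\lambda_e(\pi_2)(t)\oplus\lambda_e(d_E)(t)$, which rearranges to one half of~(2); the symmetric argument supplies the other half.

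The delicate point throughout is the case analysis in the first direction of the last equivalence, where one has to exploit that \L{}ukasiewicz disjunction behaves differently depending on whether $f(x)+g(x)$ exceeds~$1$. Beyond this, the pieces are structural and assemble cleanly; in particular no further assumption on~$T$ beyond functoriality is required.
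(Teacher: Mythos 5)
Your proof is correct and follows essentially the same route as the paper's: both reduce the claim to matching zero-preservation with the inclusion $Ti[T\{0\}]\subseteq e^{-1}[\{0\}]$ and then proving, under monotonicity, the equivalence of condition~(2) with subadditivity via a pairing $\langle f\oplus g,\cdot\rangle$ and naturality in one direction, and the pointwise bound $\pi_1\le\pi_2\oplus d_E$ in the other. The only differences are cosmetic (which of $f,g$ sits in the second component of the pairing) plus some extra detail you supply on the zero-preservation bookkeeping.
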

\begin{proof}
  First, note that monotonicity of $\lambda_e$ features in both notions
  and $\lambda_e$ preserves zero iff $e^{-1}[\{0\}] \supseteq Ti[T\{0\}]$.
  It remains to relate Item 2 of Definition~\ref{defn:well-behaved} with subadditivity of $\lambda_e$.
  Reformulating in terms of $\lambda_e$ gives
  \begin{equation}
    |\lambda_e(\pi_1)(t) - \lambda_e(\pi_2)(t)| \le \lambda_e(d_E)(t)
    \quad\text{for } t\in T([0,1]^2). \label{eq:wb2}
  \end{equation}
  We show that \eqref{eq:wb2} is equivalent to subadditivity of $\lambda_e$,
  given that $\lambda_e$ is monotone:
  \begin{itemize}
    \item `$\Rightarrow$':
      Let $f,g\in\qfun X, t\in TX$.
      Put $t' := T\langle f\oplus g,f\rangle(t)\in T([0,1]^2)$.
      Then, by naturality, we have $\lambda_e(\pi_1)(t') = \lambda_e(f\oplus g)(t)$
      and $\lambda_e(\pi_2)(t') = \lambda_e(f)(t)$ and
      \begin{equation*}
        \lambda_e(d_E)(t') = \lambda_e(d_E\circ\langle f\oplus g,f\rangle)(t)
        \le \lambda_e(g)(t),
      \end{equation*}
      where the last step is by monotonicity of $\lambda_e$.
      Therefore,
      $\lambda(f\oplus g)(t) - \lambda(f)(t) \le \lambda(g)(t)$ by
      \eqref{eq:wb2}.
    \item `$\Leftarrow$': Put $f = d_E, g = \pi_1\colon [0,1]^2\to[0,1]$.
      Then it is easily checked that $f\oplus g \ge \pi_2$ and therefore
      \begin{equation*}
        \lambda_e(\pi_2) \le \lambda_e(f\oplus g) \le \lambda_e(f) + \lambda_e(g)
        = \lambda_e(d_E) + \lambda_e(\pi_1)
      \end{equation*}
      by monotonicity and subadditivity of $\lambda_e$, so 
      $\lambda_e(\pi_1) - \lambda_e(\pi_2) \le \lambda_e(d_E)$.
      Similarly, we can show that 
      $\lambda_e(\pi_2) - \lambda_e(\pi_1) \le \lambda_e(d_E)$
      by swapping the roles of $\pi_1$ and $\pi_2$. \hfill\qedhere
  \end{itemize}
\end{proof}

\noindent Similar conditions also feature in Hofmann's
\emph{topological theories}~\cite[Definition 3.1]{Hofmann07}, which
consist of a monad acting on a quantale via an evaluation function and
on which his generic Wasserstein extension is based.
Explicitly, a topological theory
is defined as a triple consisting of a monad $T$, a quantale $V$, and
a map $\xi\colon TV\to V$ satisfying a number of axioms. We only
consider the case of the quantale $[0,1]^\text{op}$, with the order
given by $\ge$ and the monoid structure by $\oplus$. The first two
axioms state that $\xi$ is a $T$-algebra and can be ignored for our
purposes. The remaining axioms instantiate as follows, where as usual
$\lambda_\xi(f) = \xi\circ Tf$ is the predicate lifting associated
with $\xi$:
\begin{align*}
  (Q_\otimes) &\qquad
    \oplus \circ \langle\lambda_\xi(\pi_1),\lambda_\xi(\pi_2)\rangle \ge \lambda_\xi(\oplus) \\
  (Q_k) &\qquad
    0 \ge \lambda_\xi(\zerofun{1})(t) \text{ for every $t\in T1$, where $1$ is a singleton set} \\
  (Q'_\vee) &\qquad
    \lambda_\xi \text{ is a monotone natural transformation}
\end{align*}

\noindent Using a similar idea as in
Lemma~\ref{lem:wb-equals-standard}, we see that $(Q_\otimes)$ is
equivalent to subadditivity of~$\lambda_\xi$ and~$(Q_k)$ is equivalent
to preservation of the zero function. Finally note that~\cite[Theorem
3.5 (d)]{Hofmann07} (which states that the Wasserstein lifting
satisfies (L2)) requires that the functor satisfies the
\emph{Beck-Chevalley condition}, i.e.  preserves weak pullbacks.

If $T$ preserves weak pullbacks, the following so-called gluing lemma
holds~\cite[Lemma 5.18]{bbkk:coalgebraic-behavioral-metrics}:

\begin{lem}[Gluing]\label{lem:glue}
  Let $A$, $B$ and $C$ be sets, and let
  $t_1\in TA, t_2\in TB, t_3\in TC$.  Let $t_{12}\in\cpl{t_1}{t_2}$
  and $t_{23}\in\cpl{t_2}{t_3}$.  Then there exists
  $t_{123}\in\cpltri{t_1}{t_2}{t_3}$ such that
  \begin{equation*}
    T\langle\pi_1,\pi_2\rangle(t_{123}) = t_{12}
    \quad\text{and}\quad
    T\langle\pi_2,\pi_3\rangle(t_{123}) = t_{23},
  \end{equation*}
  where the $\pi_j$ are the projections of the product $A\times B\times C$.
  Moreover, $t_{13} := T\langle\pi_1,\pi_3\rangle(t_{123}) \in\cpl{t_1}{t_3}$.
\end{lem}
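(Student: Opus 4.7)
The plan is to exhibit $A\times B\times C$ as a pullback over the shared factor $B$ and push it through $T$ using the weak-pullback hypothesis. Specifically, the square with top edges $\langle\pi_1,\pi_2\rangle\colon A\times B\times C\to A\times B$ and $\langle\pi_2,\pi_3\rangle\colon A\times B\times C\to B\times C$, bottom edges the second projection $A\times B\to B$ and the first projection $B\times C\to B$, is a pullback in $\Set$: the two composites down to $B$ both equal $\pi_2\colon A\times B\times C\to B$, and the universal property is immediate from the universal property of the triple product. (In more concrete terms, the set-theoretic pullback $\{((a,b),(b',c))\mid b=b'\}$ is canonically identified with $A\times B\times C$ via the two displayed maps.)

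Since $T$ preserves weak pullbacks by hypothesis, applying $T$ to this square yields a weak pullback diagram. The assumptions $t_{12}\in\cpl{t_1}{t_2}$ and $t_{23}\in\cpl{t_2}{t_3}$ give the compatibility $T\pi_2(t_{12}) = t_2 = T\pi_1(t_{23})$, which is exactly the condition required at the corner. The weak-pullback property therefore produces some $t_{123}\in T(A\times B\times C)$ satisfying $T\langle\pi_1,\pi_2\rangle(t_{123}) = t_{12}$ and $T\langle\pi_2,\pi_3\rangle(t_{123}) = t_{23}$, which are precisely the equalities claimed in the lemma.

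From these two identities, the triple-coupling membership $t_{123}\in\cpltri{t_1}{t_2}{t_3}$ and the final claim $t_{13}\in\cpl{t_1}{t_3}$ are pure bookkeeping via functoriality. For instance, the identity $\pi_1 = \pi_1^{A\times B}\circ\langle\pi_1,\pi_2\rangle$ of maps out of $A\times B\times C$ yields $T\pi_1(t_{123}) = T\pi_1^{A\times B}(t_{12}) = t_1$; the analogous computations for $\pi_2$ and $\pi_3$ handle the rest. For $t_{13}$, the factorisations $\pi_1 = \pi_1^{A\times B}\circ\langle\pi_1,\pi_2\rangle = \pi_1^{A\times C}\circ\langle\pi_1,\pi_3\rangle$ and $\pi_3 = \pi_2^{B\times C}\circ\langle\pi_2,\pi_3\rangle = \pi_2^{A\times C}\circ\langle\pi_1,\pi_3\rangle$ give $T\pi_1^{A\times C}(t_{13}) = t_1$ and $T\pi_2^{A\times C}(t_{13}) = t_3$ after applying $T$ and chaining the two identities provided by the weak pullback.

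I do not anticipate a substantial obstacle: the statement is essentially a direct unpacking of what it means for $T$ to preserve weak pullbacks, applied to the canonical pullback square presenting $A\times B\times C$ over $B$. The only mild care needed is to distinguish the three families of projections living on $A\times B$, $B\times C$, and $A\times B\times C$, which is why I have momentarily decorated them with superscripts above.
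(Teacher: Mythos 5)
Your proof is correct and is essentially the standard argument: the paper itself does not prove this lemma but cites it from Baldan et al.\ (Lemma~5.18 of the reference on coalgebraic behavioural metrics), where it is established in exactly this way, by presenting $A\times B\times C$ as the pullback of $A\times B\to B\leftarrow B\times C$ and applying preservation of weak pullbacks. Your bookkeeping of the three projection families and the final functoriality computations are all accurate.
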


\noindent Using Lemma~\ref{lem:glue}, we can now show that the Wasserstein lifting
is a fuzzy lax extension. Unlike the Kantorovich lifting, the
Wasserstein lifting always preserves converse, without any further restrictions
on the set of predicate liftings.

\begin{thm}\label{thm:wasserstein-is-lax}
  If $T$ preserves weak pullbacks and $\Lambda$ is a set of standard
  predicate liftings, then
  the Wasserstein lifting $W_\Lambda$ is a converse-preserving lax extension.
  If additionally all $\lambda\in\Lambda$ are nonexpansive, then $W_\Lambda$
  is nonexpansive as well.
\end{thm}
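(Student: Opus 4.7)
The plan is to verify axioms (L0)--(L3), and then (L4) under the additional nonexpansiveness hypothesis, one by one. The arguments will draw on monotonicity, subadditivity, and zero-preservation of the predicate liftings in~$\Lambda$, together with the gluing lemma (Lemma~\ref{lem:glue}), whose reliance on weak pullback preservation will make (L2) the main obstacle.

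For (L1), $R_1\le R_2$ entails $\lambda(R_1)\le\lambda(R_2)$ pointwise by monotonicity, and both the infimum over $\cpl{t_1}{t_2}$ and the supremum over $\Lambda$ preserve the inequality. For (L0), I will use the swap map $\sigma\colon A\times B\to B\times A$, which, by functoriality of~$T$, induces a bijection $T\sigma\colon\cpl{t_1}{t_2}\to\cpl{t_2}{t_1}$; since $\rev{R}\circ\sigma=R$, naturality of $\lambda$ yields $\lambda_{B\times A}(\rev{R})(T\sigma(t))=\lambda_{A\times B}(R)(t)$, so the infima agree termwise and so do the suprema over~$\Lambda$. For the first half of~(L3), the canonical coupling $T\langle\id_A,f\rangle(t_1)\in\cpl{t_1}{Tf(t_1)}$ satisfies $\gr{f}\circ\langle\id_A,f\rangle=\zerofun{A}$, so naturality and zero-preservation give $\lambda(\gr{f})(T\langle\id_A,f\rangle(t_1))=\lambda(\zerofun{A})(t_1)=0$; hence $W_\Lambda\gr{f}(t_1,Tf(t_1))=0=\gr{Tf}(t_1,Tf(t_1))$, and for $t_2\neq Tf(t_1)$ the bound $\le 1$ is trivial. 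The second half of~(L3) then follows from the already established~(L0).

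The main obstacle is~(L2). Fixing $\lambda\in\Lambda$, $R\colon\frel{A}{B}$, $S\colon\frel{B}{C}$, $t_1\in TA$, $t_3\in TC$, and an intermediate $t_2\in TB$, I take arbitrary couplings $t_{12}\in\cpl{t_1}{t_2}$ and $t_{23}\in\cpl{t_2}{t_3}$ and invoke Lemma~\ref{lem:glue} to produce $t_{123}\in\cpltri{t_1}{t_2}{t_3}$ with the prescribed projections and with $t_{13}:=T\langle\pi_1,\pi_3\rangle(t_{123})\in\cpl{t_1}{t_3}$. The crux is the pointwise inequality
\begin{equation*}
  (R;S)\circ\langle\pi_1,\pi_3\rangle \le \bigl(R\circ\langle\pi_1,\pi_2\rangle\bigr)\oplus\bigl(S\circ\langle\pi_2,\pi_3\rangle\bigr)
\end{equation*}
on $A\times B\times C$, immediate from the definition of fuzzy relational composition. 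Applying $\lambda_{A\times B\times C}$, and then using monotonicity, subadditivity, and naturality, yields $\lambda(R;S)(t_{13})\le\lambda(R)(t_{12})\oplus\lambda(S)(t_{23})$. Since $\oplus$ on $[0,1]$ is continuous and monotone in each argument, it commutes with infima; taking infima over $t_{12}$ and $t_{23}$ independently thus bounds $\inf_{t\in\cpl{t_1}{t_3}}\lambda(R;S)(t)$ by the $\oplus$-sum of the two corresponding infima, and $\sup_\lambda(x_\lambda\oplus y_\lambda)\le\sup_\lambda x_\lambda\oplus\sup_\lambda y_\lambda$ then pushes the supremum through, giving $W_\Lambda(R;S)(t_1,t_3)\le W_\Lambda R(t_1,t_2)\oplus W_\Lambda S(t_2,t_3)$. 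A final infimum over $t_2$ produces $(W_\Lambda R;W_\Lambda S)(t_1,t_3)$, as required.

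For (L4) under the extra assumption, the case $t_1\neq t_2$ is trivial. For $t_1=t_2=t$, I will use the diagonal coupling $T\Delta_A(t)\in\cpl{t}{t}$: naturality together with $\Delta_{\epsilon,A}\circ\Delta_A=\constfun{\epsilon}{A}$ gives $\lambda(\Delta_{\epsilon,A})(T\Delta_A(t))=\lambda(\constfun{\epsilon}{A})(t)$, and this is bounded by $\epsilon$ because $\supnorm{\constfun{\epsilon}{A}-\zerofun{A}}=\epsilon$ and $\lambda(\zerofun{A})=\zerofun{TA}$ by nonexpansiveness and zero-preservation of~$\lambda$. Infimum over couplings and supremum over~$\Lambda$ preserve this bound.
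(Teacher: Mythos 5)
Your proof is correct and follows essentially the same route as the paper's: the swap map for (L0), the canonical coupling $T\langle\id_A,f\rangle(t_1)$ for (L3), the gluing lemma together with the pointwise inequality $(R;S)\circ\langle\pi_1,\pi_3\rangle\le(R\circ\langle\pi_1,\pi_2\rangle)\oplus(S\circ\langle\pi_2,\pi_3\rangle)$ plus monotonicity and subadditivity for (L2), and the diagonal coupling with nonexpansiveness and zero-preservation for (L4). If anything, you are slightly more explicit than the paper about the inf/sup bookkeeping in (L2).
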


\begin{proof}
  We show the five properties one by one:
  \begin{itemize}
    \item (L0): Let $\swap = \langle \pi_2,\pi_1 \rangle \colon
      A\times B\to B\times A$. Then $T\swap$ is an isomorphism
      between $\cpl{t_1}{t_2}$ and $\cpl{t_2}{t_1}$ and it suffices to observe
      that for every $\lambda\in\Lambda$ and $t\in T(A\times B)$,
      \begin{math}
        \lambda_{B\times A}(\rev{R})(T\swap(t)) =
        \lambda_{A\times B}(R)(t)
      \end{math}
      by naturality of $\lambda$.
    \item (L1): Immediate from the definition of $W_\Lambda$ and monotonicity of the
      predicate liftings.
    \item (L2): Let $R\colon\frel{A}{B}, S\colon\frel{B}{C}$ and let
      $t_1\in TA, t_2\in TB, t_3\in TC$. We need to show that
      $W_\Lambda (R;S)(t_1,t_3) \le W_\Lambda R(t_1,t_2) + W_\Lambda S(t_2,t_3)$.
      Let $\lambda\in\Lambda$, $t_{12}\in\cpl{t_1}{t_2}$ and $t_{23}\in\cpl{t_2}{t_3}$,
      and let $t_{123}$ and $t_{13}$ be as in Lemma~\ref{lem:glue}. We need to show
      \begin{align}
        \lambda_{A\times C}(R;S)(t_{13}) \le
        \lambda_{A\times B}(R)(t_{12}) + \lambda_{B\times C}(S)(t_{23}). \label{eq:triangle-wasserstein}
      \end{align}
      We define three functions $f_{12},f_{13},f_{23}\colon A\times B\times C\to[0,1]$ by
      $f_{12}(a,b,c) = R(a,b)$, $f_{23}(a,b,c) = S(b,c)$, and $f_{13}(a,b,c) = (R;S)(a,c)$.
      Then, as $f_{13} \le f_{12} \oplus f_{23}$, we obtain
      \begin{equation*}
        \lambda_{A\times B\times C}(f_{13})(t_{123}) \le
          \lambda_{A\times B\times C}(f_{12})(t_{123}) +
            \lambda_{A\times B\times C}(f_{23})(t_{123})
      \end{equation*}
        by monotonicity and subadditivity of $\lambda$, which is equivalent to
        (\ref{eq:triangle-wasserstein}) by naturality of $\lambda$.
    \item (L3):
      Let $f\colon A\to B$, $t_1\in TA$ and $\lambda\in\Lambda$.
      We need to find $t\in\cpl{t_1}{Tf(t_1)}$ such that
      $\lambda_{A\times B}(\gr{f})(t) = 0$.
      Indeed, take $t = T\langle\id_A,f\rangle(t_1)$.
      Then $T\pi_1(t) = T\id_A(t_1) = t_1$ and $T\pi_2(t) = Tf(t_1)$, and,
      as $\lambda$ is natural and preserves zero,
      \begin{equation*}
        \lambda_{A\times B}(\gr{f})(t) = \lambda_A(\gr{f}\circ\langle\id_A,f\rangle)(t_1)
        = \lambda_A(\zerofun{A})(t_1) = 0.
        \end{equation*}
      This proves $W_\Lambda(\gr{f}) \le \gr{Tf}$. The second clause $W_\Lambda(\rev{\gr{f}}) \le \rev{\gr{Tf}}$ now follows using (L0).
      \item (L4): Let $A$ be a set, $\epsilon>0$, $t_1\in TA$ and
        $\lambda\in\Lambda$.  It is enough to find
        $t\in\cpl{t_1}{t_1}$ such that
        $\lambda_{A\times A}(\Delta_{\epsilon,A})(t) \le\epsilon$.
        Indeed, take $t = T\langle\id_A,\id_A\rangle(t_1)$. Then
        $T\pi_1(t) = T\pi_2(t) = t_1$, and with $\epsilon_A\colon A\to[0,1]$
        being the constant map $a\mapsto\epsilon$ we derive
      \begin{align*}
        \lambda_{A\times A}(\Delta_{\epsilon,A})(t)
        = \lambda_A(\constfun{\epsilon}{A})(t_1)
        \le \supnorm{\lambda_A(\constfun{\epsilon}{A}) - \lambda_A(\zerofun{A})}
        \le \epsilon,
      \end{align*}
      using that $\lambda$ is natural, nonexpansive and preserves zero.\hfill\qedhere
  \end{itemize}
\end{proof}

\begin{expl}[Wasserstein liftings]\label{expl:wasserstein}\hfill
  \begin{enumerate}
  \item \label{item-kr-duality} Similar to the case of the standard
    Kantorovich lifting $K$
    (Example~\ref{expl:kantorovich}.\ref{item:prob-kantorovich}), the
    standard Wasserstein lifting $W$ of the discrete distribution
    functor $\dfun$ arises as an instance of the generic
    Wasserstein lifting, for the same predicate lifting
    $\Diamond(f)(\mu) = \expect{\mu}(f)$. In fact, it is well
    known~\cite[Theorem 5.10]{Villani08} that $K = W$, a fact known
    as \emph{Kantorovich-Rubinstein duality}.
  \item \label{item:hausdorff-is-lax} The Hausdorff lifting~$H$
    (Example~\ref{expl:hausdorff}) is the Wasserstein lifting
    $W_{\{\lambda\}}$ for $\Pow$, where $\lambda_X(f)(A) = \sup f[A]$
    for $A\subseteq X$. To see this, let $R\colon\frel{A}{B}$, and
    let $U\subseteq A$ and $V\subseteq B$. Then we show that $HR(U,V)
    = W_{\{\lambda\}}R(U,V)$ by proving the two inequalities separately:
    \begin{itemize}
      \item `$\le$': Let $Z\in\cpl{U}{V}$. Then for every $a\in U$ there
        exists $b\in V$ such that $(a,b)\in Z$, so $\inf_{b\in V} R(a,b)
        \le \sup R[Z]$. Thus, we have $\sup_{a\in U}\inf_{b\in V} R(a,b)
        \le \sup R[Z]$, and, by a symmetrical argument, $\sup_{b\in
        V}\inf_{a\in U} R(a,b) \le \sup R[Z]$.
      \item `$\ge$': Let $\epsilon>0$. It suffices to find a coupling
        $Z\in\cpl{U}{V}$ such that $\sup R[Z] \le HR(U,V) + \epsilon$. So
        let $\epsilon>0$. We construct functions $f\colon U\to V$ and
        $g\colon V\to U$ as follows: For each $a\in U$ choose $f(a)\in V$
        such that $R(a,f(a)) \le \inf_{b\in V} R(a,b) + \epsilon$.
        Similarly, for each $b\in V$ choose $g(b)\in U$ such that
        $R(g(b),b) \le \inf_{a\in U} R(a,b) + \epsilon$. Now put $Z =
        \{(a,f(a)) \mid a\in U\} \cup \{(g(b),b) \mid b\in V\}$. Clearly,
        $Z\in\cpl{U}{V}$ and by construction,
        \begin{equation*} \textstyle
          \sup R[Z] = \max(\sup_{a\in U} R(a,f(a)), \sup_{b\in V} R(g(b),b))
          \le HR(U,V) + \epsilon.
        \end{equation*}
    \end{itemize}

  \item\label{item:convex-powerset} The convex powerset functor
    $\Conv$, whose coalgebras combine probabilistic branching and
    nondeterminism~\cite{BonchiEA17}, maps a set $X$ to the set of
    nonempty convex subsets of~$\dfun X$. The Wasserstein
    lifting~$W_{\{\lambda\}}$, where $\lambda_X(f)(A) = \sup_{\mu\in
    A} \expect{\mu}(f)$ for $A\in\Conv X$, is a nonexpansive lax
    extension of~$\Conv$. Of course,~$\lambda$ is just the composite
    of the predicate liftings respectively defining the standard
    Kantorovich/Wasserstein and Hausdorff liftings.
    $W_{\{\lambda\}}$ indeed coincides with the composite $HW$ of these
    liftings (for which a quantitative equational axiomatization has
    recently been given by Mio and Vignudelli~\cite{MioVignudelli20}):
    
    Let $R\colon\frel{A}{B}$, and let $U\in\Conv A$ and $V\in\Conv
    B$. We show $W_{\{\lambda\}}(R)(U,V) = HW(R)(U,V)$. There are two
    inequalities:
    \begin{itemize}
      \item `$\ge$': Let $Z\in\cplind{\Conv}{U}{V}$. We put $Y =
        \Pow\langle\dfun\pi_1,\dfun\pi_2\rangle(Z)$. Then $\Pow\pi_1(Y) =
        \Pow\dfun\pi_1(Z) = \Conv\pi_1(Z) = U$ and similarly $\Pow\pi_2(Y)
        = V$, so that $Y\in\cplind{\Pow}{U}{V}$. Now, note that for every
        $\mu\in\dfun(A\times B)$ we have that
        $\expect{\mu}(R) \ge WR(\dfun\pi_1(\mu),\dfun\pi_2(\mu))$ and
        therefore
        \begin{equation*}
          \sup_{\mu\in Z}\expect{\mu}(R)
          \ge \sup_{(\mu_1,\mu_2)\in Y} WR(\mu_1,\mu_2)
          \ge HW(R)(U,V).
        \end{equation*}
      \item `$\le$': Let $Y\in\cplind{\Pow}{U}{V}$ and
        $\epsilon>0$. It suffices to find
        $Z\in\cplind{\Conv}{\mu_1}{\mu_2}$ such that
        \begin{equation*}
          \sup_{\mu\in Z}\expect{\mu}(R)
          \le \sup_{(\mu_1,\mu_2)\in Y} WR(\mu_1,\mu_2) + \epsilon.
        \end{equation*}
        For every $(\mu_1,\mu_2)\in\dfun A\times\dfun B$ there exists some
        $\mu\in\cplind{\dfun}{U}{V}$ such that $\expect{\mu}(R)\le
        WR(\mu_1,\mu_2)+\epsilon$. Let $Z'$ be a set consisting of one
        such $\mu$ for every pair $(\mu_1,\mu_2)\in Y$ and put
        $Z=\convx(Z')$, where $\convx$ is convex hull. Then we have
        \begin{equation*}
          \Conv\pi_1(Z)
          = \Pow\dfun\pi_1(\convx(Z'))
          = \convx (\Pow\dfun\pi_1(Z')) = \convx(U) = U.
        \end{equation*}
        Here we made use of the fact that $\dfun\pi_1$ is linear when
        considered as a map $\mathbb{R}^{A\times B} \to \mathbb{R}^A$, and
        linear maps preserve convex sets. We similarly have
        $\Conv\pi_2(Z)=V$, so that $Z\in\cplind{\Conv}{U}{V}$. Finally, we
        note that taking expected values is a linear operation, so if $\mu
        = \sum_{i=1}^n p_i\mu_i$ is a convex combination of probability
        measures, then $\expect{\mu} = \sum_{i=1}^n p_i\expect{\mu_i} \le
        \max_{i=1}^n \expect{\mu_i}$. Therefore we have, as required,
        \begin{equation*}
          \sup_{\mu\in Z}\expect{\mu}(R)
          = \sup_{\mu\in Z'}\expect{\mu}(R)
          \le \sup_{(\mu_1,\mu_2)\in Y} WR(\mu_1,\mu_2) + \epsilon.\qedhere
        \end{equation*}
      \end{itemize}
  \end{enumerate}
\end{expl}

\section{Lax Extensions as Kantorovich
  Liftings}\label{sec:lax-kantorovich}

We proceed to establish the central result that every fuzzy lax
extension is a Kantorovich lifting for some suitable set $\Lambda$ of
predicate liftings, and moreover we characterize the Kantorovich
liftings induced by nonexpansive predicate liftings as precisely the
nonexpansive lax extensions. For a given fuzzy lax extension~$L$, the
equality $K_\Lambda R = LR$ splits into two inequalities, one of which
is characterized straightforwardly:

\begin{defn}
  An $n$-ary predicate lifting~$\lambda$ \emph{preserves
    nonexpansiveness} if for all fuzzy relations~$R$ and all
  $R$-nonexpansive pairs $(f_1,g_1),\dots,(f_n,g_n)$, the pair
  \begin{equation*}
    (\lambda_A(f_1,\dots,f_n), \lambda_B(g_1,\dots,g_n))
  \end{equation*}
  is $LR$-nonexpansive. A set~$\Lambda$ of predicate liftings
  \emph{preserves nonexpansiveness} if all $\lambda\in\Lambda$ preserve
  nonexpansiveness.
\end{defn}
\begin{lem}\label{lem:adequacy}
  We have $K_\Lambda R \le LR$ for all fuzzy relations $R$ if and only if $\Lambda$ preserves
  nonexpansiveness.
\end{lem}
\begin{defn}[Separation]\label{def:separation}
  A set $\Lambda$ of predicate liftings is \emph{separating} for $L$
  if $K_\Lambda R \ge LR$ for all fuzzy relations $R$.
\end{defn}
\noindent To motivate Definition~\ref{def:separation}, recall from
Section~\ref{sec:prelim} that in the two-valued setting a set
$\Lambda$ of predicate liftings (for simplicity, assumed to be unary)
is separating if
\begin{equation*}
  t_1\neq t_2 \implies
  \exists\lambda\in\Lambda, A'\subseteq A\text{ such that }
  t_1 \in\lambda_A(A') \not\leftrightarrow
  t_2 \in\lambda_A(A')
\end{equation*}
for $t_1,t_2\in TA$. Analogously, unfolding definitions in the
inequality $K_\Lambda R \ge LR$ (and again assuming unary liftings),
we arrive at the condition that for all $t_1\in TA, t_2\in TB$,
$\epsilon > 0$,
\begin{equation*}
  LR(t_1,t_2) > \epsilon\implies\exists \lambda\in\Lambda,
  (f,g)\text{ $R$-nonexpansive such that } \lambda_A(f)(t_1) - \lambda_B(g)(t_2) > \epsilon.
\end{equation*}
\noindent We are now ready to state our main result, which says that
all lax extensions are Kantorovich:

\begin{thm}\label{thm:kantorovich-lax}
  If $L$ is a finitarily separable lax extension of $T$, then there
  exists a set~$\Lambda$ of monotone predicate liftings that preserves
  nonexpansiveness and is separating for $L$, i.e.\ $L=K_\Lambda$.
  Moreover,~$L$ is nonexpansive iff $\Lambda$ can be chosen in such a
  way that all $\lambda\in\Lambda$ are nonexpansive.
\end{thm}
\noindent This result can be seen as a fuzzy version of the statements
that every finitary functor has a separating set of two-valued
modalities (and hence an expressive two-valued coalgebraic modal
logic)~\cite[Corollary~45]{Schroder08}, and that more specifically,
every finitary functor equipped with a diagonal-preserving lax
extension has a separating set of two-valued \emph{monotone} predicate
liftings~\cite[Theorem~14]{MartiVenema15}. We will detail in
Section~\ref{sec:cml} how Theorem~\ref{thm:kantorovich-lax} implies
the existence of characteristic modal logics. The proof of
Theorem~\ref{thm:kantorovich-lax} uses a quantitative version of the
so-called Moss modalities~\cite{KurzLeal09,MartiVenema15}. The
construction of these modalities relies on the fact that~$T_\omega$
can be presented by algebraic operations of finite arity:
\begin{defn}
  A \emph{finitary presentation} of $T_\omega$ consists of a
  \emph{signature} $\Sigma$ of operations with given finite arities,
  and for each $\sigma\in\Sigma$ of arity~$n$ a natural transformation
  $\sigma\colon (-)^n \Rightarrow T_\omega$ such that every element of~$T_\omega X$
  has the form $\sigma_X(x_1,\dots,x_n)$ for some $\sigma\in\Sigma$.
\end{defn}
\noindent For the remainder of this section, we fix a finitary
presentation of~$T_\omega$ with signature~$\Sigma$ (such a
presentation always exists~\cite[Example~21]{MartiVenema15}) and assume a finitarily separable fuzzy
lax extension~$L$ of~$T$. To derive predicate liftings from the
operations in $\Sigma$, we make use of the fuzzy elementhood relation
$\elem_X$ (indexed over arbitrary sets $X$), where
$\elem_X\colon\frel{X}{\qfun X}$ is given by $\elem_X(x,f) = f(x)$.

\begin{defn}\label{defn:moss-lifting}
  Let $\sigma\in\Sigma$ be $n$-ary. The \emph{Moss lifting}
  $\mu^\sigma\colon\qfun^n\Rightarrow \qfun\circ T$ is defined by
  \begin{equation*}
    \mu^\sigma_X(f_1,\dots,f_n)(t) = L\elem_X(t,\sigma_{\qfun X}(f_1,\dots,f_n)).
  \end{equation*}
\end{defn}

\noindent It follows from naturality of $\sigma$ and $L$
(Lemma~\ref{lem:transform}) that $\mu^\sigma$ is indeed
natural and therefore a predicate lifting, as shown next.
Indeed, for any $g\colon A\to B$, $f_1,\dots,f_n\in\qfun B$ and
$t\in TB$ we note that $\elem_A\circ(\id\times\qfun g)=\elem_B\circ(g\times\id)$
by definition of $\elem_A$ and $\elem_B$ and thus
\begin{align*}
  &\mu^\sigma_A(f_1\circ g,\dots,f_n\circ g)(t)\\
  &= L\elem_A(t,\sigma_{\qfun A}(f_1\circ g,\dots,f_n\circ g)) &&\myby{definition of $\mu^\sigma$}\\
  &= L\elem_A(t,T\qfun g(\sigma_{\qfun B}(f_1,\dots,f_n))) &&\myby{$\sigma$ natural}\\
  &= L(\elem_A\circ(\id\times\qfun g))(t,\sigma_{\qfun B}(f_1,\dots,f_n)) &&\myby{Lemma~\ref{lem:transform}}\\
  &= L(\elem_B\circ(g\times\id))(t,\sigma_{\qfun B}(f_1,\dots,f_n))\\
  &= L\elem_B(Tg(t),\sigma_{\qfun B}(f_1,\dots,f_n)) &&\myby{Lemma~\ref{lem:transform}}\\
  &= \mu^\sigma_B(f_1,\dots,f_n)(Tg(t)). &&\myby{definition of $\mu^\sigma$}
\end{align*}
\noindent
We are now in a position to present the proof of
Theorem~\ref{thm:kantorovich-lax}: We take $\Lambda$ to be the set of
Moss liftings and show the required properties of $\Lambda$ one by one.

\begin{conv}
  Throughout this proof, all
  statements and proofs will be written for the case where all
  $\sigma\in\Sigma$ (and therefore the induced Moss liftings) are
  unary.  This is purely in the interest of readability; the general
  case requires only more indexing.
\end{conv}

\paragraph{Monotonicity} The proof is based on the following auxiliary
fact about pairs of elements that are mapped to $0$.

\begin{lem}\label{lem:operator-preserves-dist-0}
  Let $\sigma\in\Sigma$ and $R\colon \frel{A}{B}$. Then
  for all $a\in A,b\in B$ with $R(a,b) = 0$ we have $LR(\sigma_A(a),\sigma_B(b)) = 0$.
\end{lem}
\begin{proof}
  Put $R_0 = \{(a,b) \in A\times B \mid R(a,b) = 0\}$ and
  consider the projection maps $\pi_1\colon R_0\to A$ and
  $\pi_2\colon R_0\to B$. Then it is easy to see that
  $R \le \rev{\gr{\pi_1}};\gr{\pi_2}$ (noting again that we read~$0$
  as `related' and~$1$ as `unrelated'; in particular recall
  Convention~\ref{conv:comp} and Definition~\ref{def:gr}). Using the
  axioms of lax extensions, we obtain 
  \begin{equation}\label{eq:lr-estimate}
    LR \le L(\rev{\gr{\pi_1}};\gr{\pi_2}) \le L \rev{\gr{\pi_1}}; L \gr{\pi_2}
    \le \rev{\gr{T\pi_1}};\gr{T\pi_2}.
  \end{equation}
  For $(a,b)\in R_0$, put $t = \sigma_{A\times B}((a,b))$, so that $T\pi_1(t) =
  \sigma_A(a)$ and $T \pi_2(t) = \sigma_B(b)$ by naturality of $\sigma$.
  This means that
  $(\rev{\gr{T\pi_1}};\gr{T\pi_2})(\sigma(a),\sigma(b)) = 0$,
  so that by~\eqref{eq:lr-estimate} we have $LR(\sigma_A(a),\sigma_B(b)) = 0$.
\end{proof}

\begin{lem}\label{lem:moss-monotone}
  Let $\sigma\in\Sigma$. Then the Moss lifting $\mu^\sigma$ is monotone.
\end{lem}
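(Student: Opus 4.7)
The approach is to reduce monotonicity of $\mu^\sigma$ to Lemma~\ref{lem:operator-preserves-dist-0} by choosing a suitable fuzzy relation on $\qfun X$ that encodes pointwise domination as ``being at distance $0$''. Concretely, given $f \le g$ in $\qfun X$, I would consider the fuzzy relation $R\colon\frel{\qfun X}{\qfun X}$ defined by
\begin{equation*}
  R(f_1, f_2) = \sup_{x \in X} f_2(x) \ominus f_1(x).
\end{equation*}
Then $R(g, f) = 0$, since $f \le g$, and Lemma~\ref{lem:operator-preserves-dist-0} yields $LR(\sigma_{\qfun X}(g), \sigma_{\qfun X}(f)) = 0$. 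The point of this $R$ is that it is precisely calibrated to make the fuzzy elementhood relation ``absorb'' it on the right.

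To connect this with the Moss lifting, I would next verify that $\elem_X \le \elem_X; R$ as fuzzy relations $\frel{X}{\qfun X}$: for every $x \in X$ and $f_1, f_2 \in \qfun X$, one has $f_2(x) \le f_1(x) \oplus (f_2(x) \ominus f_1(x)) \le f_1(x) \oplus R(f_1, f_2)$, and taking the infimum over~$f_1$ gives $\elem_X(x, f_2) \le (\elem_X; R)(x, f_2)$. Applying (L1) and (L2) to this inequality then gives $L\elem_X \le L\elem_X; LR$. Evaluating both sides at $(t, \sigma_{\qfun X}(f))$ and using $\sigma_{\qfun X}(g)$ as the intermediate element of the composition, I would compute
\begin{align*}
  \mu^\sigma_X(f)(t)
  &= L\elem_X(t, \sigma_{\qfun X}(f)) \\
  &\le L\elem_X(t, \sigma_{\qfun X}(g)) \oplus LR(\sigma_{\qfun X}(g), \sigma_{\qfun X}(f)) \\
  &= L\elem_X(t, \sigma_{\qfun X}(g)) \oplus 0
   = \mu^\sigma_X(g)(t),
\end{align*}
which is the desired inequality.

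The $n$-ary case proceeds in exactly the same way, either by changing one argument at a time or by working componentwise on $(\qfun X)^n$ with the relation $R((f_1,\dots,f_n),(g_1,\dots,g_n)) = \max_i \sup_x (g_i(x) \ominus f_i(x))$. There is no genuine obstacle: everything is essentially forced once the right $R$ is chosen, i.e.\ an $R$ for which $R(g,f)$ vanishes whenever $f \le g$ and for which $\elem_X \le \elem_X; R$. The mild subtlety is noticing that the triangle inequality~(L2) together with Lemma~\ref{lem:operator-preserves-dist-0} is the correct mechanism to transport the ``distance'' on $\qfun X$ through~$\sigma_{\qfun X}$ and into an inequality between values of~$\mu^\sigma_X$.
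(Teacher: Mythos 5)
Your proof is correct and follows essentially the same route as the paper: you use the same auxiliary relation $R(g,f)=\sup_{x}f(x)\ominus g(x)$, the same two key facts ($R(g,f)=0$ when $f\le g$, and $\elem_X\le\elem_X;R$), and the same combination of Lemma~\ref{lem:operator-preserves-dist-0} with (L1) and (L2) to conclude. No gaps.
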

\begin{proof}
  We make use of the fuzzy relation $R\colon\frel{\qfun X}{\qfun X}$
  given by $R(g,f) = \sup_{x\in X} f(x)\ominus g(x)$,
  which we claim to satisfy the following two useful properties:
  \begin{gather}
    R(g,f) = 0 \iff f \le g \label{eq:ominus-le} \\
    \elem_X \;\;\le\;\; \elem_X;R \label{eq:elem-ominus}
  \end{gather}
  The first property is clear; the second property amounts to showing
  that $f(x) \le g(x) \oplus R(g,f)$ for all $x\in X$ and all
  $f,g\in\qfun X$ and is easily shown by case analysis on the
  definition of $\oplus$.

  Let $f,g\in\qfun X$ with $f\le g$ and let $t\in TX$. First, we note
  that by \eqref{eq:ominus-le} we have $R(g,f) = 0$ and thus
  $LR(\sigma_{\qfun X}(g),\sigma_{\qfun X}(f)) = 0$ by
  Lemma~\ref{lem:operator-preserves-dist-0}. Second, by
  \eqref{eq:elem-ominus} and the axioms of lax extensions we have
  $L\elem_X \le L(\elem_X;R) \le L\elem_X;LR$. Therefore:
  \begin{align*}
    \mu^\sigma_X(f)(t) = L\elem_X(t,\sigma_{\qfun X}(f)) &\le (L\elem_X;LR)(t,\sigma_{\qfun X}(f)) \\
    &\le L\elem_X(t,\sigma_{\qfun X}(g)) \oplus LR(\sigma_{\qfun X}(g),\sigma_{\qfun X}(f))
      = \mu^\sigma_X(g)(t). \tag*{\qedhere}
  \end{align*}
\end{proof}

\paragraph{Preservation of nonexpansiveness}

\begin{lem}\label{lem:moss-pres-nonexp}
  Let $\sigma\in\Sigma$. Then the Moss lifting $\mu^\sigma$ preserves
  nonexpansiveness.
\end{lem}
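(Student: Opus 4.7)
The plan is to reduce the claim to the axioms (L1) and (L2) of the lax extension together with Lemma~\ref{lem:operator-preserves-dist-0}. Concretely, given an $R$-nonexpansive pair $(f,g)$ with $R\colon\frel{A}{B}$, we must show
\begin{equation*}
  L\elem_A(t_1,\sigma_{\qfun A}(f)) \le LR(t_1,t_2)\oplus L\elem_B(t_2,\sigma_{\qfun B}(g))
\end{equation*}
for all $t_1\in TA$, $t_2\in TB$. The idea is to find a third fuzzy relation $R'\colon\frel{\qfun B}{\qfun A}$ such that (a)~$\elem_A \le R;\elem_B;R'$ as fuzzy relations in $\frel{A}{\qfun A}$, and (b)~$R'(g,f)=0$. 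Then (L1) and (L2) give $L\elem_A \le LR;L\elem_B;LR'$, and applying this at $(t_1,\sigma_{\qfun A}(f))$ and bounding the infimum by the particular choice of intermediate points $t_2\in TB$ and $\sigma_{\qfun B}(g)\in T\qfun B$ yields exactly the desired inequality once Lemma~\ref{lem:operator-preserves-dist-0} is invoked on $R'$ and~$\sigma$ to conclude $LR'(\sigma_{\qfun B}(g),\sigma_{\qfun A}(f))=0$.

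The main step is to cook up~$R'$. I would take
\begin{equation*}\textstyle
  R'(f',h) \;=\; \sup_{a\in A,\,b\in B}\; h(a)\ominus(R(a,b)\oplus f'(b)).
\end{equation*}
Property~(b) is then immediate: the assumption that $(f,g)$ is $R$-nonexpansive reads $f(a)\le R(a,b)\oplus g(b)$ for all $a,b$, so every summand in the supremum defining $R'(g,f)$ is~$0$. For property~(a), unfolding the composition~(Convention~\ref{conv:comp}) reduces the claim to showing $h(a) \le R(a,b)\oplus f'(b)\oplus R'(f',h)$ for all $a,h,b,f'$, which follows by choosing $a'=a$, $b'=b$ inside the supremum defining~$R'(f',h)$ and then doing a short case analysis on whether $h(a)$ exceeds $R(a,b)\oplus f'(b)$ or not (the \L{}ukasiewicz truncation only helps).

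Once~$R'$ is in hand, the rest is mechanical chaining. Monotonicity~(L1) gives $L\elem_A \le L(R;\elem_B;R')$; laxness~(L2) gives $L(R;\elem_B;R')\le LR;L\elem_B;LR'$; and evaluating this composite at $(t_1,\sigma_{\qfun A}(f))$ while bounding the infimum using the witnesses $t_2$ and $\sigma_{\qfun B}(g)$ produces $LR(t_1,t_2)\oplus L\elem_B(t_2,\sigma_{\qfun B}(g))\oplus LR'(\sigma_{\qfun B}(g),\sigma_{\qfun A}(f))$. The last summand vanishes by Lemma~\ref{lem:operator-preserves-dist-0} applied to~$R'$ and~$\sigma$, since $R'(g,f)=0$ by~(b) and $\sigma$ is natural. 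Unfolding the definition of $\mu^\sigma$ on both sides finishes the proof.

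The main obstacle is designing~$R'$ so that both~(a) and~(b) hold: one needs a relation that simultaneously ``witnesses'' the definition of $\elem_A$ through the detour via $\elem_B$ and~$R$, and collapses to~$0$ precisely on the pair $(g,f)$ provided by the $R$-nonexpansivity assumption. The definition above is essentially forced by these two requirements, and the $R$-nonexpansivity of $(f,g)$ is exactly what is needed to make~(b) hold.
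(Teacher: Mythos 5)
Your proof is correct and follows the same overall strategy as the paper's: bound $\elem_A$ from above by a composite $R;\elem_B;S$ for a suitable third relation $S$ that $L$ sends to $0$ on the pair $(\sigma_{\qfun B}(g),\sigma_{\qfun A}(f))$, then chain with (L1) and (L2) and instantiate the infimum at the witnesses $t_2$ and $\sigma_{\qfun B}(g)$. The difference lies in the choice of $S$. The paper first reduces to the case $g=R[f]$ (using monotonicity of $\mu^\sigma$, i.e.\ Lemma~\ref{lem:moss-monotone}, together with Lemma~\ref{lem:rel-nonexp-props}) and then takes $S=\rev{\gr{R[-]}}$, the converse graph of the companion map; the vanishing of $LS$ on $(\sigma(R[f]),\sigma(f))$ is then immediate from axiom (L3) and naturality of $\sigma$. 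You instead design a genuinely fuzzy relation $R'$ that vanishes on $(g,f)$ for an \emph{arbitrary} $R$-nonexpansive pair, which lets you skip the monotonicity reduction altogether, at the price of invoking Lemma~\ref{lem:operator-preserves-dist-0} (itself proved from (L2), (L3) and naturality of $\sigma$) to conclude $LR'(\sigma_{\qfun B}(g),\sigma_{\qfun A}(f))=0$. Both facts you need about $R'$ check out: $R'(g,f)=0$ is exactly the $R$-nonexpansivity of $(f,g)$, and the pointwise bound $h(a)\le R(a,b)\oplus f'(b)\oplus R'(f',h)$ follows from $R'(f',h)\ge h(a)\ominus\bigl(R(a,b)\oplus f'(b)\bigr)$ by the case analysis you indicate. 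The two arguments are of comparable length; yours is marginally more self-contained in that preservation of nonexpansivity no longer depends on the monotonicity lemma, so the two properties of the Moss liftings are established independently.
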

\begin{proof}
  Let $R\colon\frel{A}{B}$ and consider the map $R[-]\colon\qfun
  A\to\qfun B, f\mapsto R[f]$. First, we show that
  \begin{equation}
    \elem_A \;\;\le\;\; R;\elem_B;\rev{\gr{R[-]}}. \label{eq:elem-r}
  \end{equation}
  Let $f\in\qfun A$, $g\in\qfun B$, and let $a\in A$, $b\in B$.
  We need to show that
  \begin{equation*}
    \elem_A(a,f) \le  R(a,b) \oplus \elem_B(b,g) \oplus \rev{\gr{R[-]}}(g,f).
  \end{equation*}
  If $g\neq R[f]$, this holds trivially as $\rev{\gr{R[-]}}(g,f) = 1$.
  Otherwise, if $g=R[f]$, then we have $f(a) \ominus R(a,b) \le g(b)$ by definition,
  and hence
  \begin{equation*}
    \elem_A(a,f) = f(a) 
      \le R(a,b) \oplus g(b) \le R(a,b) \oplus \elem_B(b,g) \oplus \rev{\gr{R[-]}}(g,f).
  \end{equation*}
  Now let $(f,g)$ be $R$-nonexpansive and let $t_1\in TA$ and
  $t_2\in TB$. We need to show that
  \begin{equation}
    \mu^\sigma_A(f)(t_1) - \mu^\sigma_B(g)(t_2) \le LR(t_1,t_2). \label{eq:moss-nonexp}
  \end{equation}
  By monotonicity of $\mu^\sigma$ and Lemma~\ref{lem:rel-nonexp-props}
  it is enough to show this for the case $g = R[f]$. In this case we
  have $TR[-](\sigma_{\qfun A}(f)) =\sigma_{\qfun B}(g)$ by naturality of $\sigma$.
  Applying the lax extension laws to \eqref{eq:elem-r}, we have
  $L\elem_A \le LR;L\elem_B;\rev{\gr{TR[-]}}$, so that
  \begin{align*}
    L\elem_A(t_1,\sigma_{\qfun A}(f))
      &\le LR(t_1,t_2) \oplus L\elem_B(t_2,\sigma_{\qfun B}(g)) \oplus \rev{\gr{TR[-]}}(\sigma_{\qfun B}(g),\sigma_{\qfun A}(f)) \\
      &=  LR(t_1,t_2) \oplus L\elem_B(t_2,\sigma_{\qfun B}(g)),
  \end{align*}
  and \eqref{eq:moss-nonexp} follows by rearranging.
\end{proof}

\paragraph{Separation} To show that $\Lambda$ is separating for $L$,
we need to make use of the fact that $L$ is finitarily separable.

\begin{lem}\label{lem:moss-separating}
  $\Lambda$ is separating for $L$, that is, $L \le K_\Lambda$.
\end{lem}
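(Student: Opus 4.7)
The plan is to establish $LR(t_1, t_2) \le K_\Lambda R(t_1, t_2)$ for every fuzzy relation $R\colon\frel{A}{B}$ and every $t_1 \in TA$, $t_2 \in TB$. I will first settle the case $t_2 \in T_\omega B$ (in which I can invoke the finitary presentation), and then extend to arbitrary $t_2$ using finitary separability of $L$.

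The main idea for the finitary case is to factor $R$ through the elementhood relation. Write $R^{-}\colon B \to \qfun A$ for the currying $b \mapsto R(-,b)$; then $R = \elem_A \circ (\id_A \times R^{-}) = \elem_A;\rev{\gr{R^{-}}}$ by Lemma~\ref{lem:comp-graph}. Applying $L$ together with (L2) and (L3) gives $LR \le L\elem_A;\rev{\gr{TR^{-}}}$. Since $t_2 \in T_\omega B$, the finitary presentation provides $t_2 = \sigma_B(b_0)$ for some $\sigma \in \Sigma$ and $b_0 \in B$; naturality of $\sigma$ then yields $TR^{-}(t_2) = \sigma_{\qfun A}(R^{b_0})$, where $R^{b_0} := R(-, b_0)$. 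Evaluating the composite at $(t_1, t_2)$ (the $\rev{\gr{TR^{-}}}$ factor forces the intermediate choice to be $TR^{-}(t_2)$), we arrive at
\begin{equation*}
  LR(t_1, t_2) \le L\elem_A(t_1, \sigma_{\qfun A}(R^{b_0})) = \mu^\sigma_A(R^{b_0})(t_1).
\end{equation*}

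To produce a suitable partner on the right, take $g := R[R^{b_0}]$, the canonical companion of Definition~\ref{defn:companion}; then $(R^{b_0}, g)$ is $R$-nonexpansive by Lemma~\ref{lem:rel-nonexp-props}, and $g(b_0) = \sup_{a} R(a, b_0) \ominus R(a, b_0) = 0$, i.e.\ $\elem_B(b_0, g) = 0$. Lemma~\ref{lem:operator-preserves-dist-0} applied to the relation $\elem_B$ then forces $\mu^\sigma_B(g)(t_2) = L\elem_B(\sigma(b_0), \sigma(g)) = 0$. Therefore
\begin{equation*}
  K_\Lambda R(t_1, t_2) \ge \mu^\sigma_A(R^{b_0})(t_1) \ominus \mu^\sigma_B(g)(t_2) = \mu^\sigma_A(R^{b_0})(t_1) \ge LR(t_1, t_2),
\end{equation*}
settling the finitary case.

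For the general case, let $\epsilon > 0$. By finitary separability, pick $t_2' \in T_\omega B$ with both $L\Delta_B(t_2, t_2') \le \epsilon$ and $L\Delta_B(t_2', t_2) \le \epsilon$. Since $R = R;\Delta_B$, axiom (L2) for $L$ gives $LR(t_1, t_2) \le LR(t_1, t_2') \oplus L\Delta_B(t_2', t_2) \le LR(t_1, t_2') \oplus \epsilon$. The finitary case yields $LR(t_1, t_2') \le K_\Lambda R(t_1, t_2')$. Since $K_\Lambda$ is itself a lax extension by Theorem~\ref{thm:kantorovich-is-lax}, applying (L2) to $K_\Lambda$ together with the already-established inequality $K_\Lambda \le L$ (Lemma~\ref{lem:adequacy} with Lemma~\ref{lem:moss-pres-nonexp}) yields $K_\Lambda R(t_1, t_2') \le K_\Lambda R(t_1, t_2) \oplus K_\Lambda \Delta_B(t_2, t_2') \le K_\Lambda R(t_1, t_2) \oplus \epsilon$. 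Chaining, $LR(t_1, t_2) \le K_\Lambda R(t_1, t_2) \oplus 2\epsilon$; letting $\epsilon \to 0$ gives the claim. The conceptual crux is the factoring $R = \elem_A;\rev{\gr{R^{-}}}$, which reduces the behaviour of $L$ on an arbitrary relation to its behaviour on $\elem$—exactly where the Moss liftings live by definition; everything else is triangle-inequality bookkeeping.
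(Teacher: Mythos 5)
Your proof is correct, and its conceptual core coincides with the paper's: both arguments factor $R$ through the elementhood relation via the currying map $R^{-}\colon B\to\qfun A$ (the paper calls it $s$), so that $LR(t_1,t_2)$ is controlled by $L\elem_A(t_1,TR^{-}(t_2))$, and both pair the resulting function with its companion under $R[-]$ to produce the $R$-nonexpansive pair fed into $K_\Lambda$. Where you genuinely diverge is in the handling of non-finitariness and in the second half of the estimate. The paper applies finitary separability once, inside $T\qfun A$, approximating $TR^{-}(t_2)$ by a $\Sigma$-term $\sigma(f)$ up to $\epsilon$, and then bounds $\mu^\sigma(g)(t_2)\le\epsilon$ by a chain of $L\elem_B$-inequalities. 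You instead isolate the case $t_2\in T_\omega B$, where everything is exact: $TR^{-}(t_2)$ is literally $\sigma_{\qfun A}(R^{b_0})$ by naturality, and $\mu^\sigma_B(g)(t_2)=0$ drops out in one stroke from Lemma~\ref{lem:operator-preserves-dist-0} applied to $\elem_B$ together with $g(b_0)=0$ --- arguably cleaner than the paper's $\epsilon$-bookkeeping. You then transfer to general $t_2$ by approximating in $TB$ rather than in $T\qfun A$. The price of that transfer is a reliance on $K_\Lambda$ satisfying (L2) (Theorem~\ref{thm:kantorovich-is-lax}, available since monotonicity of the Moss liftings is already proved) and on the converse inequality $K_\Lambda\le L$ (Lemmas~\ref{lem:adequacy} and~\ref{lem:moss-pres-nonexp}, also already established), so your argument has heavier dependencies on the surrounding development than the paper's self-contained chain of inequalities, but there is no circularity; and, as in the paper, you correctly use both directions of the density estimate from Definition~\ref{defn:density}, one for $L$ and one (via $K_\Lambda\le L$) for $K_\Lambda$.
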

\begin{proof}
  Let $R\colon\frel{A}{B}$ and $t_1\in TA$, $t_2\in TB$. Let $\epsilon
  > 0$. Put $s\colon B\to\qfun A, s(b)(a) = R(a,b)$. Because the set
  of $\Sigma$-terms over $\qfun A$ generates $T_\omega\qfun A$ and $L$
  is finitarily separable, there exists some $\sigma\in\Sigma$ and
  some $f\in\qfun A$ such that we have $L\Delta_{\qfun A}(\sigma_{\qfun A}(f), Ts(t_2))
  \le \epsilon$ and $L\Delta_{\qfun A}(Ts(t_2), \sigma_{\qfun A}(f)) \le
  \epsilon$. Put $g = R[f]$. Then it suffices to show that
  \begin{equation} \label{eqn:moss-sep}
    \mu^\sigma_A(f)(t_1) - \mu^\sigma_B(g)(t_2) + 2\epsilon \ge LR(t_1,t_2).
  \end{equation}

  \noindent First, by construction and naturality
  (Lemma~\ref{lem:transform}),
  \begin{equation*}
    L\elem_A(t_1,Ts(t_2))
      = L(\elem_A\circ(\id_A\times s))(t_1,t_2) = LR(t_1,t_2),
  \end{equation*}
  where in the second step we used that $(\elem_A\circ(\id_A\times
  s))(a,b) = s(b)(a) = R(a,b)$ for all $a\in A,b\in B$. By the axioms of lax
  extensions we also have $L\elem_A \le L\elem_A;L\Delta_{\qfun A}$ and
  therefore
  \begin{equation*}
    LR(t_1,t_2) = L\elem_A(t_1,Ts(t_2))
      \le L\elem_A(t_1,\sigma_{\qfun A}(f)) \oplus
        L\Delta_{\qfun A}(\sigma_{\qfun A}(f),Ts(t_2))
      \le \mu^\sigma_A(f)(t_1) + \epsilon.
  \end{equation*}

  \noindent Second, by the axioms of lax extensions and using naturality again,
  \begin{equation}\label{eqn:separating-proof-1}
    \begin{split}
      L\elem_B(t_2,T(R[-]\circ s)(t_2))
      &= L(\elem_B\circ (\id_B\times(R[-]\circ s)))(t_2,t_2)\\
      &\le L\Delta_B(t_2,t_2)= \Delta_B(t_2,t_2)= 0,
    \end{split}
  \end{equation}
  where in the inequality we used that for all $b_1,b_2\in B$,
  \begin{align*} \textstyle \hspace{-3mm}
  (\elem_B\circ(\id_B\times(R[-]\circ s)))(b_1,b_2) = R[s(b_2)](b_1)
  = \sup_{a\in A} R(a,b_2) \ominus R(a,b_1) \le \Delta_B(b_1,b_2).
  \end{align*}
  As before, we have $L\elem_B \le L\elem_B;L\Delta_{\qfun B}$.
  We also have $\sigma_{\qfun B}(g) = TR[-](\sigma_{\qfun A}(f))$ by naturality of $\sigma$. 
  Therefore, by naturality of $L$:
  \begin{align*}
    &\mu^\sigma_B(g)(t_2) &&\\
    &=
    L\elem_B(t_2,\sigma_{\qfun B}(g)) &&\myby{definition}\\
    &\le L\elem_B(t_2,T(R[-]\circ s)(t_2)) \oplus
      L\Delta_{\qfun B}(T(R[-]\circ s)(t_2),TR[-](\sigma_{\qfun A}(f))) &&\myby{L2}\\
    &= L\Delta_{\qfun B}(T(R[-]\circ s)(t_2),TR[-](\sigma_{\qfun A}(f))) &&\myby{\ref{eqn:separating-proof-1}}\\
    &= L(\Delta_{\qfun B}\circ(R[-]\times R[-]))(Ts(t_2),\sigma_{\qfun A}(f)) &&\myby{naturality}\\
    &\le L\Delta_{\qfun A}(Ts(t_2),\sigma_{\qfun A}(f)) \le \epsilon. &&\myby{Lemma~\ref{lem:comp-graph}}
  \end{align*}
  Our target inequality \eqref{eqn:moss-sep} now follows by combining and
  rearranging the above inequalities.
\end{proof}

\paragraph{Nonexpansiveness}
We note that $\epsilon$-diagonals characterize the supremum
norm as follows:
\begin{lem}\label{lem:delta-eps-supnorm}
  Let $X$ be a set, let $f,g\colon X\to[0,1]$ and let $\epsilon > 0$.
  Then $\supnorm{f-g}\le\epsilon$ if and only if both $(f,g)$ and $(g,f)$ are
  $\Delta_{\epsilon,X}$-nonexpansive pairs.
\end{lem}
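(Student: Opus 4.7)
The statement is essentially a definitional unfolding, so the plan is to translate $\Delta_{\epsilon,X}$-nonexpansivity of a pair into a pointwise inequality and then observe that two such inequalities together give the supremum bound.

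First I would unfold the definition of $R$-nonexpansivity (Definition~\ref{defn:rel-nonexp}) for the specific relation $R = \Delta_{\epsilon,X}$. The pair $(f,g)$ is $\Delta_{\epsilon,X}$-nonexpansive iff $f(x) - g(y) \le \Delta_{\epsilon,X}(x,y)$ for all $x,y \in X$. By Definition~\ref{def:gr}, $\Delta_{\epsilon,X}(x,y) = \epsilon$ when $x = y$ and $1$ otherwise, so the condition for $x \neq y$ is automatic (it reads $f(x) - g(y) \le 1$, which always holds since $f,g$ take values in $[0,1]$). Hence $(f,g)$ is $\Delta_{\epsilon,X}$-nonexpansive iff $f(x) - g(x) \le \epsilon$ for all $x \in X$.

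Next, I would apply the same reasoning to $(g,f)$, yielding that $(g,f)$ is $\Delta_{\epsilon,X}$-nonexpansive iff $g(x) - f(x) \le \epsilon$ for all $x \in X$. Combining the two equivalences gives that both pairs are $\Delta_{\epsilon,X}$-nonexpansive iff $|f(x) - g(x)| \le \epsilon$ for all $x$, i.e.\ $\supnorm{f-g}\le\epsilon$, as claimed.

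There is no real obstacle here; the statement is a routine rephrasing that isolates the role of the $\epsilon$-diagonal as an encoding of the supremum distance within the world of fuzzy relations. The only point worth noting in writing is the asymmetric nature of the $R$-nonexpansive condition, which is why two separate pair conditions (one in each direction) are needed to recover the symmetric absolute-value bound defining $\supnorm{f-g}$.
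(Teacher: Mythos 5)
Your proof is correct and is exactly the routine definition-unfolding argument intended here; the paper in fact states this lemma without proof, treating it as immediate. Your observation that the off-diagonal condition $f(x)-g(y)\le 1$ is vacuous because $f,g$ take values in $[0,1]$ is the one small point worth making explicit, and you make it.
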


\begin{lem}\label{lem:moss-is-nonexp}
  Let $\sigma\in\Sigma$. If $L$ is nonexpansive, then the Moss lifting $\mu^\sigma$ is nonexpansive.
\end{lem}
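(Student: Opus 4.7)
The plan is to combine Lemma~\ref{lem:moss-pres-nonexp} (preservation of nonexpansivity by $\mu^\sigma$), Axiom~(L4) (nonexpansivity of $L$), and Lemma~\ref{lem:delta-eps-supnorm} (characterization of supremum distance via $\epsilon$-diagonals). The idea is that supremum distance between two fuzzy predicates is witnessed precisely by them forming a nonexpansive pair in both directions against an $\epsilon$-diagonal, and both nonexpansivity statements then transport through $L$ and $\mu^\sigma$ in turn.

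Concretely, let $X$ be a set, fix $f, g \in \qfun X$, and set $\epsilon = \supnorm{f - g}$ (the case $\epsilon = 0$ being handled by a limit argument, or by noting the claim is trivial up to appeal to monotonicity of $\mu^\sigma$). By Lemma~\ref{lem:delta-eps-supnorm}, both $(f, g)$ and $(g, f)$ are $\Delta_{\epsilon, X}$-nonexpansive pairs. Applying Lemma~\ref{lem:moss-pres-nonexp} with $R = \Delta_{\epsilon, X}$ and $A = B = X$, we conclude that the pairs $(\mu^\sigma_X(f), \mu^\sigma_X(g))$ and $(\mu^\sigma_X(g), \mu^\sigma_X(f))$ are both $L \Delta_{\epsilon, X}$-nonexpansive.

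Now I invoke the hypothesis that $L$ is nonexpansive: Axiom~(L4) gives $L \Delta_{\epsilon, X} \le \Delta_{\epsilon, TX}$. Since $R$-nonexpansivity is monotone in $R$ (an immediate consequence of Definition~\ref{defn:rel-nonexp}), both pairs are then $\Delta_{\epsilon, TX}$-nonexpansive. Applying Lemma~\ref{lem:delta-eps-supnorm} in the reverse direction yields $\supnorm{\mu^\sigma_X(f) - \mu^\sigma_X(g)} \le \epsilon = \supnorm{f - g}$, which is exactly the nonexpansivity condition on $\mu^\sigma$.

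There is no real obstacle here, since all the substantive work has already been done: Lemma~\ref{lem:moss-pres-nonexp} carries nonexpansivity from $\qfun X$ over to $TX$ using $L$, and (L4) converts the $L$-image of the $\epsilon$-diagonal back into an $\epsilon$-diagonal upstairs. For the general case of an $n$-ary operation $\sigma$, exactly the same argument applies componentwise: one chooses $\epsilon = \max_i \supnorm{f_i - g_i}$, observes that each pair $(f_i, g_i)$ (and its reverse) is $\Delta_{\epsilon, X}$-nonexpansive, and applies the $n$-ary version of Lemma~\ref{lem:moss-pres-nonexp}. Hence Lemma~\ref{lem:moss-is-nonexp} is essentially a packaging of the preceding three lemmas.
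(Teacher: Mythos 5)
Your proof is correct and follows exactly the same route as the paper's: characterize the supremum distance via $\Delta_{\epsilon,X}$-nonexpansive pairs (Lemma~\ref{lem:delta-eps-supnorm}), push through Lemma~\ref{lem:moss-pres-nonexp}, apply (L4), and read off the conclusion with Lemma~\ref{lem:delta-eps-supnorm} again. Your extra care about the edge case $\epsilon=0$ is a reasonable refinement but not a substantive difference.
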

\begin{proof}
  Let $f,g\in\qfun X$ with $\supnorm{f-g}\le\epsilon$.
  We need to show that $\supnorm{\mu^\sigma_X(f)-\mu^\sigma_X(g)}\le\epsilon$.
  By Lemma~\ref{lem:delta-eps-supnorm}, we know that the pairs
  $(f,g)$ and $(g,f)$ are $\Delta_{\epsilon,X}$-nonexpansive.
  Therefore, because the Moss liftings preserve nonexpansiveness
  (Lemma~\ref{lem:moss-pres-nonexp}), the pairs $(\mu^\sigma_X(f),\mu^\sigma_X(g))$
  and $(\mu^\sigma_X(g),\mu^\sigma_X(f))$ are $L\Delta_{\epsilon,X}$-nonexpansive,
  and thus they are also $\Delta_{\epsilon,TX}$-nonexpansive by (L4).
  The claim now follows by another application of Lemma~\ref{lem:delta-eps-supnorm}.
\end{proof}

\section{Real-valued Coalgebraic Modal Logic}\label{sec:cml}
\noindent We next recall the generic framework of \emph{real-valued
  coalgebraic modal logic}, which lifts two-valued coalgebraic modal
logic (Section~\ref{sec:prelim}) to the quantitative setting, and will
yield characteristic quantitative modal logics for all nonexpansive
lax extensions. The framework goes back to work on fuzzy description
logics~\cite{SchroderPattinson11}. The present version, characterized
by a specific choice of propositional operators, appears in work on
the coalgebraic quantitative Hennessy-Milner
theorem~\cite{km:metric-bisimulation-games}, and generalizes
quantitative probabilistic modal
logic~\cite{bw:behavioural-pseudometric}.

Given a set $\Lambda$ of nonexpansive (fuzzy) predicate liftings, the set
$\modf{\Lambda}$ of modal ($\Lambda$)-formulae is given by
\begin{equation}\label{eq:grammar}
  \phi,\psi ::= c \mid \phi\ominus c \mid \phi\oplus c \mid \phi\land\psi
  \mid \phi\lor\psi \mid \lambda(\phi_1,\dots,\phi_n)
\end{equation}
where $c\in\Rat$ and $\lambda\in\Lambda$ has arity $n$. The semantics
assigns to each formula~$\phi$ and each coalgebra $(A,\alpha)$ a
real-valued map
\begin{math}
  \Sem{\phi}_{A,\alpha}\colon A\to[0,1],
\end{math}
or just $\Sem{\phi}$, defined by
\begin{align*}
  \Sem{c}(a) &= c & \Sem{\phi\land\psi}(a) &= \min(\Sem{\phi}(a),\Sem{\psi}(a)) \\
  \Sem{\phi\ominus c}(a) &= \max(\Sem{\phi}(a)-c,0)
  & \Sem{\psi\lor\psi}(a) &= \max(\Sem{\phi}(a),\Sem{\psi}(a)) \\
  \Sem{\phi\oplus c}(a) &= \min(\Sem{\phi}(a)+c,1) & \Sem{\lambda(\phi_1,\dots,\phi_n)}(a) &=
  \lambda_A(\Sem{\phi_1},\dots,\Sem{\phi_n})(\alpha(a))
\end{align*}
\begin{rem}
  We thus adopt what is often called \emph{Zadeh semantics} for the
  propositional operators. This choice is pervasive in characteristic
  logics for behavioural distances
  (including~\cite{bw:behavioural-pseudometric,km:metric-bisimulation-games,WildEA18})
  -- in particular, the more general \emph{\L{}ukasiewiecz semantics}
  fails to be nonexpansive w.r.t.\ behavioural distance, and indeed
  induces a discrete logical distance~\cite{WildEA18}.
  
  In the same vein, we require the modalities $\lambda\in\Lambda$ to
  be nonexpansive to avoid situations where non-zero logical distances
  (Definition~\ref{def:logic-dist}) can be arbitrarily blown up by
  repeated application of modalities, such as in the case of the doubling
  modality $\lambda_X(f)(x) = 2f(x)$ of the identity functor. 

  In the two-valued setting, one can sometimes restrict the
  propositional base in characteristic logics; notably, two-valued
  probabilistic modal logic characterizes (event) bisimilarity of
  probabilistic transition systems even with conjunction as the only
  propositional connective~\cite{DesharnaisEA98}. No similar results
  appear to be known in the quantitative case; e.g.\ van Breugel and
  Worrell's characteristic logic for behavioural distance of
  probabilistic transition systems~\cite{bw:behavioural-pseudometric}
  does feature essentially the same propositional operators as our
  grammar~\eqref{eq:grammar}, if negation is defined as in
  Remark~\ref{rem:negation} below.

  Following~\cite{bw:behavioural-pseudometric}, we restrict truth
  constants in formulae to rational numbers, thus ensuring that the
  set of formulae is countable provided $\Lambda$ is countable. This
  countability is not needed for any of our results, and they will
  still hold if the truth constants come from any dense subset of
  $[0,1]$ (including $[0,1]$ itself).
\end{rem}

\begin{rem}\label{rem:negation}
  The logic as defined above does not include negation. This is to be
  expected, as already in the classical case the characteristic logic
  for similarity is negation-free modal logic with $\Diamond$ as the
  only modality~\cite{Glabbeek90}. However, if the set $\Lambda$ of
  predicate liftings is closed under duals (and the corresponding
  Kantorovich lifting therefore preserves converse), then negation
  $\neg\phi$ can be defined recursively using De Morgan's laws for the
  propositional operators and duals for the modalities:
  \begin{align*}
    \neg c &= 1-c &
    \neg(\phi\land\psi) &= \neg\phi\lor\neg\psi \\
    \neg (\phi\ominus c) &= \neg\phi\oplus c &
    \neg(\phi\lor\psi) &= \neg\phi\land\neg\psi \\
    \neg (\phi\oplus c) &= \neg\phi\ominus c &
    \neg\lambda(\phi_1,\dots,\phi_n) &= \bar\lambda(\neg\phi_1,\dots,\neg\phi_n)
  \end{align*}
  With negation defined like this, this version of real-valued coalgebraic modal
  logic is equivalent to the one in~\cite{ws:fuzzy-lax}, which includes negation
  as a primitive. The latter logic does not explicitly include addition, but in
  the presence of subtraction and negation we can define it as $\phi\oplus c =
  \neg(\neg\phi\ominus c)$.
\end{rem}

\begin{expl}\label{expl:logics}\hfill
  \begin{enumerate}
  \item \emph{Fuzzy modal logic}\label{item:fuzzy-as-coalg} may be
    seen as a basic fuzzy description
    logic~\cite{LukasiewiczStraccia08}. Eliding propositional atoms
    for brevity (they may be added as nullary modalities), we take
    $\Lambda=\{\Diamond,\Box\}$. Models are fuzzy relational structures,
    i.e.\ coalgebras for the \emph{covariant} fuzzy powerset
    functor~$\ffun$ given by $\ffun X = [0,1]^X$ and
    $\ffun f(g)(y) = \textstyle\sup_{f(x) = y}g(x)$, and~$\Diamond$ and $\Box$
    are interpreted as the predicate liftings
    \begin{equation*}
      \Diamond_A(f)(g) = \sup_{a\in A} \min(g(a),f(a))
      \qquad\text{and}\qquad
      \Box_A(f)(g) = \inf_{a\in A} \max(1-g(a),f(a)).
    \end{equation*}
    We note that $\Diamond$ and $\Box$ are dual, so that negation can
    be defined as in Remark~\ref{rem:negation}. Hennessy-Milner-type
    results necessarily apply only to finitely branching models, i.e.\
    coalgebras for~$\ffun_\omega$.
  \item \emph{Probabilistic modal logic:}\label{expl:prob-as-coalg}
    Take models to be probabilistic transition systems with possible
    deadlocks, i.e.\ coalgebras for the functor $1 + \dfun$, where
    $\dfun A$ is the set of discrete probability distributions on $A$
    (Section~\ref{sec:prelim}); and $\Lambda=\{\Diamond\}$, with
    \begin{equation*}
    \Diamond_A(f)(\ast) = 0 \quad\text{for $\ast\in1$, \qquad and}\qquad
    \Diamond_A(f)(\mu) = \expect{\mu}(f)
    = \textstyle\sum_{a\in A}\mu(a)\cdot f(a).
  \end{equation*}
  Probabilistic modal logic can be extended with negation by adding
  the dual $\Box$ of $\Diamond$. As taking expected values is
  self-dual, $\Box$ only differs from $\Diamond$ on deadlocks, where
  $\Box_A(f)(\ast) = 1$.
  When additionally extended with propositional atoms, this induces
  (up to restricting to discrete probabilities) van Breugel et al.'s
  contraction-free quantitative probabilistic modal
  logic~\cite{BreugelEA07}.
\end{enumerate}
\end{expl}

\noindent In the two-valued setting, modal logic is typically
invariant under bisimulation, i.e.\ bisimilar states satisfy the same
modal formulae. By contrast, under the asymmetric notion of
similarity, the corresponding statement is that the fragment of modal
logic that only uses the $\Diamond$ and no negations is
\emph{preserved under simulation}, i.e.\ if some state is simulated by
another state, then all formulae of this shape that are satisfied by
the first state are also satisfied by the second state.

In the quantitative setting, both of these statements correspond to
nonexpansiveness of formula evaluation w.r.t. the behavioural distance
arising from a Kantorovich lifting, where the distinction between the
two scenarios is embedded in the choice of modalities. We may also
phrase this more compactly by saying that logical distance is below
behavioural distance:
\begin{defn}\label{def:logic-dist}
  The \emph{$\Lambda$-logical distance} between states $a\in A$,
  $b\in B$ in $T$-coalgebras $(A,\alpha)$, $(B,\beta)$ is
  \begin{equation*}
    d^\Lambda(a,b) = \sup \{ \Sem{\phi}(a)\ominus\Sem{\phi}(b) \mid \phi\in\modf{\Lambda} \}.
  \end{equation*}
\end{defn}

\begin{lem}\label{lem:modal-logic-is-nonexp}
  Let~$\phi$ be a modal $\Lambda$-formula, and let $a\in A$, $b\in B$
  be states in $T$-coalgebras $(A,\alpha)$, $(B,\beta)$. Then
  \begin{equation*}
    \Sem{\phi}_{A,\alpha}(a)\ominus\Sem{\phi}_{B,\beta}(b) \le d^{K_\Lambda}_{\alpha,\beta}(a,b).
  \end{equation*}
\end{lem}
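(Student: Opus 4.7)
The plan is to proceed by structural induction on the formula $\phi$, aiming to prove pointwise the stronger claim that for every $\phi\in\modf{\Lambda}$, the pair $(\Sem{\phi}_{A,\alpha},\Sem{\phi}_{B,\beta})$ is $d^{K_\Lambda}_{\alpha,\beta}$-nonexpansive. This reformulation is exactly the conclusion of the lemma once $a,b$ are fixed, and it fits naturally with the architecture of Definition~\ref{defn:kantorovich}, where $K_\Lambda$ is built out of $R$-nonexpansive pairs.

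The base case $\phi=c$ is immediate ($c\ominus c=0$). For the propositional cases $\phi\ominus c$, $\phi\oplus c$, $\phi\wedge\psi$, $\phi\vee\psi$, I would simply check that the operations $x\mapsto x\ominus c$, $x\mapsto x\oplus c$, $\min$, and $\max$ on $[0,1]$ are nonexpansive with respect to $\ominus$ (e.g.\ $\min(x_1,y_1)\ominus\min(x_2,y_2)\le\max(x_1\ominus x_2,y_1\ominus y_2)$), so that the inductive bound $\Sem{\phi_i}(a)\ominus\Sem{\phi_i}(b)\le d^{K_\Lambda}_{\alpha,\beta}(a,b)$ is preserved. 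These are routine real-analytic estimates and contain no real coalgebraic content.

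The modal case $\phi=\lambda(\phi_1,\dots,\phi_n)$ carries the substance. By the induction hypothesis, each pair $(\Sem{\phi_i}_{A,\alpha},\Sem{\phi_i}_{B,\beta})$ is $d^{K_\Lambda}_{\alpha,\beta}$-nonexpansive; hence these pairs are legitimate witnesses in the supremum defining $K_\Lambda d^{K_\Lambda}_{\alpha,\beta}$, yielding
\begin{equation*}
  \Sem{\phi}_{A,\alpha}(a)\ominus\Sem{\phi}_{B,\beta}(b)
   = \lambda_A(\Sem{\phi_1},\dots)(\alpha(a))\ominus\lambda_B(\Sem{\phi_1},\dots)(\beta(b))
   \le K_\Lambda d^{K_\Lambda}_{\alpha,\beta}(\alpha(a),\beta(b)).
\end{equation*}
To close the argument, I invoke the fact (noted in the remark following Definition~\ref{defn:l-dist}) that $d^{K_\Lambda}_{\alpha,\beta}$ is itself a $K_\Lambda$-simulation, so $K_\Lambda d^{K_\Lambda}_{\alpha,\beta}\circ(\alpha\times\beta)\le d^{K_\Lambda}_{\alpha,\beta}$, which gives the desired bound.

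I do not anticipate a genuine obstacle: no use of nonexpansivity of the $\lambda\in\Lambda$ or of converse-preservation is required here (both become relevant for the converse expressivity direction, not preservation). The only thing to watch is the bookkeeping that turns the pointwise inductive hypothesis into the global $R$-nonexpansivity of the pair of semantic maps, which is immediate since the inductive statement quantifies over all $a,b$ uniformly.
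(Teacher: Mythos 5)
Your proof is correct and follows essentially the same route as the paper's: structural induction on $\phi$, trivial propositional cases by nonexpansiveness of the Zadeh operators, and in the modal case using the induction hypothesis to exhibit the semantic pairs as witnesses in the supremum defining $K_\Lambda d^{K_\Lambda}_{\alpha,\beta}$, then closing with the (pre)fixpoint property of $d^{K_\Lambda}_{\alpha,\beta}$. Your side remarks (that neither nonexpansivity of the $\lambda$ nor converse-preservation is needed here) are also accurate.
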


\begin{proof}
Induction on~$\phi$, with trivial Boolean cases (in Zadeh semantics,
all propositional operators on~$[0,1]$ are nonexpansive). For the
modal case, we have (for readability, restricting to unary~$\lambda\in\Lambda$)
\begin{align*}
  \Sem{\lambda(\phi)}(a)\ominus\Sem{\lambda(\phi)}(b)
  & = \lambda_A(\Sem{\phi})(\alpha(a))\ominus\lambda_B(\Sem{\phi}(\beta(b)) &&\myby{definition of $\Sem{\lambda(\phi)}$}\\
  & \le K_{\Lambda}d^{K_\Lambda}_{\alpha,\beta}(\alpha(a),\beta(b)) &&\myby{definition of $K_\Lambda$, IH}\\
  & = d^{K_\Lambda}_{\alpha,\beta}(a,b) && \myby{definition of $d^{K_\Lambda}_{\alpha,\beta}$}\hfill\qedhere
\end{align*}
\end{proof}

\begin{lem}[Nonexpansiveness of quantitative modal logic]\label{lem:nonexp-logic}\hfill \\
  If~$\Lambda$ preserves nonexpansiveness w.r.t.\ a lax
  extension~$L$, then $d^\Lambda\le d^L$.
\end{lem}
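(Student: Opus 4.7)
The plan is to reduce the statement to two already-established facts: Lemma~\ref{lem:adequacy}, which characterises preservation of nonexpansivity as a pointwise inequality between $K_\Lambda$ and $L$, and Lemma~\ref{lem:modal-logic-is-nonexp}, which bounds the per-formula difference by Kantorovich behavioural distance. The logical distance $d^\Lambda$ will be sandwiched between these two, so the proof is essentially bookkeeping rather than a technical argument.

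Concretely, I first invoke Lemma~\ref{lem:adequacy} to translate the assumption: preservation of nonexpansivity gives $K_\Lambda R \le LR$ for every fuzzy relation $R$. I then transport this pointwise inequality to behavioural distances. Fix $T$-coalgebras $(A,\alpha)$ and $(B,\beta)$. The remark after Definition~\ref{defn:l-dist} observes that $d^L_{\alpha,\beta}$ is itself an $L$-simulation, i.e.\ $L d^L_{\alpha,\beta} \circ (\alpha\times\beta) \le d^L_{\alpha,\beta}$. Combining this with $K_\Lambda \le L$ yields
\begin{equation*}
  K_\Lambda d^L_{\alpha,\beta} \circ (\alpha\times\beta)
  \le L d^L_{\alpha,\beta} \circ (\alpha\times\beta)
  \le d^L_{\alpha,\beta},
\end{equation*}
so $d^L_{\alpha,\beta}$ is a $K_\Lambda$-simulation. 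Since $d^{K_\Lambda}_{\alpha,\beta}$ is the least such simulation, this gives $d^{K_\Lambda}_{\alpha,\beta} \le d^L_{\alpha,\beta}$.

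Finally I apply Lemma~\ref{lem:modal-logic-is-nonexp}, which says $\Sem{\phi}(a) \ominus \Sem{\phi}(b) \le d^{K_\Lambda}_{\alpha,\beta}(a,b)$ for every $\phi \in \modf{\Lambda}$ and all states $a\in A$, $b\in B$. Taking the supremum over formulas on the left yields $d^\Lambda(a,b) \le d^{K_\Lambda}_{\alpha,\beta}(a,b)$, and chaining with the previous inequality gives $d^\Lambda(a,b) \le d^L_{\alpha,\beta}(a,b)$ as required. There is no real obstacle; the only point to be attentive to is the direction of the inequality when passing from a set-theoretic inclusion of simulations to an inequality of their infima, which is why it is important that $d^L_{\alpha,\beta}$ itself be shown to lie in the set of $K_\Lambda$-simulations rather than the other way around.
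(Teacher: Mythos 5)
Your proof is correct and follows the same route as the paper, which simply declares the lemma ``immediate from Lemma~\ref{lem:modal-logic-is-nonexp}''. Your write-up in fact supplies the step the paper elides, namely that preservation of nonexpansivity yields $K_\Lambda\le L$ (Lemma~\ref{lem:adequacy}) and hence $d^{K_\Lambda}_{\alpha,\beta}\le d^L_{\alpha,\beta}$, by exhibiting $d^L_{\alpha,\beta}$ as a $K_\Lambda$-simulation.
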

\begin{proof}
  Immediate from Lemma~\ref{lem:modal-logic-is-nonexp} and Lemma~\ref{lem:adequacy}.
\end{proof}

\noindent Finally, we show how the characterization of lax extensions as
Kantorovich extensions can be used to define characteristic logics for
nonexpansive lax extensions. Recall the sequence of approximants
(Definition~\ref{defn:approximant}) we used in Theorem~\ref{thm:fixpoint} to
approach the $L$-behavioural distance $d^L_{\alpha,\beta}$ of coalgebras
$\alpha\colon A\to TA$ and $\beta\colon B\to TB$ via fixpoint iteration:
\begin{equation*}\textstyle
  d_0 = 0, \quad\quad
  d_{n+1} = Ld_n \circ (\alpha\times\beta), \quad\quad
  d_\omega = \sup_{n<\omega} d_n.
\end{equation*}
If $L=K_\Lambda$, then each individual step in this iteration can be
related to the logical distance taken over some subset of
$\modf{\Lambda}$. More precisely, if we define the \emph{rank} of a
modal formula $\phi$ to be the maximal nesting depth of modalities,
then

\begin{lem}\label{lem:charact-finite-depth}
  For each $n<\omega$ and all $a\in A$, $b\in B$ we have:
  \begin{equation*}
    d_n(a,b) = \sup \{ \Sem{\phi}(a)\ominus\Sem{\phi}(b) \mid
      \phi\in\modf{\Lambda}, \text{ $\phi$ has rank at most $n$} \}.
  \end{equation*}
\end{lem}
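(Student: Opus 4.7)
The plan is to proceed by induction on $n$, writing $L_n(a,b)$ for the logical supremum $\sup\{\Sem{\phi}(a)\ominus\Sem{\phi}(b) \mid \phi\in\modf{\Lambda},\ \phi\text{ has rank at most }n\}$ and showing $L_n = d_n$. The base case $n = 0$ is immediate: rank-$0$ formulas contain no modalities, so under Zadeh semantics each denotes a state-independent constant, giving $L_0 = 0 = d_0$.

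For the inductive step, assuming $L_n = d_n$, the direction $L_{n+1} \le d_{n+1}$ is handled by structural induction on a formula $\phi$ of rank at most $n+1$. Constants are trivial, and the Zadeh propositional connectives admit direct nonexpansiveness arguments (e.g.\ $\min(x_1,x_2)\ominus\min(y_1,y_2)\le\max(x_1\ominus y_1, x_2\ominus y_2)$, and analogously for $\max$, $\oplus c$, $\ominus c$). For the modal case $\phi = \lambda(\phi_1,\dots,\phi_k)$ with each $\phi_i$ of rank at most $n$, the outer induction hypothesis ensures that each pair $(\Sem{\phi_i},\Sem{\phi_i})$ is $d_n$-nonexpansive, so the assumption $L = K_\Lambda$ combined with the definition of the Kantorovich lifting yields
\begin{equation*}
  \Sem{\phi}(a)\ominus\Sem{\phi}(b) \le K_\Lambda d_n(\alpha(a),\beta(b)) = d_{n+1}(a,b).
\end{equation*}

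The converse inequality $d_{n+1} \le L_{n+1}$ is the hard direction and constitutes the main obstacle. Unfolding recasts $d_{n+1}(a,b)$ as a supremum over $\lambda\in\Lambda$ and arbitrary $d_n$-nonexpansive pairs $(f_i,g_i)$, whereas $L_{n+1}$ is the corresponding supremum taken only over the semantic pairs $(\Sem{\phi_i},\Sem{\phi_i})$ with $\phi_i$ of rank at most $n$. The approach is first to reduce, by monotonicity of each $\lambda$, to pairs of the form $(f_i, d_n[f_i])$ via the nonexpansive companion of Definition~\ref{defn:companion}, and then to approximate each $f_i$ in sup norm by $\Sem{\phi_i}$ for some rank-$\le n$ formula $\phi_i$; since every $\lambda\in\Lambda$ is nonexpansive, any sup-norm approximation of $f_i$ translates into an $\epsilon$-bound on the corresponding modal values. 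Securing this approximation is the real crux: the plan is to exploit that the rank-$\le n$ formulas form, semantically, a lattice containing all rational constants and closed under the truncated shifts $\oplus c$ and $\ominus c$, so that, combined with the inductive hypothesis $L_n = d_n$ and the hemimetric structure of $d_n$, a Stone--Weierstrass-flavoured density argument places the rank-$\le n$ formula semantics densely among the $d_n$-nonexpansive functions; the need to fit both sides of the pair $(f_i, d_n[f_i])$ simultaneously is where the use of the companion $d_n[f_i]$ becomes essential.
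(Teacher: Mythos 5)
The paper does not prove this lemma itself but defers to~\cite[Theorem 6.1]{ws:qc-benthem}, so I am judging your argument on its own merits. Your base case, your easy direction $L_{n+1}\le d_{n+1}$ (which is Lemma~\ref{lem:modal-logic-is-nonexp} with rank bookkeeping), and the reduction of the hard direction to companion pairs $(f_i,d_n[f_i])$ via monotonicity are all sound.

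The gap is in the crux of the hard direction. You propose to approximate each $f_i$ \emph{in sup norm over all of $A$} by some $\Sem{\phi_i}$ with $\phi_i$ of rank at most $n$, justified by a Stone--Weierstrass-style density claim: that the rank-$\le n$ formula semantics are dense among the $d_n$-nonexpansive functions. This claim is false in general. The formulas of rank at most $n$ form a set of terms over a countable signature (finite lattice combinations, rational constants, rational truncated shifts of finitely many modal subformulas), so their semantics is a countable, hence separable, subset of $(\qfun A,\supnorm{\cdot})$; but if $A$ is infinite and $d_n$ is, say, close to discrete, the $d_n$-nonexpansive functions form a non-separable subspace of $[0,1]^A$ under the sup norm, so no countable family can be dense in it. Lattice-valued Stone--Weierstrass (Kakutani--Krein) theorems require compactness, which you do not have. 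The missing idea is a \emph{localization to finitely many states} before approximating: for finitary $T$, naturality of the predicate liftings gives $\lambda_A(f)(\alpha(a))=\lambda_{A_0}(f\circ i)(s)$ for a finite $A_0\subseteq A$ with $\alpha(a)=Ti(s)$, so only $f|_{A_0}$ matters; in the finitarily separable case one first approximates $\alpha(a)$, $\beta(b)$ by elements of $T_\omega$ as in the proof of Theorem~\ref{thm:fixpoint}. On a \emph{finite} set of states the approximation does go through, and there your instinct to use the inductive hypothesis $L_n=d_n$ is exactly right: it supplies, for each pair of states, a rank-$\le n$ formula realizing their $d_n$-distance up to $\epsilon$, which after rational shifts and finite $\min$/$\max$ combinations yields the approximant. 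Two further points you would need to make explicit even after this repair: (i) the approximation must control $\Sem{\phi_i}$ simultaneously on (the relevant finite parts of) $A$ and $B$, which requires separating formulas not only across $A\times B$ but also within $A\times A$ and $B\times B$, so the induction hypothesis has to be carried for all these combinations (e.g.\ by working on the disjoint union coalgebra); and (ii) the final error propagation should use monotonicity for one side and nonexpansiveness of $\lambda$ for the $\epsilon$-loss, as you indicate.
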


\noindent A proof for the more general case of quantale-valued logics
and relations can be found in~\cite[Theorem 6.1]{ws:qc-benthem}. In
that paper, this characterization of finite-depth distances forms the
basis of a Hennessy-Milner theorem for the quantale-valued Kantorovich
lifting of finitary functors. In the present setting, we can drop the
condition that $T$ must be finitary by combining
Lemma~\ref{lem:charact-finite-depth} with Theorem~\ref{thm:fixpoint}
to obtain, complementing Lemma~\ref{lem:nonexp-logic}, a criterion
phrased directly in terms of conditions on the lax extension and the
modalities:
\begin{thm}[Coalgebraic quantitative Hennessy-Milner
  theorem]\label{thm:hm}
  Let~$L$ be a finitarily separable fuzzy lax extension, and let
  $\Lambda$ be a separating set of monotone nonexpansive predicate
  liftings for~$L$.  Then $d^\Lambda = d^L$.
\end{thm}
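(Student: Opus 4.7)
The plan is to prove $d^\Lambda = d^L$ by combining the finite-depth characterization of logical distance (Lemma~\ref{lem:charact-finite-depth}) with the fixpoint approximation of behavioural distance (Theorem~\ref{thm:fixpoint}), establishing the two inequalities separately.

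The direction $d^\Lambda \le d^L$ follows from Lemma~\ref{lem:nonexp-logic}: under the present hypotheses, $\Lambda$ preserves nonexpansivity with respect to $L$ (equivalently, by Lemma~\ref{lem:adequacy}, $K_\Lambda \le L$, which together with separation $K_\Lambda \ge L$ gives $L = K_\Lambda$), so the logical distance of any two states is bounded above by $L$-behavioural distance.

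For $d^\Lambda \ge d^L$, I would first verify that $L$ itself is nonexpansive, so that Theorem~\ref{thm:fixpoint} applies to $L$. By Theorem~\ref{thm:kantorovich-is-lax}, $K_\Lambda$ is a nonexpansive lax extension (using nonexpansivity of the modalities in $\Lambda$), and separation $K_\Lambda \ge L$ gives $L\Delta_{\epsilon,X} \le K_\Lambda\Delta_{\epsilon,X} \le \Delta_{\epsilon,TX}$, i.e.\ axiom~(L4) for $L$. Theorem~\ref{thm:fixpoint} then yields, for arbitrary coalgebras $\alpha\colon A\to TA$ and $\beta\colon B\to TB$, the identity $d^L_{\alpha,\beta} = \sup_{n<\omega} d_n$, where $d_0 = 0$ and $d_{n+1} = Ld_n\circ(\alpha\times\beta)$. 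Let $(d_n^{K_\Lambda})_{n<\omega}$ denote the analogous iterates built from $K_\Lambda$ in place of $L$. A routine induction on $n$ -- with base case $d_0^{K_\Lambda} = 0 = d_0$, and using monotonicity~(L1) of $K_\Lambda$ in the step case together with separation $K_\Lambda R \ge LR$ -- shows $d_n^{K_\Lambda} \ge d_n$. Taking suprema yields $\sup_n d_n^{K_\Lambda} \ge \sup_n d_n = d^L_{\alpha,\beta}$, and Lemma~\ref{lem:charact-finite-depth} identifies the left-hand side with $d^\Lambda$, giving $d^\Lambda \ge d^L_{\alpha,\beta}$.

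The key leverage is the reduction of $d^L$ to its $\omega$-stage fixpoint iterates via Theorem~\ref{thm:fixpoint}, which is where finitary separability of $L$ and nonexpansivity are both used; once this is in hand, separation delivers the pointwise comparison $d_n^{K_\Lambda} \ge d_n$ by an elementary induction, and the finite-depth characterization does the rest. No step should present a serious obstacle, as the heavy lifting -- finitary approximation of the fixpoint and the finite-depth identification of modal distances -- has been packaged into the preparatory results.
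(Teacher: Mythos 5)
Your proposal takes the same route as the paper's proof --- combining the finite-depth characterization of logical distance (Lemma~\ref{lem:charact-finite-depth}) with $\omega$-convergence of the fixpoint iteration (Theorem~\ref{thm:fixpoint}) --- but it spells out two steps the paper elides, and both are worth having. Theorem~\ref{thm:fixpoint} requires $L$ to be nonexpansive, which the present theorem does not assume; your derivation of~(L4) for~$L$ from $L\Delta_{\epsilon,X}\le K_\Lambda\Delta_{\epsilon,X}\le\Delta_{\epsilon,TX}$ (separation plus Theorem~\ref{thm:kantorovich-is-lax}) supplies it. Likewise, Lemma~\ref{lem:charact-finite-depth} concerns the iterates of $K_\Lambda$ while Theorem~\ref{thm:fixpoint} is applied to the iterates of $L$, and your inductive comparison $d_n^{K_\Lambda}\ge d_n$ is exactly what is needed to bridge the two for $d^\Lambda\ge d^L$; notably, that direction uses only the one-sided separation hypothesis $K_\Lambda\ge L$.

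The one step you assert rather than prove is that ``under the present hypotheses, $\Lambda$ preserves nonexpansivity with respect to $L$.'' That does not follow from the stated hypotheses: separation (Definition~\ref{def:separation}) is only the inequality $K_\Lambda\ge L$, and nonexpansivity of the individual liftings (Definition~\ref{defn:pred-lifting}) refers to the supremum norm, not to~$L$. Concretely, $\Lambda=\{\Box,\Diamond\}$ is a separating set of monotone nonexpansive liftings for the asymmetric Hausdorff lifting $H^\leftarrow$ on $\Pfin$, since $K_\Lambda=H\ge H^\leftarrow$, yet $d^\Lambda=d^H$ can exceed $d^{H^\leftarrow}$ (compare a deadlocked state with a state having one deadlocked successor). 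So the inequality $d^\Lambda\le d^L$ genuinely needs $K_\Lambda\le L$, i.e.\ preservation of nonexpansivity (Lemma~\ref{lem:adequacy}), as an additional hypothesis. The paper's own two-line proof makes the same tacit assumption (Lemma~\ref{lem:charact-finite-depth} is stated under $L=K_\Lambda$), and the theorem is only ever instantiated with the Moss liftings, for which preservation of nonexpansivity is Lemma~\ref{lem:moss-pres-nonexp}; your reading thus matches the intended one, but the preservation property should be recorded as an explicit assumption rather than claimed to follow from the ones given.
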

\begin{proof}
  By Lemma~\ref{lem:charact-finite-depth} we have $d_\omega =
  \sup_{n<\omega}d_n = d^\Lambda$ and by Theorem~\ref{thm:fixpoint} we
  have $d_\omega = d^L$.
\end{proof}
\begin{expl}\hfill
  \begin{enumerate}
    \item Since we only require~$L$ to be finitarily separable (rather
      than~$T$ finitary),
      Example~\ref{expl:kantorovich}.\ref{item:prob-kantorovich} implies
      that we recover
      expressiveness~\cite{BreugelEA07,bw:behavioural-pseudometric} of
      quantitative probabilistic modal logic over countably branching
      discrete probabilistic transition
      systems~(Example~\ref{expl:logics}.\ref{expl:prob-as-coalg}) as an
      instance of Theorem~\ref{thm:hm}.
    \item Let $L$ be the lax extension of $T = \Pow_\omega(M\times-)$
      from Example~\ref{expl:metric-ts}. As $T$ is finitary, it
      follows by Theorem~\ref{thm:kantorovich-lax} that $L=K_\Lambda$
      for the set $\Lambda$ of Moss liftings of $L$ and the logic
      $\modf{\Lambda}$ is characteristic for simulation distance
      by Theorem~\ref{thm:hm}.
  \end{enumerate}
\end{expl}

\noindent Applying Lemma~\ref{lem:nonexp-logic} and
Theorem~\ref{thm:hm} to~$L=K_\Lambda$ and using our result that all
lax extensions are Kantorovich extensions for their Moss liftings
(Theorem~\ref{thm:kantorovich-lax}), which moreover are monotone and
nonexpansive in case~$L$ is nonexpansive, we obtain expressive logics
for finitarily separable nonexpansive lax extensions:

\begin{cor}\label{cor:hennessy-milner}
  If $L$ is a finitarily separable nonexpansive lax extension of a
  functor~$T$, then $d^L = d^\Lambda$ for the set $\Lambda$ of Moss
  liftings.
\end{cor}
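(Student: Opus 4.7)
The plan is to obtain this corollary as a direct combination of the two main theorems of the paper. First, apply Theorem~\ref{thm:kantorovich-lax} to the finitarily separable nonexpansive lax extension~$L$: this yields that the set~$\Lambda$ of Moss liftings (Definition~\ref{defn:moss-lifting}) is separating for~$L$ and consists of monotone predicate liftings, and moreover that each $\mu^\sigma\in\Lambda$ is nonexpansive because~$L$ is nonexpansive. In the terminology of Theorem~\ref{thm:kantorovich-lax}, this gives $L=K_\Lambda$.

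With this in hand, I would apply Theorem~\ref{thm:hm} to~$L$ and~$\Lambda$: the hypotheses are exactly that~$L$ is finitarily separable and that~$\Lambda$ is a separating set of monotone nonexpansive predicate liftings for~$L$, both of which have just been verified. The conclusion~$d^\Lambda=d^L$ is precisely the statement of the corollary.

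There is essentially no obstacle here, as all the substantive work has already been done upstream: the nontrivial ingredients are the construction and analysis of Moss liftings that underlie Theorem~\ref{thm:kantorovich-lax} (monotonicity via Lemma~\ref{lem:moss-monotone}, preservation of nonexpansivity via Lemma~\ref{lem:moss-pres-nonexp}, separation via Lemma~\ref{lem:moss-separating}, nonexpansiveness via Lemma~\ref{lem:moss-is-nonexp}), and the $\omega$-step fixpoint convergence of Theorem~\ref{thm:fixpoint} together with the rank-indexed characterization of Lemma~\ref{lem:charact-finite-depth} that powers Theorem~\ref{thm:hm}. The corollary is then just a bookkeeping step aligning these results.
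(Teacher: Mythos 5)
Your proposal is correct and follows essentially the same route as the paper: the corollary is obtained by instantiating Theorem~\ref{thm:kantorovich-lax} to get that the Moss liftings form a separating set of monotone, nonexpansive predicate liftings with $L=K_\Lambda$, and then invoking the quantitative Hennessy--Milner theorem (Theorem~\ref{thm:hm}, together with Lemma~\ref{lem:nonexp-logic} for the inequality $d^\Lambda\le d^L$). The identification of the upstream ingredients (Lemmas~\ref{lem:moss-monotone}, \ref{lem:moss-pres-nonexp}, \ref{lem:moss-separating}, \ref{lem:moss-is-nonexp}, Theorem~\ref{thm:fixpoint}, and Lemma~\ref{lem:charact-finite-depth}) is accurate.
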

\noindent We can see the coalgebraic modal logic of Moss liftings as concrete
syntax for a more abstract logic where we incorporate functor elements
into the syntax directly, as in Moss' coalgebraic
logic~\cite{Moss99} and its generalization to lax extensions~\cite{MartiVenema15}. The set $\modf{L}$ of formulae in the arising
\emph{quantitative Moss logic} is generated by the same propositional
operators as above, and additionally by a modality $\Delta$ that
applies to $\Phi\in T\modf{0}$ for finite $\modf{0}\subseteq\modf{L}$,
with semantics
\begin{equation*}
  \Sem{\Delta\Phi}(a) = L\elem_A(\alpha(a),\Phi).
\end{equation*}
The dual of~$\Delta$ is denoted~$\nabla$, and behaves like a
quantitative analogue of Moss' two-valued~$\nabla$. From
Corollary~\ref{cor:hennessy-milner}, it is immediate that this logic
is expressive:
\begin{cor}[Expressiveness of quantitative Moss logic]\label{cor:moss-expr}
  Let~$L$ be a finitarily separable nonexpansive lax extension of a
  functor~$T$. Then $L$-behavioural distance $d^L$ coincides with
  logical distance in quantitative Moss logic, i.e.\ for all states
  $a\in A$, $b\in B$ in coalgebras $\alpha\colon A\to TA$, $\beta\colon B\to TB$,
  \begin{equation*}
    d^L_{\alpha,\beta}(a,b) = \sup \{ \Sem{\phi}(a)\ominus\Sem{\phi}(b) \mid \phi\in\modf{L} \}.
  \end{equation*}
\end{cor}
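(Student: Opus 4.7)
The plan is to reduce Corollary~\ref{cor:moss-expr} directly to Corollary~\ref{cor:hennessy-milner} by showing that the quantitative Moss logic $\modf{L}$ and the coalgebraic modal logic $\modf{\Lambda}$, where $\Lambda$ is the set of Moss liftings induced by a finitary presentation $\Sigma$ of $T_\omega$, induce the same logical distance on every pair of states. Since $\Lambda$ consists of monotone nonexpansive predicate liftings that are separating for~$L$ (by the proof of Theorem~\ref{thm:kantorovich-lax}, i.e.\ Lemmas~\ref{lem:moss-monotone}, \ref{lem:moss-pres-nonexp}, \ref{lem:moss-separating}, \ref{lem:moss-is-nonexp}), Corollary~\ref{cor:hennessy-milner} already gives $d^L = d^\Lambda$, so it suffices to exhibit a semantics-preserving translation between the two syntaxes.

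First, I would observe that in the clause $\Delta\Phi$ with $\Phi\in T\modf{0}$ for finite $\modf{0}\subseteq\modf{L}$, the finiteness of $\modf{0}$ forces $T\modf{0} = T_\omega\modf{0}$ (every element of $T$ applied to a finite set is trivially in the finitary part). Hence, by the finitary presentation, we may write $\Phi = \sigma_{\modf{0}}(\phi_1,\dots,\phi_n)$ for some $\sigma\in\Sigma$ of arity $n$ and $\phi_1,\dots,\phi_n\in\modf{0}$. Second, I would define mutually inverse translations between $\modf{L}$ and $\modf{\Lambda}$ by recursion on formula structure: the Boolean cases are identity (both logics share the same propositional grammar), and $\Delta(\sigma_{\modf{0}}(\phi_1,\dots,\phi_n))$ is sent to $\mu^\sigma(\phi_1,\dots,\phi_n)$ and conversely.

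Third, I would verify that this translation preserves semantics inductively. The Boolean cases are immediate. For the modal case, naturality of $\sigma$ gives, for any coalgebra $(A,\alpha)$ with semantic map $\Sem{-}\colon\modf{0}\to\qfun A$,
\begin{equation*}
T\Sem{-}(\sigma_{\modf{0}}(\phi_1,\dots,\phi_n)) = \sigma_{\qfun A}(\Sem{\phi_1},\dots,\Sem{\phi_n}),
\end{equation*}
so that by the stated semantics of $\Delta$ (read as $L\elem_A(\alpha(a), T\Sem{-}(\Phi))$) and Definition~\ref{defn:moss-lifting},
\begin{equation*}
\Sem{\Delta\Phi}(a) = L\elem_A(\alpha(a), \sigma_{\qfun A}(\Sem{\phi_1},\dots,\Sem{\phi_n})) = \mu^\sigma_A(\Sem{\phi_1},\dots,\Sem{\phi_n})(\alpha(a)) = \Sem{\mu^\sigma(\phi_1,\dots,\phi_n)}(a).
\end{equation*}
Consequently, the two logics induce the same supremum in Definition~\ref{def:logic-dist}, and the corollary follows from $d^L = d^\Lambda$.

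The only subtle point is the proper reading of $\Sem{\Delta\Phi}$: since $\Phi$ lives in $T\modf{0}$ rather than $T\qfun A$, the expression $L\elem_A(\alpha(a),\Phi)$ must be interpreted by first applying $T$ to the semantic map $\Sem{-}$, which requires the finite set~$\modf{0}$ to be fixed before taking semantics. Once this is pinned down, the naturality square for $\sigma$ does all the work in linking the abstract $\Delta$-syntax to the concrete Moss liftings, so this is the only genuine content beyond the already established Corollary~\ref{cor:hennessy-milner}.
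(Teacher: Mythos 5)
Your proposal is correct and follows the same route as the paper, which derives the corollary directly from Corollary~\ref{cor:hennessy-milner} by viewing the coalgebraic modal logic of Moss liftings as concrete syntax for quantitative Moss logic. You usefully make explicit the semantics-preserving translation (via the finitary presentation and naturality of $\sigma$, including the correct reading of $\Sem{\Delta\Phi}$ through $T\Sem{-}$) that the paper leaves implicit in the phrase ``it is immediate''.
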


\begin{expl}\hfill
  \begin{enumerate}
  \item We equip the finite fuzzy powerset functor~$\ffun_\omega$ with
    the Wasserstein lifting $W_\Diamond$ for~$\Diamond$ as in
    Example~\ref{expl:logics}.\ref{item:fuzzy-as-coalg}, in analogy to
    the Hausdorff lifting
    (Example~\ref{expl:wasserstein}.\ref{item:hausdorff-is-lax}). Then
    $\nabla$ applies to finite fuzzy sets~$\Phi$ of formulae, and
    \begin{equation*}\textstyle
      \Sem{\nabla\Phi}(a)=
      \sup_{t\in\cpl{\Phi}{\alpha(a)}}\inf_{(\phi,a')\in \modf{L}\times A}\max(1-t(\phi,a'),\phi(a'))
    \end{equation*}
    for a state~$a$ in an $\ffun$-coalgebra $(A,\alpha)$, i.e.\ in a
    finitely branching fuzzy relational~structure.
  \item Let $C_{\mathsf{fg}}$ be the subfunctor of the convex
    powerset functor~$\Conv$ given by the finitely generated convex
    sets of (not necessarily finite) discrete distributions, equipped
    with the Wasserstein lifting described in
    Example~\ref{expl:wasserstein}.\ref{item:convex-powerset}.
    Then~$\nabla$ applies to finite sets of finite distributions on
    formulae, understood as spanning a convex polytope.  By
    Corollary~\ref{cor:moss-expr}, the arising instance of
    quantitative Moss logic is expressive for all
    $C_{\mathsf{fg}}$-coalgebras.
  \end{enumerate}
\end{expl}
\section{Conclusions}

We study behavioural distances based on fuzzy lax extensions, with a
particular focus on \emph{nonexpansive lax extensions}, establishing
that the latter are closely related to distances based on coalgebraic
modal logic. Nonexpansiveness of a lax extension can equivalently be
expressed in terms of strength of the underlying
functor~\cite{Gavazzo18} or as lax preservation of
$\epsilon$-diagonals.  We examine two general constructions of
nonexpansive lax extensions, respectively generalizing the classical
Kantorovich and Wasserstein distances and strengthening previous
generalizations where only pseudometrics are
lifted~\cite{bbkk:coalgebraic-behavioral-metrics}. Our construction of
the Kantorovich lifting is based in particular on generalizing
nonexpansive functions on a single space to nonexpansive pairs of
functions on two different spaces (implicit in work on optimal
transportation~\cite{Villani08}), while the Wasserstein lifting mostly
coincides with an existing construction from work on topological
theories~\cite{Hofmann07}.

Our main result shows that every
nonexpansive lax extension is a Kantorovich lifting for a suitable
choice of modalities, the so-called Moss modalities.  Moreover, one
can extract from a given nonexpansive lax extension a characteristic
modal logic satisfying a quantitative Hennessy-Milner property. Using
our notion of \emph{finitarily separable} lax extension additionally
allows us to extend these constructions to certain non-finitary
functors such as the discrete distribution functor. All our results
apply both to symmetric behavioural distances, i.e.~notions of
quantitative bisimulation, and to asymmetric behavioural distances,
i.e.~notions of quantitative simulation.

\bibliographystyle{alphaurl}
\bibliography{coalgml}

\end{document}